\PassOptionsToPackage{dvipsnames}{xcolor}
\documentclass[a4paper, UKenglish, cleveref, autoref, thm-restate,
]{lipics-v2021}
\nolinenumbers

\pdfoutput=1

\newtoggle{techreport}
\toggletrue{techreport}%

\usepackage{xargs}
\usepackage[normalem]{ulem} 

\usepackage{xifthen}        
\newcommand{\ifempty}[3]{%
  \ifthenelse{\isempty{#1}}{#2}{#3}%
}

\DeclareGraphicsExtensions{%
  .png,.PNG,%
  .pdf,.PDF,%
  .jpg,.mps,.jpeg,.jbig2,.jb2,.JPG,.JPEG,.JBIG2,.JB2
}

\newif\ifemi

\newcommandx{\preprint}[3][1=preprint,2=Springer]{
  \ifempty{#1}{}{
	 \ \\[1em]\noindent
	 \textbf{Disclaimer}
    The published version of this paper is~\cite{#1} (\copyright\ #2).
  }
}

\usepackage[textsize=tiny,colorinlistoftodos]{todonotes}

\usepackage[most]{tcolorbox}
\newtcolorbox{markbox}{
  enhanced,
  breakable,
  size=minimal,
  parbox=false,
  after={\par},
  before upper={\indent},
  colback=white,
  overlay = {
	 \draw[line width=2pt]
	 (frame.north east) -| ([xshift=3mm]frame.east) |-(frame.south east);
  },
  overlay first={\draw[line width=2pt] (frame.north east) -| ([xshift=3mm]frame.south east);},
  overlay middle={\draw[line width=2pt] ([xshift=3mm]frame.north east) -- ([xshift=3mm]frame.south east);},
  overlay last={\draw[line width=2pt] ([xshift=3mm]frame.north east) |- (frame.south east);},
}

\newcommand{\eMcomm}[2][check]{%
  \ifthenelse{\equal{#1}{new}}{{\color{red}#2}}{%
	 \ifthenelse{\equal{#1}{changed}}{{\color{teal}{#2}}}{%
		\ifthenelse{\equal{#1}{rm}}{\todo[color=black!3]{\tiny eM: removed\\'{#2}'}}{%
		  \todo[color=orange!20]{\tiny eM: \color{NavyBlue}#1}%
		  {\color{OliveGreen}{#2}}%
		}%
	 }%
  }%
}

\usepackage[]{soul}
\sethlcolor{OliveGreen!40}
\providebool{showChanges}
\newcommand{\changed}[1]{\ifthenelse{\boolean{showChanges}}{\hl{#1}}{#1}}

\newcommand{\ecoop}[1]{{#1}}

\newcommand{\hidden}[1]{}
\newcommand{\hide}[1]{}
\usepackage{xspace}

\newcommand{\cf}[2]{
  \fontsize{#1}{#1}{\selectfont{#2}}
}

\makeatletter
\newcommand{\dolist}[2]{%
  \def\nextitem{\def\nextitem{#1}}%
  \@for \el:=#2\do{\nextitem\textbf{\el}}%
}

\newcommand{\domathlist}[2]{%
  \def\nextitem{\def\nextitem{\ensuremath{#1}}}%
  \@for \el:=#2\do{\nextitem\ensuremath{\el}}%
}

\def\mktest#1{
  \def\transform##1+##2+##3{##1 piu' ##2 * ##3\penalty0}%
  \do{\expandafter\transform#1}%
}
\def\mksubscript#1{
  \def\transform##1[##2]{##1_{##2}\penalty0}%
  \do{\expandafter\transform#1}%
}
\newcommand{\mapcmd}[3][{, }]{%
  \def\nextitem{\def\nextitem{#1}}%
  \@for \el:=#3\do{\nextitem{#2{\el}}}%
}
\makeatother

\ifemi
\usepackage{showlabels}

\newcommand{\emi}[2]{
  \marginpar{\fcolorbox{red}{shadecolor}{\cf{#1}{{#2}}}}
}
\newcommand{\emic}[2]{\par
  \fcolorbox{red}{shadecolor}{\parbox{\linewidth}{
      \color{gray}
      \begin{description}
      \item[{\color{blue} #2}]{\sf #1}
      \end{description}}}
}
\else
\newcommand{\emi}[2]{}
\newcommand{\emic}[2]{}{}
\fi

\newcommand{\HSLtag}{\scalebox{1.25}{%
  \begin{tikzpicture}
  \draw (0.1,0.2) -- (0.2,0.115) -- (0.3,0.2) ;
  \draw (0.1,0.2) -- (0.15,0.1) ;
  \draw (0.3,0.2) -- (0.25,0.1) ;
  \draw (0.12,0.1) -- (0.2,0.05) -- (0.28,0.1) ;
  \draw (0.2,0) circle (0.12) ;
  \draw (0.16,0.04) circle (0.01) ;
  \draw (0.24,0.04) circle (0.01) ;
  \draw (0.15,-0.07) -- (0.25,-0.07) ;
  \end{tikzpicture}
}}

\newcommand{\hsl}[1][]{\par
  {\color{red}\vbox{\medskip\noindent\hrulefill \\[5pt]
  \HSLtag \hspace{\stretch{1}}\ifempty{#1}{HIC SUNT
  LEONES}{{#1}\hspace{\stretch{1}}} \HSLtag \\ \smallskip\noindent\hrulefill \\}}\par
}

\def\colorFun{\color{orange}}
\newcommand{\noarg}{}
\newcommand{\mkfun}[4][\colorFun]{
  \newcommand{#2}[1][#4]{%
    {#1\ensuremath{\mathsf{#3}}}%
    \ifempty{##1}{\noarg}{%
      ({##1})}%
  }%
}

\mkfun{\head}{hd}{}
\mkfun{\tail}{tl}{}

\newcommand{\sst}{\;\big|\;}
\newcommand{\dom}[1]{\operatorname{dom} {#1}}

\newcommand{\conf}[1]{\ensuremath{\langle {#1} \rangle}}

\newcommand{\qqand}[1][and]{\qquad\text{#1}\qquad}

\newcommand{\upd}[3]{{#1}[{#2} \myupd {#3}]}

\newcommand{\bnfdef}{\ensuremath{\ ::=\ }}

{\bfseries}{\rmfamily}
{\bfseries}{\rmfamily}

\makeatletter
\newcommand*{\da@rightarrow}{\mathchar"0\hexnumber@\symAMSa 4B }
\newcommand*{\da@leftarrow}{\mathchar"0\hexnumber@\symAMSa 4C }
\newcommand*{\xdashrightarrow}[2][]{%
  \mathrel{%
    \mathpalette{\da@xarrow{#1}{#2}{}\da@rightarrow{\,}{}}{}%
  }%
}
\newcommand{\xdashleftarrow}[2][]{%
  \mathrel{%
    \mathpalette{\da@xarrow{#1}{#2}\da@leftarrow{}{}{\,}}{}%
  }%
}
\newcommand*{\da@xarrow}[7]{%
  \sbox0{$\ifx#7\scriptstyle\scriptscriptstyle\else\scriptstyle\fi#5#1#6\m@th$}%
  \sbox2{$\ifx#7\scriptstyle\scriptscriptstyle\else\scriptstyle\fi#5#2#6\m@th$}%
  \sbox4{$#7\dabar@\m@th$}%
  \dimen@=\wd0 %
  \ifdim\wd2 >\dimen@
    \dimen@=\wd2 %
  \fi
  \count@=2 %
  \def\da@bars{\dabar@\dabar@}%
  \@whiledim\count@\wd4<\dimen@\do{%
    \advance\count@\@ne
    \expandafter\def\expandafter\da@bars\expandafter{%
      \da@bars
      \dabar@
    }%
  }%
  \mathrel{#3}%
  \mathrel{%
    \mathop{\da@bars}\limits
    \ifx\\#1\\%
    \else
      _{\copy0}%
    \fi
    \ifx\\#2\\%
    \else
      ^{\copy2}%
    \fi
  }%
  \mathrel{#4}%
}
\makeatother

\newcommand{\quo}[1]{\lq\lq {#1}\rq\rq}
\def\finex{{\unskip\nobreak\hfil
	 \penalty50\hskip1em\null\nobreak\hfil$\diamond$
	 \parfillskip=0pt\finalhyphendemerits=0\endgraf}
}

\AtBeginEnvironment{example}{%
	\pushQED{\qed}%
}
\AtEndEnvironment{example}{\popQED\endexample}

\definecolor{shadecolor}{rgb}{1,0.99,0.9}
\definecolor{bg}{rgb}{0.95,0.95,0.95}

\newcommand{\abcattr}[1][a]{\textsf{#1}}
\newcommand{\abccond}[1][\rho]{#1}

\def\colorExp{\color{NavyBlue}}

\newcommand{\abcexp}[1][e]{\colorExp #1}
\newcommandx{\abctuple}[1][1 = t]{\llparenthesis{#1}\rrparenthesis}
\newcommandx{\abcget}[2][1=a,2={id},usedefault=@]{
  \ptp[{#1}]{\colorOp .}\abcattr[{#2}]
}
\newcommandx{\abcptp}[2][1=a,2=\abccond,usedefault=@]{\ifempty{#1}{}{\ptp[{#1}] \ifempty{#2}{}{{\colorOp \shortmid}}} {#2}}
\newcommandx{\abcint}[6][1=a,2=\abccond,3=e,4=e',5=b,6=\abccond',usedefault=@]{
  \abcptp[{#1}][{#2}]
  \ {\colorOp \xrightarrow{\scriptstyle \abcexp[#3]\quad\abcexp[#4]}}\
  \abcptp[{#5}][{#6}]
}
\newcommandx{\mkabcint}[8][3=a,4=\abccond,5=e,6=e',7=b,8=\abccond',usedefault=@]{
  \node[bblock, #1] (#2) {$\abcint[{#3}][{#4}][{#5}][{#6}][{#7}][{#8}]$};
}

\tikzset{
    abccallout/.style={
      fill=green!10,
		opacity=.5,
		overlay,
		align=center,
      cloud callout,
		cloud puffs=15,
		aspect=2.5,
		cloud ignores aspect,
		cloud puff arc=100,
		shading=ball
    }
  }

\newcommandx{\abcP}[6][1=P,2=K,3=.1cm,4=1cm,5=north east,6=proc,usedefault=@]{
  \begin{tikzpicture}
	 \node[fill=blue!10, shape=circle] (#6) {$\p[#1]$};
	 \node[abccallout, above = #3 of #6, xshift=#4, callout absolute pointer={(#6.#5)}] {$#2$}
	 ;
	 \draw[decorate,decoration={expanding waves,angle=7,segment length = .05cm}] (#6.east) -- ++(.5cm,0)
	 ;
  \end{tikzpicture}
}

\ExplSyntaxOn
\NewDocumentCommand{\ucgreek}{m}
 {
  \str_case:nn { #1 }
   {
    {A}{\mathrm{A}}
    {B}{\mathrm{B}}
    {C}{\Sigma}
    {D}{\Delta}
    {E}{\mathrm{E}}
    {F}{\Phi}
    {G}{\Gamma}
    {H}{\mathrm{H}}
    {I}{\mathrm{I}}
    {J}{\Theta}
    {K}{\mathrm{K}}
    {L}{\Lambda}
    {M}{\mathrm{M}}
    {N}{\mathrm{N}}
    {O}{\mathrm{O}}
    {P}{\Pi}
    {Q}{\mathrm{X}}
    {R}{\mathrm{P}}
    {S}{\Sigma}
    {T}{\mathrm{T}}
    {U}{\Upsilon}
    {W}{\Omega}
    {X}{\Xi}
    {Y}{\Psi}
    {Z}{\mathrm{Z}}
   }
 }
\NewDocumentCommand{\lcgreek}{m}
 {
  \str_case:nn { #1 }
   {
    {a}{\alpha}
    {b}{\beta}
    {c}{\varsigma}
    {d}{\delta}
    {e}{\varepsilon}
    {f}{\varphi}
    {g}{\gamma}
    {h}{\eta}
    {i}{\iota}
    {j}{\vartheta}
    {k}{\kappa}
    {l}{\lambda}
    {m}{\mu}
    {n}{\nu}
    {o}{o}
    {p}{\pi}
    {q}{\chi}
    {r}{\rho}
    {s}{\sigma}
    {t}{\tau}
    {u}{\upsilon}
    {w}{\omega}
    {x}{\xi}
    {y}{\psi}
    {z}{\zeta}
   }
 }
\ExplSyntaxOff

\usepackage{wrapfig}
\usepackage{hyperref}
\usepackage{amsmath}
\usepackage{mathtools}
\usepackage{extarrows}
\usepackage{thm-restate}
\usepackage{graphicx}
\usepackage{ifthen}
\usepackage{tikz}
\usetikzlibrary{arrows,automata,calc}
\usetikzlibrary{shapes.symbols,shapes.callouts,shadows,decorations}
\tikzset{
  gt/.style={
         ->,
         >=stealth',
         shorten >=.1pt,
         auto,
         node distance=5cm,
         scale = 1,
         every state/.style={inner sep = 2pt, minimum size = 0pt, font=\footnotesize},
         transform shape
  },
lt/.style={
	 ->,>=stealth',shorten >=1pt,auto,node distance=5cm,scale = 0.97,transform shape,
  },
  initial/.style={
	 state,initial by arrow, initial text={}
  }
}
\usepackage[capitalise]{cleveref}
\usepackage{xargs}

\usepackage{listings}
\lstdefinelanguage{JavaScript}{
  morekeywords=[1]{break, continue, delete, else, for, function, if, in,
    new, return, this, typeof, var, void, while, with, type, constructor, public},
morekeywords=[2]{false, null, true, boolean, number, undefined,
    Array, Boolean, Date, Math, Number, String, Object},
morekeywords=[3]{eval, parseInt, parseFloat, escape, unescape},
  sensitive,
  morecomment=[s]{/*}{*/},
  morecomment=[l]//,
  morecomment=[s]{/**}{*/}, morestring=[b]',
  morestring=[b]"
}[keywords, comments, strings]

\lstdefinelanguage[ECMAScript2015]{JavaScript}[]{JavaScript}{
  morekeywords=[1]{await, async, case, catch, class, const, default, do,
    enum, export, extends, finally, from, implements, import, instanceof,
    let, static, super, switch, throw, try},
  morestring=[b]` }
\lstalias[]{ES6}[ECMAScript2015]{JavaScript}

\definecolor{mediumgray}{rgb}{0.3, 0.4, 0.4}
\definecolor{mediumblue}{rgb}{0.0, 0.0, 0.8}
\definecolor{forestgreen}{rgb}{0.13, 0.55, 0.13}
\definecolor{darkviolet}{rgb}{0.58, 0.0, 0.83}
\definecolor{royalblue}{rgb}{0.25, 0.41, 0.88}
\definecolor{crimson}{rgb}{0.86, 0.8, 0.24}

\lstdefinestyle{JSES6Base}{
  basicstyle=\ttfamily\footnotesize\lstextra,
  breakatwhitespace=false,
  breaklines=false,
  columns=fullflexible,
  commentstyle=\color{mediumgray}\upshape,
  emph={},
  emphstyle=\color{crimson},
  extendedchars=true,  fontadjust=true,
  identifierstyle=\color{black},
  keepspaces=true,
  keywordstyle=\color{mediumblue},
  keywordstyle={[2]\color{darkviolet}},
  keywordstyle={[3]\color{royalblue}},
  numbers=left,
  numbersep=5pt,
  numberstyle=\tiny\color{black},
  showlines=true,
  showspaces=false,
  showstringspaces=false,
  showtabs=false,
  stringstyle=\color{forestgreen},
  tabsize=2,
  title=\lstname,
  upquote=true  }
\let\lstextra=\relax

\lstdefinestyle{JavaScript}{
  language=JavaScript,
  style=JSES6Base
}
\lstdefinestyle{ES6}{
  language=ES6,
  style=JSES6Base
}

\usepackage{mfirstuc}

\usepackage{xspace}

\usepackage{float}
\newfloat{listing}{htbp}{lol}
\floatname{listing}{Listing}

\usepackage{amsmath}
\usepackage{amsthm}
\usepackage{hyperref}
\usepackage{xcolor}
\usepackage[utf8]{inputenc}
\usepackage{relsize}
\usepackage{tikz-cd}
\usepackage{graphicx}
\usepackage{tabulary}
\usepackage{url}
\usepackage{multirow}
\usetikzlibrary{automata, arrows,shadows,shapes,matrix,decorations,calc,positioning,shapes,fit}
\usepackage{caption}
\usepackage{subcaption}
\usepackage{algorithm}
\usepackage{algorithmic}
\usepackage{stmaryrd}

\providebool{ecoop}
\newcommand{\typeSet}{{\colorElem T}}
\newcommand{\state}[1]{{#1}}

\newcommand{\receiving}[1]{\xrightarrow{#1?}}
\newcommand{\mydot}{\cdot}

\newcommand{\short}[1]{\ifbool{ecoop}{#1}{}}
\newcommand{\longer}[1]{\ifbool{ecoop}{}{#1}}

\newcommand{\myupd}{\mapsto}
\newcommand{\mydots}{, \ldots, }
\newcommand{\Gn}[1]{{\colorProto\mathsf{#1}}}
\newcommand{\G}{\agt}

\renewcommand{\S}{\asysrt}

\newcommand{\M}{\afish}
\newcommand{\machine}{\afish}

\newcommand{\JB}{\operatorname{UP}}%
\newcommand{\lbje}{\mathsf{last_{up}}}
\newcommand{\initlbje}{\emptyset}
\newcommandx{\parSt}[2][1=\M,2=\lbje,usedefault=@]{({#1}, #2)}
\newcommand{\evP}[1]{#1.\lbjePointer}
\newcommand{\lbjePointer}{\mathsf{last_{up}}}%

\newcommand{\cmdname}[1]{ \text{#1}}
\newcommand{\concSet}{\mathtt{conc}}

\newcommand{\ifr}{\mathtt{ifr}}
\newcommand{\role}{\mathtt{R}}

\newcommand{\rolename}[1]{\mathtt{#1}}
\newcommand{\lgt}{{\colorLog\mathtt{l}}}
\newcommand{\evt}{{\colorElem\mathtt{t}}}
\newcommand{\lgtn}[1]{{\colorLog \mathtt{#1}}}
\renewcommand{\lg}{{\colorLog\mathit{l}}}
\newcommand{\ev}{{\colorElem e}}
\newcommand{\lgname}[1]{{\colorLog\mathit{#1}}}
\newcommand{\cmd}{\role \langle \evt \rangle }
\newcommand{\cmdindx}[1]{\role_{#1} \langle \evt_{#1} \rangle }
\newcommand{\cstmcmd}[2]{ {\rolename{#1}} \langle {\lgtn{#2}} \rangle }
\newcommand{\othercstmcmd}[2]{ {#1} \langle {#2} \rangle }
\newcommand{\mycmd}[2]{ \rolename{#1} \langle {\lgtn{#2}} \rangle }
\newcommand{\cmdnrindx}[1]{ \evt_{#1}  }
\newcommand{\cmdnr}{\evt}

\newcommand{\map}{{\colorElem{\kappa}}}
\newcommandx{\oldroles}[2][1=\G,2=\subscription,usedefault=@]{\mathit{subscribers}(#1,#2)}
\newcommandx{\newroles}[3][1=\evt,2=\G,3=\subscription,usedefault=@]{\mathit{roles}(#1,#2,#3)}
\newcommand{\proterm}{\project^{\subscription}_{\role}}

\newcommand{\conc}{\, || \,}
\newcommand{\subscription}{{\colorFun \sigma}}
\newcommand{\project}{\mathbin{\,\downarrow}}

\newcommand{\define}{:=}

\newcommand{\projshort}[2]{\project^{#1}_{\rolename{#2}}}

\newcommand{\eff}{\mathtt{eff}}
\newcommandx{\effof}[3][3=\lbje,usedefault=@]{\eff (#1,#2,#3)}
\newcommandx{\projeffof}[3][3=\R,usedefault=@]{\eff(#1,#2 \projshort{\subscription}{#3})}
\newcommandx{\srteffof}[2][1=\lg,2=\G,usedefault=@]{\eff (#1,#2)}

\newcommand{\branch}{\mathsf{branch}}
\newcommand{\mbranch}[1]{\branch({#1})}
\newcommand{\mup}[1]{{#1}.\!\JB}
 \usetikzlibrary {arrows.meta}
\usetikzlibrary{arrows}
\usetikzlibrary{shapes}
\tikzset{every picture/.style={>=stealth,bend angle=20}}
\tikzset{every label/.style={font=\scriptsize}}
\tikzset{every node/.style={font=\scriptsize}}
\tikzset{play/.style={circle,draw,minimum size=#1}}
\tikzset{play/.default=0.625cm}

\tikzset{pla/.style={circle,draw,fill=red!30,minimum size=#1}}
\tikzset{pla/.default=0.5cm}
\tikzset{plb/.style={circle,draw,fill=yellow!30,minimum size=#1}}
\tikzset{plb/.default=0.5cm}
\tikzset{player0/.style={circle,draw,fill=green!30,minimum size=#1}}
\tikzset{player0/.default=0.5cm}

\tikzset{prob/.style={diamond,draw,fill=blue!20,minimum size=#1}}
\tikzset{prob/.default=0.7cm}
\tikzset{end/.style={rectangle,draw,minimum size=#1}}
\tikzset{end/.default=0.55cm}

\newcommand{\concM}[1]{\mathbin{{\conc}_{#1}}}

\newcommand{\wfInf}[3]{{#1}  \vdash {#2} \in {#3}}

\newcommand{\roleMap}[1][]{\operatorname{r}_{#1}}
\newcommand{\roleMapApp}[2][]{\operatorname{r}_{#1}\!\left({#2}\right)}

\newcommand{\swarmAdd}{\mathbin{{:}{:}}}

\newcommandx{\asumnewnew}[2][1=\acmdrel,2={\areactionnew[]},usedefault=@]{
  \def\tmp{\dolist{\ \branchsep\ }{#2}}
  \ifempty{#1}{\tmp}{{#1} \mdot \ifempty{#2}{\zero}{\left[ \tmp \right]}}
}
\newcommandx{\gsumprefixnewnew}[3][1=i,2=\evt,3=R,usedefault=@]{
\arole[{#1}][{#3}]\conf{{\def\rkl{l}\def\rkx{#2}\ifx\rkl\rkx\aeventtypenew[{#1}][{#3}]\else\aeventtypenew[{#1}][{#2}]\fi}}
}
\newcommandx{\agsumnewnew}[5][1=i,2=I,3=\evt,4=G,5={\arole[]},usedefault=@]{
    \textstyle\sum_{#1 \in #2}\gsumprefixnewnew[{#1}][{#3}][{#5}] . \agt[{#1}][{#4}]}
    
\definecolor{myblue}{rgb}{0,0,0.6}
\definecolor{myred}{rgb}{0.4,0,0}
\definecolor{otherred}{rgb}{0.5,0,0.2}

\definecolor{mygreen}{rgb}{0.3,0.4,0.3}

\def\colorProto{\color{myred}}
\def\colorSet{\color{otherred}}
\def\colorFish{\color{myred}}
\def\colorVar{\color{myred}}
\def\colorElem{\color{otherred}}
\def\colorCmd{\color{myblue}}
\def\colorLog{\color{otherred}}
\def\colorFun{\color{mygreen}}
\def\colorSymb{\color{mygreen}}

\DeclareMathOperator{\branchsep}{{\&}}
\DeclareMathOperator{\logcat}{{\colorSymb\cdot}}

\DeclareMathSymbol{\mdotsymb}{\mathord}{symbols}{"01}
\DeclareMathOperator{\mdot}{\!{\colorSymb \mdotsymb}\!}
\DeclareMathOperator{\cte}{{\colorSymb /}}

\newcommandx{\realise}[2][1=\reactsto,2=\agt,usedefault=@]{$(#1,#2)$-realisation}

\newcommand{\eventset}{{\colorSet{\mathcal{E}}}}
\newcommand{\typeset}{{\colorSet{\mathcal{T}}}}

\newcommandx{\aevent}[2][1={},2=e,usedefault=@]{
	{\colorElem\mathit{\MakeLowercase{#2}}}_{#1}
}
\newcommandx{\acevent}[3][1={},2=e, 3=S, usedefault=@]{({#3}_{#1},\aevent[][{#2}]_{#1})}
\newcommandx{\aclog}[2][1={},2=\lambda,usedefault=@]{{\colorElem {#2}}_{#1}}

\newcommandx{\aeventtype}[2][1={},2=t,usedefault=@]{
	\mathtt{\colorElem {{#2}}}_{#1}
}
\newcommandx{\eventtyping}[3][1={}, 2=e, 3=e, usedefault=@]{
	\vdash \aevent[{#1}][{#2}] \, {\colorSymb :}\, \aeventtype[{#1}][{#3}]
}
\newcommand{\emptylog}{\epsilon}

\newcommandx{\eqg}[3][1=\aclog,2=\aclog',3=\agt]{{#1}\equiv_{#3}{#2}}

\newcommandx{\alogtype}[2][1={},2=l,usedefault=@]{{{\colorLog\mathtt{#2}}_{#1}}}
\newcommandx{\alog}[2][1={},2=l,usedefault=@]{
	{\colorLog\mathit{#2}}_{#1}
}
\newcommandx{\logof}[1][1={},usedefault=@]{\mathsf{\colorFun log}\ifempty{#1}{}{({#1})}}
\newcommandx{\reachable}[1][1={},usedefault=@]{\mathsf{\colorFun Reach}\ifempty{#1}{}{({#1})}}

\newcommand{\subterm}{\sqsubseteq}

\newcommandx{\eqlog}[3][3=\aspecdef,usedefault=@]{#1\equiv_{#3}#2}

\newcommandx{\leqlog}[1][1=\alog]{\, <_{#1} \,}
\newcommandx{\geqlog}[1][1=\alog]{\, >_{#1} \,}

\newcommandx{\acmd}[2][1={},2=c,usedefault=@]{{\colorCmd\mathsf{{#2}}}_{#1}}
\newcommand{\inputsymbol}{{\colorSymb ?}}

\newcommandx{\apref}[2][1={},2=e,usedefault=@]{\aeventtype[{#1}][{#2}]\!\inputsymbol}
\newcommandx{\inp}[2][1={}, 2=t, usedefault=@]{
	\aeventtype[{#1}][{#2}]\inputsymbol\,
}
\makeatletter
\newcommandx{\inps}[2][1={}, 2=e, usedefault=@]{
	\def\nextitem{\def\nextitem{#1}}\@for \el:=#2 \do{\aeventtype[{#1}][{\el}]\inputsymbol\, \nextitem}}
\makeatother

\newcommandx{\outp}[2][1={\acmd},2=\aeventtype]{\ifempty{#1}{#2}{#1/{{#2}}}\colorSymb{!}}

\newcommand{\aC}{\colorElem{C}}
\newcommand{\acmdrel}{{\colorElem{\kappa}}}

\newcommandx{\afish}[4][1={},2=M, 3=\aC, 4=\acmdrel, usedefault=@]{\ifempty{#2}{{#3}_{#1} \cdot {#4}_{#1}} {{{\colorFish \mathsf{#2}}_{#1}}}}
\newcommandx{\afishact}[3][1={},2=c,3=,usedefault=@]{
	\acmd[{#1}][{#2}] \cte \ifempty{#3}{\aeventtype[{#1}]}{\aeventtype[{#1}][{#3}]}
}
\newcommandx{\obscmd}[3][1=\afish, 2={\acmd[]}, 3 = {\alogtype[]}, usedefault=@]{
	{\big(#1\big)}\!\downarrow_{\afishact[][{#2}][{#3}]}
}

\newcommandx{\agt}[2][1={},2=G,usedefault=@]{\ifempty{#1}{{\colorProto \mathsf{#2}}}{{\colorProto \mathsf{#2}}_{#1}}}

\newcommandx{\fishid}[1][1=f]{{\colorFish {\mathtt{#1}}}}

\newcommandx{\asys}[2][1={},2=S,usedefault=@]{{{\colorElem\mathbb{#2}}_{#1}}}
\newcommandx{\apond}[2][1={},2=M,usedefault=@]{{{\color{myred}\mathcal{#2}}_{#1}}}

\newcommand{\roleset}{{\color{myred}\mathcal{R}}}

\newcommandx{\asysrt}[2][1={},2=S,usedefault=@]{{{\color{myred}\mathsf{#2}}_{#1}}}
\newcommandx{\metasys}[2][1={},2=S,usedefault=@]{{{\colorElem\mathfrak{#2}}_{#1}}}

\newcommandx{\fishes}[2][1={},2=L,usedefault=@]{{\colorElem\mathbb{#2}}_{#1}}

\newcommandx{\arsys}[2][1={},2=S,usedefault=@]{{\colorElem{\mathbb {#2}_{#1}}}}

\newcommandx{\areaction}[3][1=i,2=e,3=\afish,usedefault=@]{
	\aeventtype[{#1}][{#2}] {\colorSymb ?}\, {{#3}_{#1}}
}

\makeatletter
\def\mklogtype#1{
	\def\nextitem{\def\nextitem{\logcat\penalty0}}\@for\el:=#1\do{\nextitem\aeventtype[][{\el}]}}

\def\mklog#1{
	\ifempty{#1}{\emptylog}{
		\def\nextitem{\def\nextitem{\logcat\penalty0}}\@for\el:=#1\do{\nextitem{\el}}}
}

\def\mksys#1{
	\def\transform##1[##2]{(\afish[][##1],\mklog{##2})\parop\penalty0}\@for\el:=#1\do{\expandafter\transform\el}}
\makeatother

\newcommandx{\asumnew}[2][1=\acmdrel,2={\areaction[]},usedefault=@]{
	\def\tmp{\dolist{\ \branchsep\ }{#2}}
	\ifempty{#1}{\tmp}{{#1} \mdot \ifempty{#2}{\zero}{\left[ \tmp \right]}}
}
\newcommand{\atrole}{\texttt{\colorSymb \upshape@}}

\newcommandx{\gsumpref}[5][1={},2=c,3=e,4=R,5={}]{\acmd[{#1}][{#2}] \cte \alogtype[{#1}][{#3}] \atrole \arole[{#1}][{#4}]{\gtpref}\ifempty{#5}{\zero}{{#5}_{#1}}}

\newcommand{\gtpref}{\,{\colorSymb \cdot}\penalty0\,}

\newcommandx{\gsumprefix}[4][1=i,2=c,3=l,4=R,usedefault=@]{
	\acmd[{#1}][{#2}]{\atrole\arole[{#1}][{#4}]}\conf{{\def\rkl{l}\def\rkx{#3}\ifx\rkl\rkx\alogtype[{#1}][{#3}]\else\mklogtype{#3}\fi}}
}

\newcommandx{\arole}[2][1=i,2=R,usedefault=@]{{\mathtt{#2}_{#1}}}
\newcommandx{\agsum}[6][1=i,2=I,3=\acmd,4=l,5=G,6={\arole[]},usedefault=@]{
	\sum_{#1 \in #2}\gsumprefix[{#1}][{#3}][{#4}][{#6}] \gtpref \agt[{#1}][{#5}]}

\newcommandx{\agprodrun}[5][1=i,2=I,3=\acmd,4=\alogtype,5=\agt,usedefault=@]{
	\pi_{#1 \in #2}{#3_#1/#4_#1;#5}}

\mkfun[\colorFun]{\reactsto}{$\sigma$}{}
\mkfun[\colorFun]{\achi}{$\chi$}{}
\newcommand{\alambda}{{\reactsto}}

\newcommandx{\aspecdef}[3][1=\agt,2=\achi,3=\reactsto,usedefault=@]{#1,#3}
\mkfun[\colorFun]{\cmds}{cmds}{}
\mkfun[\colorFun]{\events}{events}{}

\mkfun[\colorFun]{\norec}{norec}{}
\mkfun[\colorFun]{\auxcr}{cr}{}
\mkfun[\colorFun]{\cmdrdy}{crdy}{}
\mkfun[\colorFun]{\rdy}{guards}{}
\mkfun[\colorFun]{\cmpts}{roles}{}
\mkfun[\colorFun]{\filter}{filter}{}
\mkfun[\colorFun]{\activer}{active}{}
\mkfun[\colorFun]{\passiver}{passive}{}
\mkfun[\colorFun]{\observed}{obs}{}

\newcommandx{\aproj}[3][1={\arole[]},2=\reactsto,3=\achi,usedefault=@]{\downarrow_{#1}^{{#2}}}

\newcommandx{\asem}[1][1={}]{[\![ #1 ]\!]}

\newcommandx{\atripg}[3][1=\agt,2=\alambda,3=\achi]{\langle#1,#2,#3\rangle}
\renewcommand{\atripg}{\agt}

\newcommand{\zero}{{{\colorSymb \mathbf{0}}}}
\newcommandx{\avar}[1][1=X]{{\colorVar \mathtt{#1}}}
\newcommandx{\avartype}[1][1=X]{{\colorVar \mathtt{#1}}}

\newcommandx{\arec}[2][1=\avar,2=\afish,usedefault=@]{
	{{#1} = {#2}}
}

\newcommand{\parop}{\,{\colorSymb \mid }\,}
\newcommand{\stchange}[1][]{{\colorFun{\delta}_{#1}}}

\newcommand{\stchangeBT}{\stchange[BT]}

\newcommandx{\red}[1][1={},usedefault=@]{\xlongrightarrow{#1}}
\newcommandx{\weakred}[1][1={},usedefault=@]{\xLongrightarrow{#1}}
\newcommandx{\absred}[1][1={},usedefault=@]{\xmapsto{#1}}
\makeatletter
\def\rightarrowfill@@{\arrowfill@@\relax\relbar\rightarrow}
\def\arrowfill@@#1#2#3#4{$\m@th\thickmuskip0mu\medmuskip\thickmuskip\thinmuskip\thickmuskip
	\relax#4#1
	\xleaders\hbox{$#4#2$}\hfill
	#3$}
\newcommand{\metared}[2][]{\ext@arrow 0359\rightarrowfill@@{#1}{#2}}
\makeatother
\newcommandx{\noBTMachineStep}[1][1={},usedefault=@]{\red[#1]}
\newcommandx{\noBTSwarmStep}[1][1={},usedefault=@]{\red[#1]}

\newcommand{\irule}[2]{\frac{\textstyle\rule[-1.3ex]{0cm}{3ex}#1}{\textstyle\rule[-.5ex]{0cm}{3ex}#2}}

\newcommand{\rulename}[1]{{\sc [#1]}}

\def \mathrule #1#2#3{
	\irule{#1}{#2}\ifempty{#3}{}{\hspace{0em}\mbox{\footnotesize\rulename{#3}}}
}

\newcommandx{\acmdnew}[2][1={},2=c,usedefault=@]{
	{\colorCmd\mathit{\MakeLowercase{#2}}}_{#1}
}

\newcommandx{\aeventtypenew}[2][1={},2=e,usedefault=@]{
	\mathsf{\colorElem {{#2}}}_{#1}
}

\newcommandx{\afishactnew}[3][1={},2=c,3=,usedefault=@]{
	\acmd[{#1}][{#2}] \cte \ifempty{#3}{\aeventtypenew[{#1}]}{\aeventtype[{#1}][{#3}]}
}

\newcommandx{\areactionnew}[3][1=i,2=e,3=\afish,usedefault=@]{
	\aeventtypenew[{#1}][{#2}] {\colorSymb ?}\, {{#3}_{#1}}
}

\newcommandx{\gsumprefixnew}[4][1=i,2=c,3=e,4=R,usedefault=@]{
	\acmd[{#1}][{#2}]{\atrole\arole[{#1}][{#4}]}\conf{{\def\rkl{l}\def\rkx{#3}\ifx\rkl\rkx\aevent[{#1}][{#3}]\else\aeventtypenew[{#1}][{#3}]\fi}}
}

\newcommandx{\agsumnew}[6][1=i,2=I,3=\acmd,4=e,5=G,6={\arole[]},usedefault=@]{
	\textstyle\textstyle\sum_{#1 \in #2}\gsumprefixnew[{#1}][{#3}][{#4}][{#6}] \gtpref \agt[{#1}][{#5}]}

\newcommandx{\aprojnew}[3][1={\arole[]},2=\reactsto,3=\achi,usedefault=@]{\downarrow_{#1}^{{#2}}}

\mkfun[\colorFun]{\afterset}{after}{}

\newcommandx{\denv}[1][1={},usedefault=@]{\Gamma\ifempty{#1}{}{.{#1}}}

\newcommandx{\comptwo}[2][1={\agt[1]},2={\agt[2]},usedefault=@]{{#1} \; || \; {#2}}

\usepackage{pgfplots}
\pgfplotsset{compat=newest}

\definecolor{dtured}    {rgb/cmyk}{0.6,0,0 / 0,0.91,0.72,0.23}
\definecolor{blue}      {rgb/cmyk}{0.1843,0.2431,0.9176 / 0.88,0.76,0,0}
\definecolor{brightgreen}{rgb/cmyk}{0.1216,0.8157,0.5098 / 0.69,0,0.66,0}
\definecolor{navyblue}  {rgb/cmyk}{0.0118,0.0588,0.3098 / 1,0.9,0,0.6}
\definecolor{yellow}    {rgb/cmyk}{0.9647,0.8157,0.3019 / 0.05,0.17,0.82,0}
\definecolor{orange}    {rgb/cmyk}{0.9882,0.4627,0.2039 / 0,0.65,0.86,0}
\definecolor{pink}      {rgb/cmyk}{0.9686,0.7333,0.6941 / 0,0.35,0.26,0}
\definecolor{grey}      {rgb/cmyk}{0.8549,0.8549,0.8549 / 0,0,0,0.2}
\definecolor{red}       {rgb/cmyk}{0.9098,0.2471,0.2824 / 0,0.86,0.65,0}
\definecolor{green}     {rgb/cmyk}{0,0.5333,0.2078 / 0.89,0.05,1,0.17}
\definecolor{purple}    {rgb/cmyk}{0.4745,0.1373,0.5569 / 0.67,0.96,0,0}
\definecolor{aBlue}	    {RGB/cmyk}{0,158,220/1.0,0.0,0.0,0.0}
\definecolor{aRed}		{RGB/cmyk}{220,85,82/0.0,0.87,0.8,0.0}
\definecolor{orchid}	{RGB/cmyk}{197, 174, 207/0.05, 0.16, 0, 0.19} %
\definecolor{aPurple}	{RGB/cmyk}{70, 41, 90/0.22, 0.54, 0, 0.65}
\definecolor{lightblue1} {RGB/cmyk}{173, 190, 211/0.18,0.1,0,0.17}
\definecolor{lightblue2} {RGB/cmyk}{110, 129, 190/0.42,0.32,0,0.25}

\pgfplotscreateplotcyclelist{DTU}{%
dtured,         fill=dtured,        \\%
blue,           fill=blue,          \\%
brightgreen,    fill=brightgreen    \\%
navyblue,       fill=navyblue       \\%
yellow,         fill=yellow         \\%
orange,         fill=orange         \\%
grey,           fill=grey           \\%
red,            fill=red            \\%
green,          fill=green          \\%
purple,         fill=purple         \\%
aBlue,          fill=aBlue          \\%
aRed,           fill=aRed           \\%
orchid,         fill=orchid         \\%
aPurple,        fill=aPurple        \\%
lightblue1,     fill=lightblue1     \\%
lightblue2,     fill=lightblue2     \\%
}
\usetikzlibrary{spy}    %
\usepgfplotslibrary{statistics} %
\pgfplotsset{ %
compat=newest,
major x grid style={line width=0.5pt,draw=grey},
major y grid style={line width=0.5pt,draw=grey},
legend style={at={(0.5,-0.1)}, anchor=north,fill=none,draw=none,legend columns=-1,/tikz/every even column/.append style={column sep=10pt}},
axis line style={draw=none},
tick style={draw=none},
every axis/.append style={ultra thick},
tick label style={/pgf/number format/assume math mode}, %
}
\tikzset{every mark/.append style={scale=1.5}}

\usepackage{mathpartir}

\crefname{definition}{Def.{}}{Definitions}
\Crefname{definition}{Def.{}}{Definitions}
\crefname{figure}{Fig.{}}{Figures}
\Crefname{figure}{Fig.{}}{Figures}

\usepackage{textcomp}%
\Crefname{section}{\textsection{}}{\textsection{}\textsection{}}
\crefname{section}{\textsection{}}{\textsection{}\textsection{}}

\booltrue{ecoop}%

\booltrue{showChanges}

\title{Compositional Design, Implementation, and Verification of Swarms%
	\iftoggle{techreport}{ (Technical Report)}{}%
}

\author{Florian Furbach}{University of Surrey, UK and Technical University of Denmark, Denmark}{f.furbach@surrey.ac.uk}{https://orcid.org/0009-0008-4922-9363}{}%
\author{Lucas Clorius}{Technical University of Denmark, Denmark}{luccl@dtu.dk}{https://orcid.org/0009-0005-2224-2419}{}%
\author{Roland Kuhn}{Actyx AG, Germany \and \url{https://rolandkuhn.com/}}{roland@actyx.io}{https://orcid.org/0000-0003-1582-6238}{}%
\author{Hern\'an Melgratti}{%
	University of Buenos Aires \& Conicet, Argentina
	\and
	\url{https://lafhis.dc.uba.ar/~melgratti}}{hmelgra@dc.uba.ar}{https://orcid.org/0000-0003-0760-0618}{}%
\author{Alceste Scalas}{Technical University of Denmark, Denmark \and \url{https://people.compute.dtu.dk/alcsc}}{alcsc@dtu.dk}{https://orcid.org/0000-0002-1153-6164}{}%
\author{Emilio Tuosto}{
	Gran Sasso Science Institute, %
	Italy
	\and
	\url{https://cs.gssi.it/emilio.tuosto}}{emilio.tuosto@gssi.it}{https://orcid.org/0000-0002-7032-3281}
	{}%

\authorrunning{F.~Furbach, L.~Clorius, R.~Kuhn, H.~Melgratti, A.~Scalas, and E.~Tuosto}

\Copyright{Florian Furbach, Lucas Clorius, Roland Kuhn, Hern\'an Melgratti, Alceste Scalas, Emilio Tuosto} 
\ccsdesc[500]{Theory of computation~Distributed computing models}
\ccsdesc[500]{Software and its engineering~Distributed programming languages}
\ccsdesc[300]{Software and its engineering~Formal software verification}

\keywords{Swarms,
	Swarm Protocols,
	Concurrency,
	Distributed Coordination,
	Local-first Software,
	Behavioural Types,
	Publish-Subscribe,
	Asynchronous Communication} %

\category{} %

\iftoggle{techreport}{%
	\relatedversiondetails{ECOOP'26 paper}{https://doi.org/10.4230/LIPIcs.ECOOP.2026.21}
}{}

\funding{%
  Research partially supported by the Horizon Europe grant 101093006 (TaRDIS) %
  and by the Italian Ministero dell'Università e della Ricerca ``Dipartimenti di eccellenza'' initiative.%
}%

\acknowledgements{This work was inspired by the discussions at the Dagstuhl Seminar 24051 ``Next Generation Protocols for Heterogeneous Systems'' \cite{DBLP:journals/dagstuhl-reports/BalzerC0024}. We thank the organisers of the meeting and Schloss Dagstuhl -- Leibniz Center for Informatics for making this work possible.}%

\iftoggle{techreport}{%
  \hideLIPIcs%
}{}

\EventEditors{Robbert Krebbers and Alexandra Silva}
\EventNoEds{2}
\EventLongTitle{40th European Conference on Object-Oriented Programming (ECOOP 2026)}
\EventShortTitle{ECOOP 2026}
\EventAcronym{ECOOP}
\EventYear{2026}
\EventDate{June 29--July 3, 2026}
\EventLocation{Brussels, Belgium}
\EventLogo{}
\SeriesVolume{372}
\ArticleNo{21}

\newtheorem{result}[theorem]{Result}

\newcommand{\changeNoMargin}[1]{#1}
\newcommand{\changeOursNoMargin}[1]{#1}

\makeatletter
  \makeatletter
  \patchcmd{\@addmarginpar}{\ifodd\c@page}{\ifodd\c@page\@tempcnta\m@ne}{}{}
  \makeatother
  	\reversemarginpar
  \newcounter{change}
  \renewcommand{\changeNoMargin}[1]{{\color{blue}{#1}}}%
  \renewcommand{\changeOursNoMargin}[1]{{\color{red}{#1}}}%
\makeatother

\begin{document}

\maketitle

\begin{abstract}
  \emph{Swarm protocols} are a recently introduced formalism for
  specifying, implementing, and verifying peer-to-peer systems called
  \emph{swarms}.  A swarm consists of distributed agents called
  \emph{machines} that communicate by asynchronous event propagation.
  Following a \emph{local-first} model, each machine can progress
  without requiring continuous connectivity to other machines.
  Existing models of swarms are not compositional, making the modular
  development of large and complex swarm applications as well as the
  reuse of code difficult.
  We address these issues by presenting novel theory and techniques
  for the compositional specification, verification, and
  implementation of swarms.  These results enable the correct
  compositional reuse of pre-existing swarm protocols and machine
  implementations.
  We implement these contributions in a companion software artifact
  which enables the automatic integration of independently designed
  and verified swarm components.
  
\end{abstract}

\section{Introduction}
\label{sec:intro}

Modern distributed systems face significant challenges in ensuring reliability, availability, and scalability.
Recent work on \emph{swarm protocols}~\cite{DBLP:conf/ecoop/KuhnMT23,10.1145/3597926.3604917}
addresses these challenges by formally specifying the behavior of interacting agents capable of independent yet coordinated progress.
In this approach, a \emph{swarm} is modelled as an ensemble of distributed interacting agents, called \emph{machines}.
Each machine can coordinate with others by emitting \emph{events} (i.e., messages) that propagate asynchronously through the swarm, and by \emph{subscribing} to (some of) the events emitted by other machines. Essentially, swarms are type-based publish/subscribe systems with distributed asynchronous delivery~\cite{Eugster03PubSub}.
Unlike most publish/subscribe systems that aim for causal-consistent delivery (i.e.,  machines share a consistent view of the global system state), swarms are designed to operate over inconsistent views, often corresponding to inconsistent cuts of the global state~\cite{babaoglu1993consistent}.
This design choice prioritises availability over consistency~\cite{CAP}, following a local-first paradigm \cite{lfc, localfirst}: each machine can make progress independently, without requiring up-to-date global information or a continuously available network connection.
Such an approach enables decentralized decision-making, reducing reliance on central coordination and improving robustness. However, it also introduces potential inconsistencies: machines may take decisions based on stale or divergent state, which may later require reconciliation~\cite{parker1983detection}.

The approach in
\cite{DBLP:conf/ecoop/KuhnMT23,10.1145/3597926.3604917} advocates the
usage of behavioral types~\cite{hlvccdmprt16,gay2017behavioural,ancona2016behavioral} to ensure
that a swarm correctly reconciles inconsistencies.
Roughly, the intended global behaviour of the swarm (i.e., the
expected interactions among machines) is formalised as a \emph{swarm
  protocol}: a specification that provides a bird's-eye view of how
machines playing different \emph{roles} should interact at runtime, %
without causing irreconcilable inconsistencies in the overall state of the swarm.
A swarm protocol $\G$ can then be \emph{projected} onto the different roles to obtain the corresponding machine specifications.
When an ensemble of machines, each running according to a projection of $\G$, is deployed to form a swarm $\S$, then (under \emph{well-formedness} conditions discussed later) the swarm $\S$ is guaranteed to execute in accordance with $\G$.

\subparagraph{A Use Case of Swarm Protocols and Swarms.}%
To illustrate the theory and practice of swarm protocols, we present a
simplified use case in factory automation (\quo{Industry
  4.0}).\footnote{We borrow the use case of an automatic warehouse
  operated by a swarm which has been implemented in the open source
  Actyx toolkit~\cite{actyx}.}
The intended global behaviour of the swarm is specified by the
$\Gn{Warehouse}$ protocol illustrated in \Cref{fig:Warehouse}.
A machine playing the role of a transport vehicle $\rolename{T}$ requests a part by emitting an event of type $\lgtn{partReq}$. A machine playing the forklift role $\rolename{FL}$ responds by delivering the requested part to the specified pick-up position $\lgtn{pos}$. A machine playing role $\rolename{T}$ then picks up the part and emits an event of type $\lgtn{partOK}$. Between consecutive requests, the door $\rolename{D}$ may close, emitting an event of type $\lgtn{closingTime}$.

\begin{figure}[t]
  \short{\vspace{-3mm}}%
  \centering
  \begin{minipage}[b]{.425\textwidth}
    \centering
    \begin{tikzpicture}[gt, scale=\ifbool{ecoop}{.85}{.85}, transform shape, label distance=1mm, node distance=4mm and 22mm]
        \node (0) [pla]{0};
        \node (begin) [left =0.5cm of 0] {};
        \node (middle) [right = of 0]{};
        \node (2) [pla, right = of middle] {2};
        \node (1) [pla, above = of middle] {1};
        \node (3)  [pla, below = 1.7mm of middle] {3};

        \draw[->] (begin) to node[sloped, anchor=center, above=1mm]{}  (0);
        \draw[->] (0) to node[sloped, anchor=center, above=0.5mm]{$ \cstmcmd{T}{partReq}$}  (1);
        \draw[->] (1) to node[sloped, anchor=center, above=0.5mm]{$ \cstmcmd{FL}{pos}$}  (2);
        \draw[->] (2) to node[sloped, anchor=center, above=0.2mm]{$ \cstmcmd{T}{partOK}$}  (0);
        \draw[->] (0) to node[sloped, anchor=center, below=0.5mm]{$ \cstmcmd{D}{closingTime}$}  (3);
      \end{tikzpicture}
      \short{\vspace{-3mm}}%
      \captionof{figure}{The swarm protocol $\Gn{Warehouse}$, involving the roles $\rolename{T}$ (transport), $\rolename{FL}$ (forklift), and $\rolename{D}$ (door). }
      \label{fig:Warehouse}
      \smallskip

      \begin{tikzpicture}[gt, scale=\ifbool{ecoop}{.85}{.85}, transform shape, label distance=1mm, node distance=3.5mm and 22mm]
          \node (0) [plb]{0};
          \node (begin) [below =0.3cm of 0] {};
          \node (middle) [right = of 0]{};
          \node (2) [plb, right = of 0] {1};
          \node (3)  [plb, left = of 0] {3};

          \draw[->] (begin) to node[sloped, anchor=center, above=1mm]{}  (0);
          \draw[->,out=110,in=70,loop] (0) to node[sloped, anchor=center, above=0mm, bend left]{$\lgtn{closingTime}!$}  (0);
          \draw[->,bend left] (2) to node[sloped, anchor=center, below=0mm, bend left]{$  \lgtn{partOK}?$}  (0);
          \draw[->,bend left] (0) to node[sloped, anchor=center, above=0mm, bend left]{$ \lgtn{partReq}?$}  (2);
          \draw[->] (0) to node[above]{$ \lgtn{closingTime}?$}  (3);
        \end{tikzpicture}
        \short{\vspace{-3mm}}%
        \captionof{figure}{Machine $\M_\rolename{D}$ obtained by projecting the swarm protocol $\Gn{Warehouse}$ (\Cref{fig:Warehouse}) onto role $\rolename{D}$ (door).}
        \label{fig:Door}
      \end{minipage}
      \hfill
      \begin{minipage}[b]{.555\textwidth}
        \scalebox{1.25}{%
          \hspace{5mm}%
\begin{lstlisting}[style=ES6, backgroundcolor=\color{white}, xleftmargin=.05\textwidth, basicstyle=\tiny\ttfamily, caption={}, label=lst:ts-TT]
// Machine implementation using the Actyx toolkit
const door = Warehouse.makeMachine('D')

// States
const s0 = door.designEmpty('s0')
    .command('close', [closingTime], () => {
        ...
        return [closingTime.make({ time: new Date() })]
    })
    .finish()
const s1 = door.designEmpty('s1').finish()
const s3 = door.designEmpty('s3').finish()

// Accept events and change states
s0.react([partReq], s1, () => { ...; return s1.make() })
s0.react([closingTime], s3, () => { ...; return s3.make() })
s1.react([partOK], s0, () => { ...; return s0.make() })
\end{lstlisting}
        }
        \captionof{figure}{Implementation of a $\rolename{D}$oor Machine in Actyx.}
        \label{fig:code_machine_door}
      \end{minipage}%
    \end{figure}
    \begin{figure}[t]
      \centering
      \begin{tikzpicture}[gt, scale=\ifbool{ecoop}{.8}{1}, transform shape, label distance=2mm, node distance=26mm]

          \node (0) [pla]{0};
          \node (begin) [left =0.5cm of 0] {};
          \node (1) [pla, right = of 0] {1};
          \node (2) [pla,, right = of 1] {2};
          \node (3)  [pla, right = of 2] {3};

          \draw[->] (begin) to node[sloped, anchor=center, above=0.5mm]{}  (0);
          \draw[->] (0) to node[sloped, anchor=center, above=0.5mm]{$ \cstmcmd{T}{partReq}$}  (1);
          \draw[->] (1) to node[sloped, anchor=center, above=0.5mm]{$ \cstmcmd{T}{partOK}$}  (2);
          \draw[->] (2) to node[sloped, anchor=center, above=0.5mm]{$ \cstmcmd{A}{car}$}  (3);
        \end{tikzpicture}
        \short{\vspace{-2mm}}%
        \caption{A $\Gn{Factory}$ protocol that can interface with the $\Gn{Warehouse}$ (\Cref{fig:Warehouse}) using the role $\rolename{T}$.}
        \label{fig:Factory}
      \end{figure}

The swarm protocol $\Gn{Warehouse}$ is then \emph{projected} onto its roles $\rolename{T}$, $\rolename{FL}$, and $\rolename{D}$, yielding a machine specification for each role. For example, the projection of $\Gn{Warehouse}$ onto the role $\rolename{D}$ (door) results in the state machine shown in \Cref{fig:Door}.
For each event type, say $\lgtn{closingTime}$, a transition labelled $\lgtn{closingTime}!$
indicates that the machine \emph{emits} an event of type $\lgtn{closingTime}$,
while a transition labelled $\lgtn{closingTime}?$ indicates that the machine \emph{accepts} an
event of that type.
During the execution of a swarm, when a machine emits an event, the event is appended to the emitting machine's \emph{local event log};
then, the event asynchronously propagates to the logs of other machines. A machine changes its state only when it
inspects its log and accepts an event from it;
an event emission does not directly cause a state change
-- rather, after an event is emitted, its acceptance triggers a state
transition. For example, in \Cref{fig:Door}, state $\state 0$ of machine $\M_\rolename{D}$ has a self-loop
emitting an event of type $\lgtn{closingTime}$, as well as a
transition that accepts the same event type.
The non-atomic nature of event emission ($\lgtn \_!$) and acceptance ($\lgtn \_
?$) implies that events from different machines may interleave: i.e., after a
machine emits an event, and before that event is accepted, other events from
other machines may be emitted, propagated, and accepted.

\medskip%

In practice, machines can be implemented using the open source Actyx toolkit~\cite{actyx}, which provides a TypeScript API for defining swarm protocols, implementing machines, and deploying swarms. %
\Cref{fig:code_machine_door} shows the TypeScript implementation of the state machine
$\M_\rolename{D}$ in \Cref{fig:Door}: %
the machine is called \lstinline[style=ES6]|door| and plays the role \lstinline[style=ES6]|'D'| (line 2).
Lines 5--12 define the states \lstinline[style=ES6]|s0|, \lstinline[style=ES6]|s1|, \lstinline[style=ES6]|s3| that map to states $\state 0$, $\state 1$, and $\state 3$ in $\M_\rolename{D}$, respectively, as follows.
Line 5 introduces the state named \lstinline[style=ES6]|s0| while lines 6--9 define the transition that emits an event of type \lstinline[style=ES6]|closingTime|, which is declared using the \lstinline[style=ES6]|command| method. The elided code on line 7 implements the business logic associated with the event emission, while line 8 specifies the information attached to the emitted event (in this case, a timestamp).
Lines 11 and 12 introduce the states  \lstinline[style=ES6]|s1| and  \lstinline[style=ES6]|s3|.
Finally, lines 15–17 define the accepting transitions. In particular, line 16 specifies that when the machine is in state \lstinline[style=ES6]|s0| and receives an event of type \lstinline[style=ES6]|closingTime|, it accepts the event and transitions to state \lstinline[style=ES6]|s3|. The omitted code implements the corresponding application-specific business logic.

Once the code is written, the Actyx toolkit enables static verification to check that a machine's implementation adheres to an intended behaviour. Specifically, it provides APIs for defining swarm protocols, tools for projecting protocols onto roles, a type extraction tool that reconstructs machines from TypeScript code, and a type-checking tool to check whether the projected and extracted machines are consistent~\cite{actyx}. For instance, the Actyx toolkit can verify that the code in \Cref{fig:code_machine_door} conforms to the machine in \Cref{fig:Door}, that is projected from \Cref{fig:Warehouse}.

\medskip%

Going back to the theory, the use of swarm protocols ensures a key correctness property: if a swarm $\S$ consists of machines projected onto
the roles of the same swarm protocol $\G$ (such as $\Gn{Warehouse}$ in \Cref{fig:Warehouse}), then the swarm $\S$ enjoys \emph{eventual
fidelity} to $\G$ \cite{DBLP:conf/ecoop/KuhnMT23}. Intuitively, this means that, once every event propagates to every
machine in the swarm $\S$, all machines will reach a consensus on which events reflect a transition specified in $\G$,
and how they conform to an execution of $\Gn{G}$. %
This property also holds if, to increase reliability, the swarm includes multiple replicas of machines projected from the same role (e.g., in the example above, multiple machines may play the forklift role $\rolename{FL}$). %
The formulation and proof of eventual fidelity are quite challenging, as they must take into account the shape of $\Gn{G}$, the swarm event propagation mechanics, and the \emph{subscriptions} of each machine (i.e., which events it sees).

\subparagraph{Our Objective: Compositional Design and Verification of Swarm Protocols.}
The previously-published theory \cite{DBLP:conf/ecoop/KuhnMT23} and implementation \cite{actyx,10.1145/3597926.3604917} of swarm protocols only support \emph{monolithic} swarm design and implementation: i.e., they do not support developing a large and complex
swarm by combining modular, reusable swarm protocols and swarms that are designed, implemented, and verified independently.

Consider the (simplified) protocol  $\Gn{Factory}$ shown in  \Cref{fig:Factory}  where a transport ($\rolename{T}$) $\cmdname{request}$s and picks a part, which is then utilised by {an assembly robot
($\rolename{A}$)} to $\cmdname{build}$ a car (emitting an event of type $\lgtn{car}$). Notably, the $\Gn{Factory}$ swarm protocol only specifies $\rolename{T}$'s events $\lgtn{partReq}$ and $\lgtn{partOK}$, and does not specify how the part is obtained (i.e., what happens ``in between'' $\rolename{T}$'s events); such a detail could be part of another swarm protocol --- such as $\Gn{Warehouse}$ (\Cref{fig:Warehouse}), where $\rolename{T}$ \ecoop{gets the part from a forklift} $\rolename{FL}$. Intuitively, the $\Gn{Factory}$ protocol is a high-level view of a production process, whereas the composition of $\Gn{Factory}$ and $\Gn{Warehouse}$ describes the complete process where both protocols run concurrently with the transport role acting as an interface between them.

Now, suppose that the $\Gn{Factory}$ protocol changes into an updated swarm protocol $\Gn{Factory'}$ (where, e.g., the assembly robot $\rolename{A}$ coordinates with other factory devices): it would be desirable to compose $\Gn{Factory'}$ and $\Gn{Warehouse}$ (unchanged), obtaining a complete, updated production process protocol. Correspondingly, the machines projected from the $\Gn{Warehouse}$ swarm protocol should be \emph{reusable:} for example, it should be possible to deploy the implementation of the door role $\rolename{D}$ from $\Gn{Warehouse}$ (shown in \Cref{fig:code_machine_door}) into any swarm that is ``compatible'' with $\Gn{Warehouse}$, without requiring manual changes depending on whether other machines implement the protocol $\Gn{Factory}$ or $\Gn{Factory'}$.

Unfortunately, this scenario cannot be addressed with swarm protocol results in literature. There is no method to design $\Gn{Factory}$ and $\Gn{Warehouse}$ as separate protocols and compose them at a later stage: a whole $\Gn{FactoryWithWarehouse}$ swarm protocol must be designed at once, and then used to project swarm machines. Moreover, if a detail of the ``factory'' part of the $\Gn{FactoryWithWarehouse}$ swarm protocol changes and leads to an updated protocol $\Gn{FactoryWithWarehouse'}$, the change may also impact the machines of the ``warehouse'' part --- e.g., the machine obtained by projecting $\Gn{FactoryWithWarehouse'}$ onto role $\rolename{D}$ may differ from the projection of $\Gn{FactoryWithWarehouse}$ onto $\rolename{D}$ in non-trivial ways, \ecoop{meaning that the implementation of the $\rolename{D}$oor machine in~\Cref{fig:code_machine_door} may not be reusable and must be manually revised}. %
As a consequence, developing new swarms, or adapting existing swarms to new requirements, is a challenging and time-consuming endeavour. 
It would be highly desirable to have methods and tools to assemble libraries of swarm protocols and corresponding ready-to-use swarm machines that can be safely and easily composed, reused, and deployed to tackle new scenarios. In this work, we introduce such methods and tools.

\subparagraph{Contributions and Outline of the Paper.}
We present a novel theory and implementation of compositional swarm design, development, and verification.
Our approach is based on \emph{interfacing roles}:
given $n$ swarm protocols $\G_1, \ldots, \G_n$ with suitable roles acting as interfaces, we define the composition $\G_1 \conc \ldots \conc \G_n$.
Also, given $n$ swarms $\S_{1}, \ldots, \S_{n}$ whose machines are projected from $\G_1, \ldots, \G_n$, we introduce a method to compose $\S_{1}, \ldots, \S_{n}$.
This allows for creating large and complex swarms by reusing and composing machines that were previously projected from the individual simpler swarm protocols $\G_1, \ldots, \G_n$.
The details behind this broad intuition are quite subtle, due to the highly dynamic nature of swarm protocols, swarms, and their properties %
-- and to achieve these results, we develop novel swarm-specific notions of \emph{well-formedness} and \emph{compositionality}.

In \Cref{sec:overview} we provide an overview of our approach.
\Cref{sec:swarm-protocols-composition} presents our new swarm protocol composition and well-formedness, with a method for computing
well-formed subscriptions. \Cref{sec:projecting-machines-comp-swarm-proto} defines a projection of swarm protocols onto machines,
a \emph{branch-tracking} semantics for such machines, and proves their eventual fidelity to well-formed swarm protocols. 
\Cref{sec:swarmcomposition} presents our results on correctly composing swarms which realise different
swarm protocols.
In \Cref{sec:implementation} we discuss the companion software artifact of this paper \cite{furbach_2026_18459720}: %
an extension of the open source Actyx toolkit that implements the theory of \Crefrange{sec:swarm-protocols-composition}{sec:swarmcomposition}
to allow the practical compositional development of swarms; %
we compare experimental results on different strategies for computing the subscriptions of composed swarms.
We discuss related work in \Cref{sec:related work} and conclude in \Cref{sec:conclusion}.

\section{Overview and Main Results}
\label{sec:overview}

Before presenting our development, in
\cref{sec:overview:swarms} we give an intuitive account of the swarm framework
in~\cite{DBLP:conf/ecoop/KuhnMT23}.
Then, in \Cref{sec:overview:approach} we
outline our approach and summarise our main results.

\subsection{Swarms and Swarm Protocols}
\label{sec:overview:swarms}

As mentioned in \Cref{sec:intro}, a \emph{swarm} is an ensemble of
\emph{machines} that can make progress independently without requiring
up-to-date global information or an always-active network connection.
Each machine $\afish$ maintains a \emph{local log} $\lgname{l}_{\afish}$ of
\emph{events} received from the surrounding swarm.
These events \ecoop{are unique} and have a globally agreed-upon total
order~\cite{burckhardt2014principles} (usually defined in a coordination-free manner using timestamps,
which we leave implicit) and are used by $\afish$ to update its state when
newly arriving events alter $\lgname{l}_{\afish}$.

A \emph{swarm protocol $\G$} provides a
global view of the intended interactions that should take place during the swarm
execution.
To ensure correct interaction at run-time, each swarm machine should play a
specific \emph{role $\role$} in $\G$.
To this end, a machine can be obtained by \emph{projecting} $\G$ onto $\role$; the
projection operation is based on the \emph{subscription} of role $\role$, i.e.,
which event types a machine playing role $\role$ observes while running.
E.g., the machine in \Cref{fig:Door} is obtained by projecting
the $\Gn{Warehouse}$ protocol in \Cref{fig:Warehouse} onto role
$\rolename{D}$ using a subscription that associates $\rolename D$ with
all the event types of the protocol except $\lgtn{pos}$ (emitted by
$\rolename{FL}$).
The event emission $\lgtn{closingTime}!$ in state $\state 0$ adds an event of type
$\lgtn{closingTime}$ to the machine's local log and
leaves
the machine in state $\state 0$, whereas accepting
an event of type $\lgtn{closingTime}$ from the log, denoted as
$\lgtn{closingTime}?$, transitions the machine to state $\state 3$.

For an intuitive presentation of the execution of a swarm realising
the $\Gn{Warehouse}$ protocol, let us consider a possible run of a
swarm of three machines ---$\M_\rolename{T}$, $\M_\rolename{FL}$, and
$\M_\rolename{D}$--- respectively implementing roles $\rolename{T}$,
$\rolename{FL}$, and $\rolename{D}$:
\begin{enumerate}%
\item Machine $\M_\rolename{T}$ performs a $\cmdname{request}$ and emits the event $\lgname{partReq_1}$ (of type $\lgtn{partReq}$), 
  which it adds to its local log $\lgname l_\rolename{T}$.
\item This event is asynchronously propagated to the local log $\lgname{l}_\rolename{FL}=\lgname{partReq_1}$ of $\M_\rolename{FL}$, thus making
   $\M_\rolename{FL}$ reach state 1 from where it delivers the part
	 and emits $\lgname{pos_X}$ (of type $\lgtn{pos}$).
  \item The event $\lgname{pos_X}$ reaches $\M_\rolename{T}$, which picks up the part 
	 and emits the event $\lgname{partOK_1}$ (of type $\lgtn{partOK}$).
  \item These events propagate to machine $\M_\rolename{D}$, who adds them to its local log
  $\lgname{l}_\rolename{D} =\mklog{\lgname{partReq_1},\lgname{pos_X},\lgname{partOK_1}}$.
  It then closes the door emitting the event $\lgname{8PM}$ (of type $\lgtn{closingTime}$).
\end{enumerate}

Remarkably, when deploying a swarm, multiple machines may assume the same role
in $\G$, in a way that generalises \emph{replicated state
machines}~\cite{DBLP:journals/csur/Schneider90}.
In our $\Gn{Warehouse}$ protocol for instance, other machines may play role
$\rolename{FL}$ in addition to $\M_\rolename{FL}$. Such a machine may concurrently deliver another part by emitting an event, say $\lgname{pos_Y}$;
eventually, all events propagate to all machines, and all machines
reach the same log.
For instance, the  log
	 $\mklog{\lgname{partReq_1},\lgname{pos_X},\lgname{pos_Y},\lgname{partOK_1},\lgname{8PM},%
   }$
  represents a case where event $\lgname{pos_Y}$ is emitted after
  event $\lgname{pos_X}$; therefore $\lgname{pos_Y}$ will be ignored
  by all machines.

Following~\cite{DBLP:conf/ecoop/KuhnMT23}, the key correctness
property of a swarm $\S$ with respect to a protocol $\G$ is
\emph{eventual fidelity}.
Intuitively, eventual fidelity means that once the events emitted by the
machines in $\S$ propagate to the whole swarm, all machines in $\S$
reach a consensus corresponding to a valid execution of $\G$.
Notably, \cite{DBLP:conf/ecoop/KuhnMT23} proves that eventual fidelity is guaranteed by construction whenever $\G$ is
\emph{well-formed} and the machines in $\S$ are correct projections of the roles in $\G$;
in other words, the swarm $\S$ does not need any runtime orchestration to be eventually faithful to $\G$.

\subsection{Achieving Compositional Swarms}
\label{sec:overview:approach}

Unfortunately, as discussed in \Cref{sec:intro}, the theory and results in~\cite{DBLP:conf/ecoop/KuhnMT23} do not feature any
form of compositionality and do not support modular swarm development.
In this work, our key contribution is a new approach enabling the compositional design,
implementation and verification of swarms.
This is a challenging endeavour that we tackle by introducing a brand new
notion of well-formedness that ($i$) ensures eventual fidelity of a swarm to a protocol,
and ($ii$) is preserved by composition (unlike the well-formedness in~\cite{DBLP:conf/ecoop/KuhnMT23}).
In \cref{fig:framework} we provide an overview of our approach
showing how its main constructions are related and how they interact;
the overview leverages the $\Gn{Warehouse}$ and $\Gn{Factory}$ composition use cases in \Cref{sec:intro}.

\tikzset{
  mycallout/.style={
	 fill=yellow!20, %
	 align=center,
	 rectangle callout
  },
  box/.style={
	 align = center,
	 text width = 2.5cm
  },
  merge/.style={
	 align = center,
	 fill=teal!30,
	 circle,
	 font = \tiny,
	 inner sep =1pt
  },
}
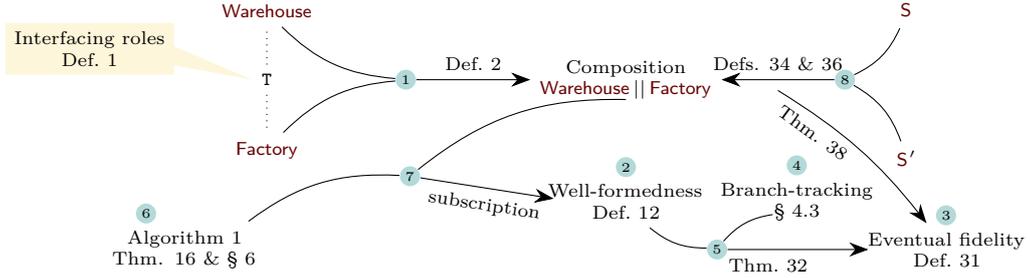
\begin{figure}[t]
  \centering
  \begin{tikzpicture}[node distance = .5cm and 1.5cm, >={Stealth[scale=1.5]}]
	 \node(G) {$\Gn{Warehouse}$};
	 \node(S) [right = 7.5cm of G] {$\S$};
	 \node(R) [below = of G] {$\rolename T$};
	 \node(G') [below = of R] {$\Gn{Factory}$};
	 \node(S') [below = of S, yshift = -1cm] {$\S'$};
	 \node(dummy) [merge,right = of R] {1};
	 \node(comp) [box, right = of dummy, inner sep = 0pt] {Composition \\ $\Gn{Warehouse} \conc \Gn{Factory}$}; %
	 \node(wf) [box, below = of comp, yshift = -.5cm] {Well-formedness \\ \Cref{def:wf}};
	 \node(bt) [box, right = -.5cm of wf] {Branch-tracking \\ \Cref{sec:branch-tracking-machines}};
	 \node(dummy2) [merge, below  = of wf.center, xshift = 1.2cm] {5};
	 \node(ef) [box, right = 1.5cm of dummy2] {Eventual fidelity \\ \Cref{def:eventual-fidelity}};
	 \node(alg) [box, left = of dummy2, xshift=-4cm] {Algorithm 1\\Thm. \ref{lem:subscription_composition_WF} \& \cref{sec:implementation}};
	 \node(dummy3) [merge, above right = of alg] {7};
	 \node(dummy4) [merge, right = 5.5cm of dummy] {8};

	 \node[mycallout, left = .5cm of G, yshift = -.5cm, callout absolute pointer={(R.west)}] {Interfacing roles\\\Cref{def:swarm-proto-interface}};
	 \node[merge, above = -1pt of wf] {2};
	 \node[merge, above = -1pt of ef] {3};
	 \node[merge, above = -1pt of bt] {4};
	 \node[merge, above = -1pt of alg, xshift = -.5cm] {6};

	 \path[dotted] (R) edge	 (G);
	 \path[dotted] (R) edge (G');
	 \path[] (G) edge[bend right] (dummy);
	 \path[] (G') edge[bend left] (dummy);
	 \path[] (S) edge[bend left] (dummy4);
	 \path (S') edge[bend right] (dummy4);
	 \path[->] (dummy) edge node[above]{\Cref{def:swarm-proto-composition}} (comp);
	 \path[] (wf) edge[bend right] (dummy2);
	 \path[] ($(bt) + (-9pt,-5pt)$) edge[bend right] (dummy2);
	 \path[->] (dummy2) edge node[below, near start, xshift = 2pt] {Thm.~\ref{thm:eventual_fidelity}} ($(ef.west) + (10pt,0)$);

	 \path[draw] ($(comp.south) + (-0pt, 0)$) edge[bend right] (dummy3);
	 \path[draw] ($(alg.north east) +(-15pt,0)$) edge[bend left] (dummy3);
	 \path[->] (dummy3) edge[sloped] node[below]{subscription} ($(wf.center) + (-27pt,1pt)$);
	 \path[->] (dummy4) edge node[above] (X) {Defs. \ref{def:machine-composition} \& \ref{def:machine-adaptation}} (comp);
	 \path[->] ($(X) + (0pt,-10pt)$) edge[sloped, bend left=10] node[below, near start] {Thm. \ref{lem:projection_composition_correct}} (ef);
  \end{tikzpicture}
  \caption{A synoptic view of our approach to swarm composition, with the main constructions and their dependencies.}
  \label{fig:framework}
\end{figure}

\begin{figure}[t]
  \centering
  \scalebox{\ifbool{ecoop}{0.75}{1}}{
    \begin{tikzpicture}[gt, scale=.8, transform shape, label distance=1mm, node distance=3mm and 22mm]
      \tikzset{pla/.style={circle,draw,fill=red!30}}
      \node (0) [pla]{$0\conc 0$};
      \node (begin) [left =0.5cm of 0] {};
      \node (1) [pla, right = of 0] {$1\conc 1$};
      \node (7)  [pla, below = of 1] {$3\conc 0$};
      \node (2) [pla,, right = of 1] {$2\conc 1$};
      \node (3)  [pla, right = of 2] {$0\conc 2$};
      \node (middle)  [right = of 3] {};
      \node (4)  [pla, above = of middle] {$0\conc 3$};
      \node (5)  [pla, below = of middle] {$3\conc 2$};
      \node (6)  [pla, right = of middle] {$3\conc 3$};

      \draw[->] (begin) to node[sloped, anchor=center, above=1mm]{}  (0);
      \draw[->] (0) to node[sloped, anchor=center, above=0.5mm]{$ \cstmcmd{T}{partReq}$}  (1);
      \draw[->] (1) to node[sloped, anchor=center, above=0.5mm]{$\cstmcmd{FL}{pos}$}  (2);
      \draw[->] (2) to node[sloped, anchor=center, above=0.5mm]{$ \cstmcmd{T}{partOK}$}  (3);
      \draw[->] (0) to node[sloped, anchor=center, below=0.5mm]{$  \cstmcmd{D}{closingTime}$}  (7);
      \draw[->] (3) to node[sloped, anchor=center, above=0.5mm]{$ \cstmcmd{A}{car}$}  (4);
      \draw[->] (3) to node[sloped, anchor=center, below=0.5mm]{$ \cstmcmd{D}{closingTime}$}  (5);
      \draw[->] (4) to node[sloped, anchor=center, above=0.5mm]{$  \cstmcmd{D}{closingTime}$}  (6);
      \draw[->] (5) to node[sloped, anchor=center, below=0.5mm]{$ \cstmcmd{A}{car}$}  (6);
  \end{tikzpicture}}
  \vspace{-2mm}%
  \caption{The composition  $\Gn{Warehouse} \conc \Gn{Factory}$ of  $\Gn{Warehouse}$ and $\Gn{Factory}$.}
  \label{fig:Composition}%
\end{figure}

\newcommand{\fmk}[1]{\tikz{\node[merge]{#1}}}

The pivotal notion upon which we develop our theory is
\emph{interfacing roles}, i.e., roles shared by protocols which
\quo{agree} on common event types.
For instance, the transport role $\rolename{T}$ is an interface
between the $\Gn{Warehouse}$ and $\Gn{Factory}$.
Our composition operator $\_ \conc \_$ (\fmk 1 in
\cref{fig:framework}) essentially acts as a synchronised product of
the behaviour of the protocols involved in the composition.
For instance, consider the composition
$\Gn{Warehouse} \conc \Gn{Factory}$ of $\Gn{Warehouse}$ and
$\Gn{Factory}$ shown in \cref{fig:Composition}.
All actions performed by the interfacing role $\rolename T$ (i.e.,
$\cstmcmd T {partReq}$ and $\cstmcmd T {partOK}$) are synchronised
 between $\Gn{Warehouse}$ and $\Gn{Factory}$ while actions of non-interfacing
roles (i.e., $\cstmcmd{A}{car}$ and $\cstmcmd{D}{closingTime}$)
are interleaved.
Note that ($i$) the interleaving of $\cstmcmd{A}{car}$ and
$\cstmcmd{D}{closingTime}$ shows that they are concurrent and ($ii$)
the loop in $\Gn{Warehouse}$ disappears from the composed protocol
since $\Gn{Factory}$ requires the delivery of just one part.

We develop a novel \emph{compositional} notion of well-formedness
for swarm protocols (\fmk 2).
This notion provides a sufficient condition for
statically verifying that a swarm is
{eventually faithful} to a protocol
(\fmk 3).
As part of this development, we refine the
semantics of machines w.r.t.~\cite{DBLP:conf/ecoop/KuhnMT23} by introducing a new \emph{branch-tracking} mechanism. %
Specifically, in our new machine semantics, each emitted event contains a pointer %
to the most recent \emph{branching}, \emph{joining}, or \emph{looping}
event (\cref{def:branching-event}) that precedes
it. This pointer establishes an \emph{event causality chain} and allows machines
to ignore events which are invalid because they are not caused by the expected branching/joining/looping event
(this is illustrated in \Cref{ex:bt-swarm-warehouse}).
Such a causality chain was not present nor needed in \cite{DBLP:conf/ecoop/KuhnMT23} 
-- but it is necessary for swarms to work correctly with our 
compositional well-formedness, which removes restrictions that were assumed in \cite{DBLP:conf/ecoop/KuhnMT23}.
Therefore, the new branch-tracking semantics allows us generalise the correctness results in~\cite{DBLP:conf/ecoop/KuhnMT23};
moreover, it allows us to reduce the size of subscriptions needed to correctly track protocol
choices (\fmk 4), thus increasing the efficiency of the swarm.
Crucially, %
well-formedness and branch tracking ensure that a swarm of machines projected from a protocol $\G$ is eventually faithful to $\G$ (\fmk 5).

Another contribution of this paper is the definition and implementation
of \Cref{alg:comp-swarm-subscriptions} (\fmk 6) to compute 
well-formed subscriptions for the composition of a set of input swarm protocols; %
crucially, \Cref{alg:comp-swarm-subscriptions} computes the subscriptions based on the individual structures of the input protocols
(e.g., $\Gn{Warehouse}$ and $\Gn{Factory}$)
without computing their composition (e.g., $\Gn{Warehouse} \conc \Gn{Factory}$) as it can be exponentially larger than the input protocols
(\fmk 7).

Finally, we develop a composition and adaptation of machines in a swarm, %
{%
  in such a way that the composed swarm (\fmk 8) is eventually faithful to the corresponding composition of swarm protocols.%
}%

\hide{
We now summarise our main contributions.

\begin{result}[Well-formedness of composed protocols]\label{result:comp-wf}%
  We introduce \Cref{alg:comp-swarm-subscriptions}
  (p. \pageref{alg:comp-swarm-subscriptions}) that provides an
  effective procedure to compute a subscription guaranteeing the
  well-formedness of the composition of some deterministic
  protocols\footnote{A protocol is deterministic when the event type
	 of each emission univocally determine the continuation of the
	 protocol.} if they are such that ($i$) every
  event type is emitted by only one role in any of the protocols and
  ($ii$) that different subterms of the protocols emit a same event
  type only due to concurrency.

  \Cref{alg:comp-swarm-subscriptions} and
  \Cref{lem:subscription_composition_WF} showing the correctness of
  our algorithm constitute our first main technical contribution.
\end{result}

As in~\cite{DBLP:conf/ecoop/KuhnMT23}, a canonical implementation of a
swarm protocol can be mechanically derived from the protocol itself by
projection. We however adapted the computational model of machines
in~\cite{DBLP:conf/ecoop/KuhnMT23} to incorporate a
\emph{branch-tracking} mechanism, which reduces the number of events
that a machine must observe to track protocol choices. Specifically,
each emitted event contains a pointer (akin to a \emph{causal link})
to the most recent \emph{branching}, \emph{joining}, or \emph{looping}
event that precedes it, thereby establishing a causality chain. When
processing events, a machine ignores those that do not point to the
expected event, effectively discarding events corresponding to invalid
executions of the protocol.
Our second main technical result ensures that correct implementation
of well-formed protocols can be obtained by projection.

\begin{result}[Eventual fidelity of projected machines]
  \label{result:wf-implies-ef}%
  \def\mproj{$\G\projshort{\subscription}{\role}$}
  Our notion of projection (cf.  \Cref{def:swarm-proto-projection})
  produces a machine \mproj\ for a role $\role$ out of a protocol $\G$ and a
  given subscription.
  More precisely, \mproj\ can emit an even having event type $\evt$ if
  $\G$ can reach $\G'$ with a sequence of event types \mproj\ does not
  subscribes to and $\role$ emits $\evt$ from $\G'$; likewise, \mproj\
  observes any event of type $\evt$ if it subscribes to $\evt$ and $\G$ can reach $\G'$ with a
  sequence of event types \mproj\ does not subscribes to and $\G'$
  can emit $\evt$.

  A swarm is eventually faithful to a protocol
  (\Cref{def:eventual-fidelity}) if the subscriptions are such that
  each machine in the swarm can disregard all and only the events that
  are  inconsistent with all the logs actually generated by the
  protocol.

  \Cref{thm:eventual_fidelity} ensures that eventual fidelity holds for subscriptions for which
   a swarm realises a well-formed protocol.
\end{result}

The third technical result extends compositionality to
implementations: a composed protocol does not need to be
re-implemented from scratch; it can rather be (automatically)
assembled from the existing realisations of the constituent
protocols.
Nonetheless, an existing machine may require adaptation to
handle some additional events (i.e., those events computed by
\Cref{alg:comp-swarm-subscriptions}).  We write
$\mathcal{A}\!\left( \M (\G_1,\dots,\G_n)\right)$ for the adaptation
of some existing machine in the context of the composition of
$\G_1,\dots,\G_n$.  We remark that such adaptation can be
automatically computed since it requires to extend the structure of
$\M$ to observe more events, while preserving its internal logic (The
full details are provided in \Cref{sec:machine-composition}).

\hsl[to be finished]

\begin{result}[Compositionality of swarms]\label{result:comp-verif}%
  We introduce a notion of machine composition \Cref{def:machine-composition}
  that allows the emission of an event of type $\evt$ in a composed machine, say $\M$,

  \Cref{def:machine-adaptation}

  \Cref{lem:projection_composition_correct}%

  Let  $\{\Gn{G}_i\}_{i \in I}$ be a set of protocols, and  $\{\S_i\}_{i \in I}$  a set of swarms such that $\S_i$ implements $\G_i$.  Define $\S$ as the swarm consisting of $\mathcal{A}\!\left(\M_i^j, (\G_1\mydots \G_n)\right)$ with  $\M_i^j \in \S_i$ for $i \in \{ 1\ldots n \}$.
  Then, $\S$ is eventually faithful to $\G_1 \conc \ldots \conc \G_n$.
\end{result}

\eMcomm[why are we repeating this?]{We implement the results above in the companion software artifact of this paper: a customised version of the Actyx toolkit~\cite{actyx}
that we discuss in \Cref{sec:implementation}.
}%
\todo{Maybe say something about the efficiency of the new mechanism}
}

\section{Composing Swarm Protocols}
\label{sec:swarm-protocols-composition}

In this section we introduce the composition operation on swarm protocols
(\Cref{sec:swarm-protocols-composition:definitions})
and our new notion of \emph{well-formedness} (\Cref{sec:well-formedness}).
We then provide an effective method for computing subscriptions that
guarantee the well-formedness of a protocol with respect to those
subscriptions (\Cref{sec:swarm-protocols-composition-computing-sub}
and \Cref{lem:subscription_composition_WF}).
First, as in~\cite{DBLP:conf/ecoop/KuhnMT23}, we formalise a \emph{swarm protocol}
(or simply \emph{protocol}) $\agt$ as a regular term from the
following coinductive grammar:\footnote{%
  A term is regular if it consists of finitely many \emph{distinct}
  subterms. This condition ensures that the language generated by the
  co-inductive grammar is finitely representable either using the
  so-called \quo{$\mu$ notation}~\cite{Pierce02} or as solutions of
  finite sets of equations~\cite{Courcelle83}. Also, there is a
  correspondence between these structures and finite-state automata;
  the interested reader is referred to~\cite{Courcelle83}.}%

\smallskip%
\centerline{\(
  \agt \;\stackrel{\text{co}}{\bnfdef}\; \sum_{i\in I} \cmdindx{i} \cdot \agt[i]
  \qquad \text{where $\arole[1]$, $\arole[2]$, etc. range over roles}
\)}%
\smallskip%

\noindent%
The protocol $\agt$ above specifies that (any machine implementing) role
$\arole[i]$ is expected to emit an event of type $\aeventtype[i]$ (for some $i \in I$)
and let the protocol continue as $\agt[i]$.
The summation operator $\sum$ represents a nondeterministic choice, so the order of summands is immaterial.
We write $\agt=\zero$ if $I$ is empty.
For readability we use the infix notation for summations; for
instance, for the protocol $\Gn{Warehouse}$ in
\Cref{fig:Warehouse} we write:

\smallskip%
\centerline{\(
\Gn{Warehouse}=
\cstmcmd{T}{partReq} \cdot \cstmcmd{FL}{pos} \cdot
\cstmcmd{T}{partOK} \cdot \Gn{Warehouse} \;+\; \cstmcmd{D}{closingTime}
\cdot \zero
\)}%

\subsection{Protocol Interface and Composition}
\label{sec:swarm-protocols-composition:definitions}

We introduce some convenient
notation: given a protocol
$\agt = \sum_{i\in I} \cmdindx{i} \mydot \agt[i]$, we write
$\cmd \in \agt$ if $\cmd$ occurs anywhere in $\agt$ (and likewise for
$\evt \in \agt$, and $\role \in \agt$). %
{%
}%

In \Cref{def:swarm-proto-interface} below we formalise when two protocols \emph{interface};
intuitively, this happens when event types common to the protocols
are emitted by common roles only.

\begin{definition}[Protocol interface]
  \label{def:swarm-proto-interface}%
  Two protocols $\agt$ and $\agt'$ are \emph{interfacing} iff for any $\cmd\in \G$ and
  $\cstmcmd{R'}{t} \in \G'$, it holds that $\role=\role'$. The  set of \emph{interfacing roles} is
  $\left\{\role \;\middle|\; \role \in \G \land \role \in
  \G'\right\}$. %
  When $\cstmcmd{R}{t} \in \G$ and $\role$ is an interfacing role, we say $\evt$ is an \emph{interfacing event type}.
\end{definition}

\Cref{def:swarm-proto-composition} below formalises a protocol composition
operator $\_ \conc \_$ that allows two protocols to run independently (items 1 and 2) -- except for the
\emph{synchronisations} imposed by the events from their interfacing roles, if any (item 3).

\begin{definition}[Protocol composition]\label{def:swarm-proto-composition}%
  Given protocols
  $\agt = \sum_{i\in I} \cmdindx{i}\mydot \agt[i]$ and
  $\agt' =\sum_{j \in J} \cstmcmd{\role'_j}{\evt'_j}\mydot \G'_j$ as well as a
  set of roles $\roleset$, we coinductively define:
\[
\begin{array}{rcl@{\hspace{4em}}r}
	\agt \concM{\roleset} \agt' &\stackrel{co}{\define}& \sum \{ \cmdindx{i}\cdot (\agt[i] \concM{\roleset} \agt') \sst i \in I, \role_i \notin \roleset\} & (1) \\
	& + &\sum \{ \othercstmcmd{\role'_j}{\evt'_j}\cdot (\agt \concM{\roleset} \G'_j) \sst j \in J, \role'_j \notin \roleset\} & (2) \\
	& + &\sum \{ \cmdindx i \cdot (\agt[i] \concM{\roleset} \G'_j) \sst i \in I, j \in J, \cmdindx i = \othercstmcmd{\role'_j}{\evt'_j}\} & (3)
\end{array}
\]
We let $\_\conc\_$ be left-associative and simply write $\G \conc \G'$
instead of $\agt \concM{\roleset} \agt'$
when $\roleset$ is the set of roles that $\G $ and $\G'$ interface on.%
\end{definition}
By construction, $\_\conc\_$ is an idempotent and commutative internal operation on
protocols.
Note that we do not introduce new protocol-level syntax for composed
protocols: more precisely, the result of $\G \conc \G'$ is a
protocol obtained by interleaving event types that $\G$ and
$\G'$ do not share while synchronising shared event types.\footnote{%
  This design decision allows us us to maintain backward compatibility
  with existing tools (see \Cref{sec:implementation}).
}

\begin{example}\label{ex:composition}
  The swarm protocol composition $\Gn{Warehouse} \conc \Gn{Factory}$ informally depicted
  in \Cref{fig:Composition} is obtained by expanding
  $\Gn{Warehouse} \concM{\{\rolename{T}\}} \Gn{Factory}$:
  \begin{align*}
	 \Gn{Warehouse} \conc \Gn{Factory} &\;=\;\gsumprefixnewnew[][\lgtn{partReq}][\rolename{T}]\gtpref \gsumprefixnewnew[][\lgtn{pos}][\rolename{FL}]\gtpref\gsumprefixnewnew[][\lgtn{partOK}][\rolename{T}]\gtpref  & &\\
												  & \qquad\quad\big( \gsumprefixnewnew[][\lgtn{car}][\rolename{A}]\gtpref\gsumprefixnewnew[][\lgtn{closingTime}][\rolename{D}]\gtpref\zero +\,\gsumprefixnewnew[][\lgtn{closingTime}][\rolename{D}]\gtpref\gsumprefixnewnew[][\lgtn{car}][\rolename{A}]\gtpref\zero \big)&\\
												  & \quad\;\;+\; \gsumprefixnewnew[][\lgtn{closingTime}][\rolename{D}]\gtpref\zero
    \ \\[-1.4cm]
  \end{align*}
\end{example}

In general, when two swarm protocols are composed, only those interfacing event
types emitted in both protocols in the same order are included in the
composition; if interfacing event types are missing in a component protocol or
emitted in different orders, then the composition excludes behaviour present in
the component protocols, as shown in \Cref{ex:behaviour_restriction}.\footnote{%
  Such behaviour restrictions are not consequential for our theoretical
  development. If undesired, they can be identified by checking whether the
  relevant transitions of $\G$ and $\G'$ remain present in %
  $\G \conc \G'$.%
}%

\begin{example}[Protocol composition and behaviour restrictions]\label{ex:behaviour_restriction}
  Consider the swarm protocol $\Gn{Warehouse}$ in \Cref{fig:Warehouse}, and\; %
  \(
  \Gn{Factory'} =
  \gsumprefixnewnew[][\lgtn{partOK}][\rolename{T}]\gtpref\gsumprefixnewnew[][\lgtn{partReq}][\rolename{T}]
  \gtpref \gsumprefixnewnew[][\lgtn{car}][\rolename{A}] \gtpref \zero
  \).

  By \Cref{def:swarm-proto-composition}, the composition $\Gn{Warehouse} \conc \Gn{Factory'}$
  yields\;
  $\gsumprefixnewnew[][\lgtn{closingTime}][\rolename{D}]\gtpref\zero$
  \;because the interfacing role $\rolename T$ in $\Gn{Factory'}$ emits
  an event of type $\lgtn{partOK}$ before those of type $\lgtn{partReq}$
  -- whereas $\Gn{Warehouse}$ follows the opposite order.
  As another example, letting $\roleset = \{\rolename{T}\}$, we have
  $\Gn{Warehouse} \concM{\roleset} \zero =
  \gsumprefixnewnew[][\lgtn{closingTime}][\rolename{D}]\gtpref\zero$
  -- because, by \Cref{def:swarm-proto-composition}, $\roleset$
  \quo{blocks} the actions of role $\rolename{T}$ in $\Gn{Warehouse}$ as it cannot
  synchronise with $\zero$.
\end{example}

\subsection{Well-Formedness of Swarm Protocols and Subscriptions}
\label{sec:well-formedness}

We now introduce our new notion of \emph{well-formedness} (\Cref{def:wf}),
which builds upon the following three properties:
\emph{confusion-freeness} (\Cref{def:confusion-freeness}),
\emph{causal consistency} (\Cref{def:causalconsistency}), and
\emph{determinacy} (\Cref{def:determinacy}).
Although these three properties have the same name as those
in~\cite{DBLP:conf/ecoop/KuhnMT23}, their definitions deviate
significantly to accommodate compositionality. %
Crucially, our well-formedness is more general than that
of~\cite{DBLP:conf/ecoop/KuhnMT23}
(except in some minor corner cases discussed later in \Cref{footnote:outliers}),
hence we do not lose expressiveness.

Before proceeding, we introduce some additional notation. %
Given $\agt =\sum_{i\in I} \cmdindx{i} \mydot \agt[i]$, the predicate
$\agt \xrightarrow{\cstmcmd{\role_i}{\evt_i}} \agt[i]$ holds for all $i \in I$
and indicates that some role $\role_i$ can emit an event of type $\evt_i$ causing the swarm protocol $\G$ to continue as $\G_i$.
We write $\agt \xrightarrow{\evt} \agt$ instead of $\agt \xrightarrow{\cstmcmd{\role}{\evt}} \agt$ when the role $\role$ is immaterial.
We sometimes write $\agt \xrightarrow{\cstmcmd{\role}{\evt}}$ when
there is $\agt'$ such that
$\agt \xrightarrow{\cstmcmd{\role}{\evt}} \agt'$, and similarly for
$\agt \xrightarrow{\evt}$. %
We write $\agt \subterm \agt'$ if $\agt$ is a \emph{subterm} of $\agt'$. 
We shorten
$\G \xrightarrow{\evt_1} \cdots \xrightarrow{\evt_n} \G'$
to $\G \xrightarrow{\evt_1 \ldots \evt_n} \G'$; therefore,
given a sequence of event types $\lgt = \evt_1 \ldots \evt_n$, we write $\G \xrightarrow{\lgt} \G'$
if $\G \xrightarrow{\evt_1 \ldots \evt_n} \G'$.

By \Cref{def:confusion-freeness} a protocol is \emph{confusion-free} if:
\eqref{item:confusion-free:evt-unique-role} every event type is
emitted by only one role;
\eqref{item:confusion-free:evt-deterministic} event types are deterministic,
i.e., an event emision leads to only one possible continuation;
and \eqref{item:confusion-free:concurrecy} the same event type can be
emitted by different subterms only due to concurrency, i.e., any
subterm accounts for a different interleaving of independent event
types.\footnote{%
  Item~\ref{item:confusion-free:concurrecy} of
  \Cref{def:confusion-freeness} can be expensive to check on an
  arbitrary $\G$, but is immediate when the appropriate
  $\G_1,\ldots,\G_n$ are given upfront (which is the common case when
  composing protocols).%
}

\begin{definition}[Confusion-Freeness]
  \label{def:invariance}%
  \label{def:confusion-freeness}
  A protocol $\agt$ is \emph{confusion-free} if:
    \begin{enumerate}
	 \item\label{item:confusion-free:evt-unique-role} For every $\evt \in \agt$, if
		$\cstmcmd{R}{\evt}, \cstmcmd{R'}{\evt} \in \G$, then
		$\rolename{R} = \rolename{R'}$.
	 \item\label{item:confusion-free:evt-deterministic} For every
	 	$\agt[1],\agt[2],\agt[3] \subterm \agt$ and
		$\evt \in \agt$, if
		$\agt[1] \xrightarrow{\evt} \agt[2]$ and
		$\agt[1] \xrightarrow{\evt} \agt[3]$ then $\agt[2] = \agt[3]$.
	 \item\label{item:confusion-free:concurrecy} There are
		$\G_1\mydots \G_n$ for $n \geq 1$ such that
		$\G=\G_1 \conc \ldots \conc \G_n$ and for every $\evt \in \agt$
		and $i \in \{1, \ldots, n\}$, there is at most one subterm
		$\G' \subterm \G_i$ such that $\G' \xrightarrow{\evt}$.
    \end{enumerate}
\end{definition}

In \Cref{def:concurrency} below we characterise four categories of event types in a confusion-free protocol:
(1) \emph{concurrent} event types (i.e., those that are causally independent, see \Cref{fig:joining-event-types});
(2) \emph{branching} event types (i.e., those that introduce decision points, see \Cref{fig:branching-event-types});
(3) \emph{joining} event types (i.e., those that synchronize multiple concurrent branches, see \Cref{fig:joining-event-types});
(4) \emph{looping} event types (i.e., those that are repeated due to loops, see \Cref{fig:looping-event-types}).
Notice that these categories are not mutually exclusive:
for instance, an event type can be both joining and branching, or neither.

\begin{definition}[Concurrent, Branching, Joining, and Looping Event Types]
  \label{def:concurrency}
  \label{def:simple-event}\label{def:branching-event}\label{def:joining-event}\label{def:looping-event}%
	Let  $\agt$ be a confusion-free protocol. An event type $\evt$ is:
	\begin{enumerate}
	\item {\emph{Concurrent with a $\evt' \neq \evt$}} if there exist
	  $\agt[1], \agt[2] \subterm \agt$ such that
	  $\agt[1] \xrightarrow{\evt} \xrightarrow{\evt'} \agt[2]$ and
	  $\agt[1] \xrightarrow{\evt'} \xrightarrow{\evt} \agt[2]$.%
	  \footnote{Note that our definition of ``concurrent events'' together with confusion-freeness (\Cref{def:confusion-freeness}) ensure that for all $\G_a,\G_b \subterm \G$ and $\evt_a,\evt_b \in \G$ where $\evt_a,\evt_b$ are concurrent and $\G_a \xrightarrow{\evt_a}\xrightarrow{\evt_b}\G_b$, we also have $\G_a \xrightarrow{\evt_b}\xrightarrow{\evt_a}\G_b$.}%
	\item {\emph{Branching with a $\evt' \neq \evt$ at $\G_1 \subterm \G$ }}
	  if there are {$\agt[2],\agt[3] \subterm \agt$
		 with $\agt[1] \xrightarrow{\evt}\agt[2]$,
		 $\agt[1] \xrightarrow{\evt'}\agt[3]$ such that $\G_2 \neq \G_3$} and $\evt, \evt'$ are not
	  concurrent.
			\item \emph{Joining for $\evt'$ and $\evt''$ at $\G_3$} if there are
			$\agt[1], \agt[2], \subterm \G$ such that
			$\agt[1] \xrightarrow{\evt'} \agt[3]$,
			$\agt[2] \xrightarrow{\evt''} \agt[3]$, and
			$\agt[3] \xrightarrow{\evt}$ where $\evt'$ and $\evt''$ are concurrent,
			but neither of them is concurrent with $\evt$.
            \item \emph{Looping} if there are $\G' \subterm \G$ and $\lgt$ such that $\G' \xrightarrow{\lgt}\xrightarrow{\evt} \G'$.
	    \end{enumerate}
\end{definition}

\tikzset{
	 initial/.style={state,initial by arrow, initial text={}, initial where=left}
  }
\begin{figure}[tb]
\begin{minipage}[t]{0.35\textwidth}
  \centering
  \begin{tikzpicture}[gt, scale=.85, transform shape, label distance=1mm, node distance=1mm and 9mm]
	\node (0) [initial,pla]{};
	\node[state] (1) [pla, above right = of 0] {};
	\node (2)  [pla, below right = of 0] {};
	\node (3) [pla, below right = of 1] {};
	\node (4)  [pla, right = of 3] {};

	\draw[->] (0) to node[sloped, anchor=center, above]{$ \cstmcmd{R'}{t'}$}  (1);
	\draw[->] (1) to node[sloped, anchor=center, above]{$ \cstmcmd{R''}{t''}$}  (3);
	\draw[->] (0) to node[sloped, anchor=center, below]{$ \cstmcmd{R''}{t''}$}  (2);
	\draw[->] (2) to node[sloped, anchor=center, below]{$ \cstmcmd{R'}{t'}$}  (3);
	\draw[->] (3) to node[sloped, anchor=center, above]{$ \cstmcmd{R}{t}$}  (4);
  \end{tikzpicture}
  \vspace{-4mm}
  \captionof{figure}{$\lgtn{t'}$ and $\lgtn{t''}$ are concurrent; $\evt$ is joining for $\lgtn{t'}$ and $\lgtn{t''}$.}%
  \label{fig:joining-event-types}
\end{minipage}
\begin{minipage}[t]{0.29\textwidth}
  \centering
  \begin{tikzpicture}[gt, scale=.85, transform shape, label distance=1mm, node distance=1mm and 9mm]
	\node (0) [initial,pla]{};
	\node (1) [pla, above right = of 0] {};
	\node (2)  [pla, below right = of 0] {};

	\draw[->] (0) to node[sloped, anchor=center, above]{$ \cstmcmd{R}{t}$}  (1);
	\draw[->] (0) to node[sloped, anchor=center, below]{$ \cstmcmd{R'}{t'}$}  (2);
  \end{tikzpicture}
  \vspace{-4mm}
  \captionof{figure}{$\evt$ is branching with $\evt'$.}
  \label{fig:branching-event-types}
\end{minipage}\hfill
\begin{minipage}[t]{0.32\textwidth}
	\centering
	\begin{tikzpicture}[gt, scale=.85, transform shape, label distance=1mm, node distance=1mm and 9mm]
			\node (0) [initial,pla]{ };
			\node (1) [pla, right = of 0] {};

			\draw[->, bend left] (0) to node[sloped, anchor=center, above, bend left]{$ \cstmcmd{R}{t}$}  (1);
			\draw[->, bend left] (1) to node[sloped, anchor=center, below, bend right]{$ \cstmcmd{R'}{t'}$}  (0);
		\end{tikzpicture}
	\vspace{-4mm}
	\captionof{figure}{$\evt$ and $\evt'$ are looping.}
	\label{fig:looping-event-types}
\end{minipage}\hfill
\end{figure}

A subscriptions $\subscription$ (\Cref{def:subscription} below) specifies which roles observe which event types; 
\Cref{def:causalconsistency} then defines when a swarm protocol $\G$ is causally consistent w.r.t.~ a subscription $\subscription$.
Intuitively, causal consistency enforces two key requirements.
First, each role must subscribe to the event types it emits: this will be necessary
later, when we define how to \emph{project} machines from $\G$ based on $\subscription$.\footnote{%
  As mentioned in \Cref{sec:overview:swarms}, and later formalised in
  \Cref{def:operational-semantics-machines}, if a machine playing role $\role$
  emits an event of type $\evt$, that machine will \emph{not} change state unless it
  has an accepting transition for $\evt$. Subscribing $\role$ to $\evt$ ensures that the projection of a
  swarm protocol onto its roles (using \Cref{def:swarm-proto-projection}) will
  produce machines that can change state by accepting every event type they
  emit.%
} %
Second, to follow the causal dependencies specified in the protocol, roles must subscribe to the event types directly preceding
the events they emit.

\begin{definition}[Subscription]
  \label{def:subscription}%
  A \emph{subscription} $\subscription$ is a mapping from roles to sets of event types;
  i.e., $\subscription(\role)$ is the set of event types that role $\role$ subscribes to.
  {%
  }%
\end{definition}

\begin{definition}[Causal-consistent swarm protocol]
  \label{def:causalconsistency}
  A protocol \emph{$\G$ is causal-consistent for a subscription $\subscription$}
  if, for all $ \evt', \cstmcmd{\role}{\evt} \in \G$:
  \begin{enumerate}
  \item $\evt \in \subscription (\role)$ and
  \item $\evt' \in \subscription(\role)$ if $\evt'$ and $\evt$ are not
	 concurrent and there is $\G_1\subterm \G$ such that
	 $\G_1 \xrightarrow{\evt'} \xrightarrow{\cstmcmd{\role}{\evt}}$.
  \end{enumerate}
\end{definition}

\begin{example}[Causal consistency]\label{ex:causal-consistency}%
  Consider the $\Gn{Factory}$ protocol in \Cref{fig:Factory}.
  By causal consistency, the subscription $\subscription(\rolename{A})$
  of the $\rolename{A}$ssembly robot must include event types $\lgtn{car}$ and
  $\lgtn{partOK}$ -- respectively by the first and second condition of \Cref{def:causalconsistency}.
\end{example}

\Cref{def:determinacy} below characterises subscriptions of \emph{determinate}
protocols, namely those where each role $\role$ subscribes to the events that
are crucial for tracking causality in a swarm protocol $\G$.
Intuitively, depending on the category of an event type $\evt$ in $\G$ (branching, joining, or looping), \Cref{def:determinacy}
require that ($i$) role $\role$ must subscribe to any event types that are branching with $\evt$ (clause \emph{Branching}), ($ii$) $\role$ must
subscribe to 
{%
  any concurrent event types that immediatly precede $\evt$ %
}%
(clause \emph{Joining}), and ($iii$) each role $\role$ must observe each loop iteration,
hence there must be at least one event type in each loop that all roles subscribe to (clause \emph{Looping}).
\Cref{def:determinacy} relies on the set $\newroles[\evt][\G][\subscription]$, which is the set of roles that
subscribe to event types in $\G$ that causally depend on $\evt$ according to $\subscription$.
Formally,
\begin{align*}
	\newroles[\evt][\G][\subscription] = \{ \role \sst &
	\text{there are } n \geq 0, \; \evt_0, \ldots, \evt_n, \; \lgt_1, \ldots, \lgt_n  \text{ such that: }
	\\&\evt_0 = \evt, \; \G \xrightarrow{\evt_0} \xrightarrow{\lgt_1} \xrightarrow{\evt_1} \cdots \xrightarrow{\lgt_n} \xrightarrow{\evt_n}
	\text{ and }
	\\& \evt_n \in \subscription(\role), \text{ and }\evt_i, \evt_{i+1} \text{ not concurrent for all }0 \leq i < n \}.
\end{align*}

\begin{definition}[Determinate protocol subscriptions]\label{def:determinacy}%
  A protocol \emph{$\G$ is determinate for subscription $\subscription$} if,
  for all $\G' \subterm \G$ and $\evt, \role \in \G$
  \begin{description}
  \item[Branching:] If $\evt$ is branching with $\evt'$ at $\G'$ and
    $\role \in \newroles[\evt][\G']$, then
    $\evt,\evt' \in \subscription(\role)$.
  \item[Joining:] If $\evt$ is joining for $\evt'$ and $\evt''$ at $\G'$ and
    $\role \in \newroles[\evt][\G']$, then
    $\evt, \evt', \evt'' \in \subscription(\role)$.
  \item[Looping:] If $\G' \xrightarrow{\lgt} \G'$ for some non-empty $\lgt$
    then there are $\evt' \in \lgt$ and $\G'' \subterm \G'$ such that
    $\G'' \xrightarrow{\evt'}$ and $\evt' \in \subscription(\role)$ for all
    $\role \in \newroles[\evt'][\G'']$. We call $\evt'$ looping in $\subscription$.
  \end{description}
\end{definition}

\begin{example}[Branching vs.~looping events]
  In \Cref{fig:Warehouse}, the loop
  $ \Gn{Warehouse} \xrightarrow{\lgtn{partReq}} \xrightarrow{\lgtn{pos}}
  \xrightarrow{\lgtn{partOK}} \Gn{Warehouse} $
  can be exited by event type $\lgtn{closingTime}$, which branches with
  $\lgtn{partReq}$. Consequently, subscribing to $\lgtn{partReq}$ and
  $\lgtn{closingTime}$ is required by clause \emph{branching} of
  \Cref{def:determinacy} -- and the subscription to $\lgtn{partReq}$ also
  satisfies the clause \emph{looping}.
  In general, if a protocol contains a loop with a branching that exits the
  loop, then each branching event type that continues the loop is also a
  looping event type in the subscription (by \Cref{def:looping-event}) and
  subscribing to it satisfies clause \emph{looping} of \Cref{def:determinacy}.
\end{example}

We now have all the ingredients to formalise when a swarm protocol is well-formed.

\begin{definition}[Well-formed swarm protocols]
  \label{def:termination}
  \label{def:wf}%
  A protocol $\agt$ is \emph{$\subscription$-well-formed} if it is
  confusion-free, causal-consistent in $\subscription$, and determinate for
  $\subscription$
  (\Cref{def:confusion-freeness,def:determinacy,def:causalconsistency}).
\end{definition}

\begin{example}[Well-formed protocol and subscription]
  The protocol $\Gn{Warehouse} \conc \Gn{Factory}$ in \Cref{ex:composition}
  and \Cref{fig:Composition} is confusion-free (by
  \Cref{def:confusion-freeness}). Now, take the assembly robot $\rolename{A}$ and a subscription $\subscription$
  such that $\Gn{Warehouse} \conc \Gn{Factory}$ is $\subscription$-well-formed
  (by \Cref{def:wf}): rule (1) of causal consistency
  (\Cref{def:causalconsistency}) requires
  $\lgtn{car} \in \subscription(\rolename{A})$, implying
  $\rolename{A} \in \newroles[\lgtn{partReq}][{\Gn{Warehouse} \conc \Gn{Factory}}]$;
  therefore, from the \emph{branching} rule of determinacy
  (\Cref{def:determinacy}), we have
  $\lgtn{partReq}, \lgtn{closingTime} \in \subscription(\rolename{A})$. The
  \emph{joining} and \emph{looping} rules of \Cref{def:determinacy} hold
  vacuously, as there are no joining event types or loops in
  $\Gn{Warehouse} \conc \Gn{Factory}$.%
\end{example}

\subsection{On Computing Subscriptions}
\label{sec:swarm-protocols-composition-computing-sub}
\label{sec:constructwfsub}

\noindent%
This section introduces \Cref{alg:comp-swarm-subscriptions} for computing subscriptions that guarantee
well-formedness (by \Cref{def:wf}) for a composition $\G_1 \conc \ldots \conc \G_n$ of given swarm protocols $\G_1 \ldots \G_n$. %
In general, for any confusion-free swarm protocol $\G$ there is at least one
subscription $\subscription$ such that $\G$ is $\subscription$-well-formed:
it is the \quo{total} subscription where each role in $\G$ subscribes to all event types in $\G$.
However, if such a \quo{total} subscription is used to project machines from
$\G$ (using \Cref{def:swarm-proto-projection} later on), then the resulting swarm would consist of complex machines that await
the emission of all events emitted by all other machines, leading to inefficient executions.
It is hence worthwhile to compute smaller subscriptions that preserve
well-formedness.
Importantly, for a composition $\G_1 \conc \ldots \conc \G_n$ of $n$ \emph{composable} protocols (by \Cref{def:composable} below),
our \Cref{alg:comp-swarm-subscriptions} operates on the individual input protocols without expanding their composition (\Cref{def:swarm-proto-composition}),
because the size of the expanded protocol may grow exponentially with $n$ (as shown later in \Cref{sec:implementation:experiments}).

\begin{definition}[Composable swarm protocols] 
  \label{def:composable}
  A protocol is \emph{sequential} if it
  does not contain concurrent events. A set of protocols is \emph{composable} if they
  are pairwise interfacing and each protocol is sequential and confusion-free.
\end{definition}

\Cref{alg:comp-swarm-subscriptions} takes a set $\{\G_i\}_{i\in I}$ of
composable protocols together with a set $\{\subscription_i\}_{i\in I}$ of
subscriptions, and generates a subscription $\subscription$ such that
$\G = \G_1 \conc \ldots \conc \G_n$ is $\subscription$-well-formed and
$\subscription_i \subseteq \subscription$ holds for all $i \in I$.
\begin{algorithm}[t]
  \begin{tabular}{@{}l l@{}}
    \textbf{Input:}
    &
      A set $\{\G_1 \mydots \G_n\}$  of composable protocols and a set of subscriptions $\{\subscription_1\mydots \subscription_n\}$
    \\
    \textbf{Output:}
    &
      A subscription $\subscription \supseteq \subscription_1\mydots \subscription_n$ such that $ \G_1 \conc \ldots \conc \G_n$ is $\subscription$-well-formed.
    \\
  \end{tabular}

   	{\small
	\begin{algorithmic}
		\STATE Let $\oldroles$ denote the roles that subscribe to some event type of $\G$ in $\subscription$.
		\STATE $\subscription  \define \bigcup_{i\leq n} \subscription_i $%
		\STATE $\ifr \define \{ \role \mid \exists i,j\leq n : i\neq j\land \role \in \G_i \land \role \in \G_j\}$ %
		\STATE $\concSet \define \{ \{ \evt_i,\evt_j \}\mid i, j\leq n,\, \cmdindx{i}\in \G_i,\, \cmdindx{j}\in \G_j,\, \role_i \notin \G_j,\, \role_{j} \notin \G_i \}$

		\REPEAT
		\FORALL{ $i\leq n$, $\G'_i,\, \G''_i \subterm \G_i$, and $\role,\, \role',\, \evt,\, {\evt'} \in \G_i$}
		\STATE \textbf{Causal Consistency:} If $\G'_i \xrightarrow{\cmd}$, then add $\evt $ to $\subscription(\role)$; If $\G'_i \xrightarrow{\evt'} \xrightarrow{\cmd}$, then add $\evt' $ to $ \subscription (\role)$.
 		\STATE \textbf{Branching:} If $\evt $ is branching with $\evt'$ at $\G'_i$ and $\role \in \oldroles[\G'_i]$, then add $\evt, \evt' $ to $ \subscription(\role)$.
 		\vspace{-1.2em}
 		\STATE \textbf{Joining:} For all $j\leq n$, $\G'_j \subterm \G_j$, and $\evt'' \in \G_j$ such that $\G'_i \xrightarrow{\evt'} \xrightarrow{\evt}$, $\G'_j \xrightarrow{\evt''} \xrightarrow{\evt}$, $\{\evt',\evt''\}\in \concSet$, $\{\evt',\evt\}\not\in \concSet$, $\{\evt'',\evt\}\not\in \concSet$, and $\role\in \oldroles[\G'_i]$: Add $\evt',\, \evt'',\, \evt $ to $ \subscription (\role)$.
 		\STATE \textbf{Interfacing:} If $\G'_i \xrightarrow{\cmd} \G''_i$ with $\role \in \ifr$ and $\role' \in \oldroles[\G''_i]$, then add $\evt $ to $ \subscription (\role')$. %
		\ENDFOR
		\UNTIL $\subscription$ no longer changes
		\FORALL{ $i\leq n$, $\G'_i \subterm \G_i$, such that $\G'_i \xrightarrow{\evt \logcat \lgt} \G'_i$ for some $\evt$ and $\lgt$, and $\role \in \oldroles[\G'_i]$}
		\STATE %
		\textbf{Looping:} If $\G'_i$ has no looping event type in $\subscription$, then add $\evt $ to $ \subscription (\role)$.
		\ENDFOR
	\end{algorithmic}}
\caption{Computing a subscription for a composition of protocols.}\label{alg:comp-swarm-subscriptions}
\end{algorithm}

The algorithm initialises three sets:
\begin{itemize}
\item $\subscription$, initially approximating the result as the union of the 
  given $\subscription_i$. Each $\subscription_i$ can be understood as the
  specification of the event types that each role in $\G_i$ is interested in.
\item $\ifr$ containing all interfacing roles.
\item $\concSet$ containing all potentially concurrent event types, i.e.,
  the pairs of event types belonging to different components that are emited
    by non-interfacing roles.%
\end{itemize}
The algorithm iteratively enlarges $\subscription$ by applying monotone
operations corresponding to the definitions of causal-consistency
(\Cref{def:causalconsistency}) and \emph{branching} and \emph{joining} of
determinacy (\Cref{def:determinacy}), together with the new operator
\emph{interfacing}, which is crucial for ensuring the correct synchronisation
of the protocols (as will be discussed below).
The iteration terminates when a fixed point is reached.
The final phase ensures that every role that is involved in a term that
contains a loop $\lgt$ is subscribed to (at least) one looping event of
$\lgt$.%

\longer{%
  \begin{example}[Key Insight]\label{ex:separating_interface}
    The composition of $\G_1$ and $\G_2$ below leads to a simple sequence with
    one interfacing event type.

    $$\G_1 = \cmdindx{1} \cdot \mycmd{IR}{i} \cdot \zero + \mycmd{R'_1}{t'_1} \cdot \zero%
    \qquad \G_2 = \mycmd{IR}{i} \cdot \cmdindx{2} \cdot \zero $$%
    $$\G_1 \conc \G_2 = \cmdindx{1} \cdot  \mycmd{IR}{i} \cdot \cmdindx{2}\cdot \zero + \mycmd{R'_1}{t'_1} \cdot \zero  $$

    \noindent%
    Here, $\evt_1$ and $\evt_2$ are separated in $\G_1\conc \G_2$ by the
    interfacing event type $\lgtn{i}$. Because of this, $\evt_1$ and $\evt_2$
    are not concurrent. Assume $\evt_2 \in \subscription (\role)$ for some
    role $\role$ and thus $\role \in \newroles[t_1][\G_1 \conc \G_2]$.
    According to determinacy (\Cref{def:determinacy}), it must hold
    $\evt_1 \in \subscription (\role)$. An application of the \emph{Interface}
    rule ensures $\lgtn{i}\in \subscription(\role)$ since
    $\role \in \oldroles[\G_2][\subscription] $. This means
    $\role \in \oldroles[\G_1][\subscription] $ and thus an application of the
    \emph{Branching} rule gives us $\evt_1\in \subscription(\role)$. The
    algorithm ensures determinacy.
  \end{example}
}

\subparagraph*{On the need of the \emph{interfacing} step in \Cref{alg:comp-swarm-subscriptions}.}
One of the requirements of well-formedness (\Cref{def:wf}) is \emph{determinacy} (\Cref{def:determinacy}).
Checking the determinacy of a composed swarm protocol for a subscription requires the set $\newroles[\evt][\G']$ %
for subterms $\G' = \G_1' \conc \ldots \conc \G_n'$ -- and computing that set would require the
explicit computation of the expanded protocol $\G'$ (using \Cref{def:swarm-proto-composition});
however, the size  of the expanded $\G'$ can be exponential in $n$.
To avoid this exponential blow-up, \Cref{alg:comp-swarm-subscriptions} uses the set $\oldroles[\G'_i]$ (where $\G'_i$ is a subterm of
one of the input protocols $\G_i$)
and its \emph{interfacing} step to over-approximate the set
$\newroles$ without explicitly expanding $\G = \G_1 \conc \ldots \conc \G_n$.
To see why this approach yields the desired over-approximation, recall that $\role \in \newroles[\evt][\G']$ holds if there exists a path $\G' \xrightarrow{\evt_0} \xrightarrow{\lgt_1} \xrightarrow{\evt_1} \cdots \xrightarrow{\lgt_n} \xrightarrow{\evt_n}$
such that $\evt_0 = \evt$, $\evt_n \in \subscription(\role)$, and every pair $\evt_i, \evt_{i+1}$ is not concurrent for $0 \leq i < n$.
Since $\evt_i$ and $\evt_{i+1}$ are not concurrent, they must either belong to the same concurrency-free input swarm protocol or they are separated by interfacing event types that enforce their ordering.
Consequently, there exists a path $\G' \xrightarrow{\evt'_0} \xrightarrow{\lgt'_1} \xrightarrow{\evt'_1} \cdots \xrightarrow{\lgt'_{n'}} \xrightarrow{\evt'_{n'}}$ satisfying the conditions above, where \(\evt'_1, \dots, \evt'_{n'-1}\) are interfacing event types and  \(\evt'_i\), \(\evt'_{i+1}\) occur within the same input protocol for \(0 \le i < n'\).
The \emph{interfacing} step propagates subscriptions along this path, that is, if $\evt'_{i+1} \in \subscription(\role)$, then also $\evt'_i \in \subscription(\role)$.
Starting from $\evt'_{n'} \in \subscription(\role)$, repeated application of the \emph{interfacing} step backwards along the path ensures that $\evt'_1 \in \subscription(\role)$.
Since $\evt'_1$ is in the same input protocol $\G'_i$ as $\evt$, we obtain $\role \in \oldroles[\G'_i]$.
\emph{(This argument is formally stated in \Cref{lem:subscribers-correctness} in \Cref{sec:app:proof_subscription_WF})}.

\begin{example}[Application of \Cref{alg:comp-swarm-subscriptions} to
  $\Gn{Warehouse}$ and $\Gn{Factory}$]\label{ex:applying-sub-gen-alg}
  We illustrate the application of \Cref{alg:comp-swarm-subscriptions} to the
  protocols $\Gn{Warehouse}$ and $\Gn{Factory}$ (depicted in
  \Cref{fig:Warehouse,fig:Factory}, respectively) and two empty subscriptions
  $\subscription_1 = \subscription_2 = \emptyset$.
  It is straightforward to check that these two protocols are composable, since
  they are sequential, confusion-free and they interface on $\arole[][T]$.
  Then, the algorithm initialises
  $\subscription = \subscription_1 \cup \subscription_2 = \emptyset$,
  $\ifr = \{\arole[][T]\}$, and
    \(%
    \concSet = \{ \{\lgtn{closingTime}, \lgtn{car}\}, \{\lgtn{pos},
    \lgtn{car}\}\}
    \).
    Note that $\concSet$ over-approximates the actual pairs of concurrent
    event types. Indeed, only $\lgtn{closingTime}$ and $\lgtn{car}$ are
    concurrent in the composition $\Gn{Warehouse} \conc \Gn{Factory}$ (see
    \Cref{fig:Composition}). Now the algorithm starts the iteration with the
    step \emph{causal consistency}.
    Starting with role $\arole[][T]$ and protocol $\Gn{Factory}$, %
    we first add the event types $\lgtn{partReq}$ and $\lgtn{partOK}$
    to $\subscription(\arole[][T])$ because $\Gn{Factory}$ has transitions
    where $\arole[][T]$ emits such event types.
    In $\Gn{Warehouse}$, the same role $\arole[][T]$ emits
    $\lgtn{partOK}$ after $\arole[][FL]$ emits $\lgtn{pos}$; consequently
    $\lgtn{pos}$ is added to $\subscription(\arole[][T])$. Subscriptions for
    the remaining roles, shown below, are obtained analogously.

\ifbool{ecoop}{\smallskip}{\medskip}%
  \centerline{\(%
    \subscription \;=\; \left\{
      \begin{array}{ll}
        \rolename{T} \mapsto \{ \lgtn{partReq}, \lgtn{partOK}, \lgtn{pos}\}, &\rolename{FL} \mapsto \{\lgtn{partReq}, \lgtn{pos}\},\\
         \rolename{D} \mapsto \{\lgtn{partOK}, \lgtn{closingTime}\}, & \rolename{A} \mapsto \{\lgtn{partOK}, \lgtn{car}\}
      \end{array}
    \right\}
    \)}%
  \ifbool{ecoop}{\smallskip}{\medskip}%

    \noindent%
    Then, the \emph{branching} step is executed. Note that $\lgtn{partReq}$ and
    $\lgtn{closingTime}$ are branching in $\Gn{Warehouse}$. Therefore, ($i$)
    $\lgtn{partReq}$ is added to the subscriptions of the roles already
    subscribed to event types of \changeOursNoMargin{$\Gn{Warehouse}$}, i.e., to $\rolename{D}$ and
    $\rolename{A}$; ($ii$) analogously, $\lgtn{closingTime}$ is added to
    $\rolename{T}$, $\rolename{FL}$, and $\rolename{A}$.
	 Step \emph{joining} is not actually applied in this example because its conditions are not satisfied.

Next, the \emph{interfacing} step adds $\lgtn{partOK}$ to all roles that have subscribed to some event that follows $\lgtn{partOK}$ in some of the protocols; hence,   $\lgtn{partOK}$ is added to $\rolename{FL}$ because it already subscribed to $\lgtn{partReq}$.

The fixed point of the iteration is:

\smallskip%
\centerline{\(%
  \subscription = \left\{
    \begin{array}{@{}l@{}}
      \rolename{T}, \ \ \rolename{FL}\ \mapsto \{ \lgtn{partReq}, \lgtn{partOK}, \lgtn{closingTime}, \lgtn{pos}\}
      \\
      \rolename{D} \mapsto \{\lgtn{partReq}, \lgtn{partOK}, \lgtn{closingTime}\},
      \quad
      \rolename{A} \mapsto \{\lgtn{partReq}, \lgtn{partOK}, \lgtn{closingTime}, \lgtn{car}\}\\
    \end{array}
  \right\}
  \)}%
  \smallskip%

  \noindent%
  In $\Gn{Warehouse}$, the event type $\lgtn{partReq}$ is looping in $\subscription$. However, it was already added to the relevant roles due to the branching rule.
  Thus, the \emph{looping} phase is not applied.%
\end{example}

\Cref{lem:subscription_composition_WF} below states that
\Cref{alg:comp-swarm-subscriptions} computes a subscription $\subscription$
ensuring well-formedness of the composed protocol. \short{\emph{(Proof in
    \Cref{sec:app:proof_subscription_WF}.)}}

\begin{theorem}[\ecoop{Correctness of \Cref{alg:comp-swarm-subscriptions}}]
    \label{lem:subscription_composition_WF}%
    If $\subscription$ is the subscription computed from the composable protocols
    $\G_1 \mydots \G_n$ and subscriptions $\subscription_1\mydots \subscription_n$
    using \Cref{alg:comp-swarm-subscriptions},

	 then $\G_1 \conc \ldots \conc \G_n$ is
	 $\subscription$-well-formed.
\end{theorem}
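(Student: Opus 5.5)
We need to show that $\G = \G_1 \conc \ldots \conc \G_n$ is $\subscription$-well-formed. By \Cref{def:wf} this means checking three things separately: confusion-freeness, causal consistency, and determinacy. The plan is to relate every transition of $\G$ and every subterm of $\G$ back to the components. By \Cref{def:swarm-proto-composition}, every subterm of $\G$ has the form $\G'_1 \conc \ldots \conc \G'_n$ with $\G'_i \subterm \G_i$. Moreover, $\G'_1\conc\ldots\conc\G'_n \xrightarrow{\cmd} \G''_1\conc\ldots\conc\G''_n$ holds iff two conditions hold: every component $i$ with $\role \in \G_i$ takes a $\cmd$-step, and every other component stays put. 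I would first state and prove this \emph{projection lemma} by unfolding the coinductive definition, using that the $\G_i$ are pairwise interfacing so that the synchronisation set is exactly the set of shared roles.

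\textbf{Confusion-freeness.} Item 1 follows because each $\G_i$ is confusion-free and, if $\evt$ occurs in two components, interfacing forces the same emitting role. Item 2 follows from the projection lemma and determinism of each $\G_i$. Item 3 is witnessed directly by the decomposition $\G_1,\ldots,\G_n$, since each $\G_i$ is confusion-free, hence satisfies item 3 with $n=1$.

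\textbf{Causal consistency.} The first clause is immediate from the causal-consistency step of the algorithm applied to the component emitting $\evt$. For the second clause, take $\G_1' \xrightarrow{\evt'}\xrightarrow{\cmd}$ in $\G$ with $\evt',\evt$ not concurrent. If both event types occur in a common component $\G_i$, the projection lemma gives $\G'_i\xrightarrow{\evt'}\xrightarrow{\evt}$ in $\G_i$, so the algorithm added $\evt'$ to $\subscription(\role)$. Otherwise no component contains both, so the two steps touch disjoint sets of components and commute in $\G$. Then $\evt,\evt'$ are concurrent, a contradiction.

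\textbf{Determinacy.} This is the main obstacle. The key auxiliary result is an over-approximation lemma: if $\role \in \newroles[\evt][\G']$ for $\G' = \G'_1\conc\ldots\conc\G'_n$ and $\evt \in \G_i$, then $\role \in \oldroles[\G'_i]$ for the final $\subscription$. I would prove it by induction on the length $n$ of the witnessing chain $\evt_0,\ldots,\evt_n$, following the argument sketched after the algorithm.
\begin{itemize}
\item Consecutive non-concurrent $\evt_k,\evt_{k+1}$ either share a component, or are linked through a chain of interfacing event types that order them. Otherwise the commutation argument from the causal-consistency case would make them concurrent.
\item Walking backwards along this chain, the \emph{interfacing} step propagates membership in $\oldroles$ from the later component subterm to the earlier one.
\item The concurrency facts of $\G$ must be compared with $\concSet$. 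I would prove that if $\evt,\evt'$ are concurrent in $\G$ then $\{\evt,\evt'\}\in\concSet$, so that non-membership in $\concSet$ implies non-concurrency.
\end{itemize}

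With this lemma, each determinacy clause follows from the matching algorithm step:
\begin{itemize}
\item \textbf{Branching.} A branching pair in $\G$ at $\G'$ projects to a branching pair in a single component. Two distinct components cannot branch against each other, since their steps would commute; hence both event types lie in one sequential $\G_i$, and the algorithm's branching step fires.
\item \textbf{Joining.} The concurrent $\evt',\evt''$ must come from different components. Both immediately precede $\evt$, which therefore lies in both, so the algorithm's joining clause applies, using $\concSet$ containment for the side conditions.
\item \textbf{Looping.} A loop $\G'\xrightarrow{\lgt}\G'$ in $\G$ projects to a (possibly empty) loop in each component, and at least one is non-empty. The final phase guarantees a looping event there, subscribed by all of $\oldroles[\G'_i]$, which by the lemma contains the required $\newroles$.
\end{itemize}
I expect the delicate points to be the concurrency/$\concSet$ correspondence and making the looping phase robust to subscriptions growing after it runs. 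The looping phase only adds event types, so I would argue that any role later falling into $\newroles$ was already in $\oldroles$ at fixpoint time.
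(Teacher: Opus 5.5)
Your architecture matches the paper's proof. You obtain confusion-freeness from the components and causal consistency from the algorithm's first step; your commutation argument for pairs in different components fills in something the paper leaves implicit. You then handle determinacy by over-approximating $\mathit{roles}(t,G')$ with $\mathit{subscribers}(G'_i)$ and discharging each clause with the matching algorithm step. That clause-by-clause discharge is where it breaks.

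In \textbf{Joining}, the fact you plan to prove (concurrent in $G$ implies membership in $\mathtt{conc}$) gives only the positive side condition $\{t',t''\}\in\mathtt{conc}$. The algorithm's Joining step also needs $\{t',t\}\notin\mathtt{conc}$ and $\{t'',t\}\notin\mathtt{conc}$. That is the converse direction for pairs sharing a component, and it is false, because $\mathtt{conc}$ only compares roles across two components at a time. Take the composable protocols $G_1=\mathtt{P}\langle a\rangle\cdot\mathtt{T}\langle c\rangle\cdot\mathtt{W}\langle w\rangle\cdot\mathbf{0}$, $G_2=\mathtt{Q}\langle b\rangle\cdot\mathtt{T}\langle c\rangle\cdot\mathbf{0}$ and $G_3=\mathtt{P}\langle a\rangle\cdot\mathbf{0}$. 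Their composition is $\mathtt{P}\langle a\rangle\cdot\mathtt{Q}\langle b\rangle\cdot H+\mathtt{Q}\langle b\rangle\cdot\mathtt{P}\langle a\rangle\cdot H$ with $H=\mathtt{T}\langle c\rangle\cdot\mathtt{W}\langle w\rangle\cdot\mathbf{0}$, so $c$ is joining for $a,b$ at $H$. Yet $\{a,c\}\in\mathtt{conc}$, witnessed by $\mathtt{T}\langle c\rangle\in G_2$ and $\mathtt{P}\langle a\rangle\in G_3$ with $\mathtt{P}\notin G_2$ and $\mathtt{T}\notin G_3$, even though $a,c$ are consecutive in $G_1$. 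Hence Joining never fires. From empty input subscriptions the fixpoint is $\mathtt{P}\mapsto\{a\}$, $\mathtt{Q}\mapsto\{b\}$, $\mathtt{T}\mapsto\{a,b,c\}$, $\mathtt{W}\mapsto\{a,c,w\}$. Here $a$ reaches $\mathtt{W}$ through the Interfacing step, while $b$ never does because $\mathtt{Q}$ is not interfacing. Since $c\in\sigma(\mathtt{W})$ gives $\mathtt{W}\in\mathit{roles}(c,H)$, determinacy demands $b\in\sigma(\mathtt{W})$, which fails.

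\textbf{Branching} has a similar hole. Both event types do lie in one $G_i$, but branching there also requires distinct continuations, and a choice that reconverges inside $G_i$ can become a genuine branch after composition. Take $G_1=\mathtt{S}\langle a\rangle\cdot\mathtt{X}\langle x\rangle\cdot\mathbf{0}+\mathtt{B}\langle b\rangle\cdot\mathtt{X}\langle x\rangle\cdot\mathbf{0}$ and $G_2=\mathtt{S}\langle a\rangle\cdot\mathtt{Y}\langle y\rangle\cdot\mathbf{0}$. The synchronised $a$ is blocked after $b$, so $a$ branches with $b$ at the root of $G_1\parallel G_2$ but not in $G_1$. The algorithm then returns $\sigma(\mathtt{Y})=\{a,y\}$, although $\mathtt{Y}\in\mathit{roles}(a,G_1\parallel G_2)$ requires $b\in\sigma(\mathtt{Y})$.

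The paper's proof makes exactly these two inferences: that some $G'_i$ has $t$ branching with $t'$, and that $t_a,t\in G_i$ implies $\{t_a,t\}\notin\mathtt{conc}$. So following it does not rescue the argument: the statement as given fails on these inputs. A correct proof needs a strengthened algorithm, for example:
\begin{itemize}
\item dropping the negative $\mathtt{conc}$ tests in Joining, since non-concurrency already follows from $t',t$ and $t'',t$ being consecutive in a sequential component;
\item letting Branching fire on any two distinct event types enabled at a component subterm.
\end{itemize}
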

\longer{\section{Proof of \Cref{lem:subscription_composition_WF}} \label{sec:app:proof_subscription_WF}
\Cref{alg:comp-swarm-subscriptions} makes use of a key insight: the function $subscribers$ overapproximates $roles$.
We now outline the reasoning behind this insight and formalise it below in \Cref{lem:subscribers-correctness}. %
Suppose $\role \in \newroles$. By \Cref{def:determinacy} there is a path
$\G \xrightarrow{\evt_0}  \xrightarrow{\lgtn{l_1} \cdot \evt_1} \ldots  \xrightarrow{\lgtn{l_n} \cdot \evt_n}$
such that $\evt=\evt_0$, $\evt_n \in \subscription(\role)$, and $\evt_i , \evt_{i+1}$ are not concurrent for all $i < n$.
Whenever two consecutive (and thus non-concurrent) event types $\evt_i$ and $\evt_{i+1}$ along the path above belong to different input swarm protocols, the corresponding path segment $\lgt_{i+1}$ must include a sequence of interfacing event types $\evt'_1\mydots \evt'_m$ that enforce their order:
i.e., $\lgt_{i+1}=\lgt'_1 \cdot \evt'_1 \cdot \ldots \lgt'_{n'}\cdot  \evt'_{n'} \cdot  \lgt'_{n'+1}$.
We take the sequence $\evt_1 \ldots \evt_{n-1}$ and, for each such case, replace $\evt_i \cdot \evt_{i+1}$ with the sequence $\evt_i\cdot \evt'_1 \ldots \evt'_{n'}$. This yields a new sequence $\evt''_1 \ldots \evt''_{n''}$ where adjacent events from different protocols are guaranteed to be interfacing.
If we now discard any non-interfacing event type from $\evt''_1\ldots \evt''_{n''}$, we obtain a sequence $\evt'''_1 \ldots \evt'''_{n'''}$ consisting solely of interfacing event types.
Since $\evt_n \in \subscription(\role)$, by iteratively applying the \emph{Interfacing} rule we get $\evt'''_{n'''} \in \subscription(\role)$, $\evt'''_{n'''-1} \in \subscription(\role)$, $\ldots$, $\evt'''_{1} \in  \subscription(\role)$.
Since the first event $\evt$ and $\evt'''_1$ belong to the same protocol, say $\G_i$, we have $\role \in \oldroles[\G'_i]$ for some $\G'_i \subterm \G_i$ with $\G'_i \xrightarrow{\evt}$.
Therefore we conclude that whenever $\role \in \newroles$, then $\role \in \oldroles[\G'_i]$ (for some $\G'_i \subterm \G_i$ with $\G'_i \xrightarrow{\evt}$).

We express this as a technical lemma:
\begin{lemma}\label{lem:subscribers-correctness}
We apply \Cref{alg:comp-swarm-subscriptions}. Assume $\role \in \newroles[\evt][\G']$ for some $\G' \subterm \G=\G_1 \conc \ldots \conc \G_n$ and $\G'=\G'_1 \conc \ldots \conc \G'_n$ where $\G'_i \subterm \G_i$ for $i\leq n$. 
 It holds $\role \in \oldroles[\evt][\G'_i]$ for any $i\leq n$ with $\G'_i\xrightarrow{\evt}$.
\end{lemma}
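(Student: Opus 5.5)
We argue by induction on the length of the witnessing chain in the definition of $\newroles[\evt][\G']$, after first normalising the chain so that consecutive elements are interfacing event types or lie in a common component. Fix a path $\G' \xrightarrow{\evt_0}\xrightarrow{\lgt_1}\xrightarrow{\evt_1}\cdots\xrightarrow{\lgt_m}\xrightarrow{\evt_m}$ with $\evt_0=\evt$, $\evt_m\in\subscription(\role)$, and consecutive $\evt_k,\evt_{k+1}$ not concurrent. Each transition of $\G'=\G'_1\conc\ldots\conc\G'_n$ arises from items (1)--(3) of \Cref{def:swarm-proto-composition}. Hence every step projects onto a step of a subset of the components: exactly one component for a non-interfacing type, and all components containing the role for an interfacing type. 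Confusion-freeness (item~\ref{item:confusion-free:evt-unique-role}) and pairwise interfacing ensure that every event type belongs to a well-defined set of components.

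\emph{Normalisation.} Suppose $\evt_k$ and $\evt_{k+1}$ are not concurrent in $\G$ but share no component. Then along the segment $\lgt_{k+1}$ some interfacing event type must separate them: one occurring after $\evt_k$ in a component of $\evt_k$, and chaining through components up to a component of $\evt_{k+1}$. Otherwise every type in between that lies in $\evt_{k+1}$'s components is independent of every type in $\evt_k$'s components. In that case the composition rules (1)/(2) let $\evt_{k+1}$ be commuted past the whole segment and past $\evt_k$, making $\evt_k,\evt_{k+1}$ concurrent, a contradiction. The components are sequential, so within one component the order is fixed, and we can use the footnote to \Cref{def:concurrency} to justify the commutations. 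We thus refine the chain into $\evt=u_0,u_1,\ldots,u_p=\evt_m$. In it, every adjacent pair $u_j,u_{j+1}$ lies in a common component $\G_{i_j}$ with $u_j$ preceding $u_{j+1}$ there, and $u_1,\ldots,u_{p-1}$ are interfacing, i.e.\ emitted by roles in $\ifr$. I expect this commutation argument to be the main obstacle. It needs a careful case analysis on which components participate in each step, and it also needs the fact that a non-concurrent pair from disjoint components is separated by a chain of interfacing synchronisations.

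\emph{Backward propagation.} We show by reverse induction on $j$ that $u_j\in\subscription(\role)$ for all $1\le j\le p$. The base case $j=p$ holds by assumption. For the step, $u_j$ is emitted by a role in $\ifr$ at some subterm $\G''\subterm\G_{i_j}$ with $\G''\xrightarrow{u_j}\G'''$. Sequentiality of $\G_{i_j}$ gives that $u_{j+1}$ is reachable from $\G'''$. The \emph{Interfacing} step requires $\role\in\oldroles[\G''']$ in order to add $u_j$, and here we must read $\oldroles[\G''']$ as the roles subscribing to some type occurring in $\G'''$. This holds because $u_{j+1}\in\G'''$ and $u_{j+1}\in\subscription(\role)$. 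Since $\subscription$ is a fixed point of the repeat loop, we get $u_j\in\subscription(\role)$. The uniqueness clause (item~\ref{item:confusion-free:concurrecy}) of \Cref{def:confusion-freeness} ensures that the subterm emitting $u_j$ is the one we used.

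\emph{Conclusion.} Take any $i$ with $\G'_i\xrightarrow{\evt}$. If $p=0$ or $p=1$, then $\evt$ or $u_1$ lies in $\subscription(\role)$ directly. Otherwise $u_1\in\subscription(\role)$ occurs after $\evt$ in a common component. If $\evt$ is non-interfacing that component is $\G_i$; if $\evt$ is interfacing, we choose the normalisation so that $u_1$ lies in $\G_i$, which is possible by applying the separation argument within $\G_i$. Thus $u_1$ occurs in $\G'_i$, so $\role\in\oldroles[\G'_i]$.
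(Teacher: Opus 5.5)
Your normalisation and backward propagation are the same as the paper's. The conclusion, however, has a genuine gap when $\evt$ is interfacing. You claim that the normalisation can be chosen so that $u_1$ lies in $\G_i$ for the given $i$, and this is false in general. Take the composable protocols $\G_1 = \cstmcmd{IR}{x}\cdot\cstmcmd{A}{a}\cdot\cstmcmd{B}{b}\cdot\zero$ and $\G_2 = \cstmcmd{IR}{x}\cdot\zero$, so $\G_1\conc\G_2=\G_1$, and let $\role=\rolename{B}$ with $\evt=\lgtn{x}$. Then $\rolename{B}\in\newroles[\lgtn{x}][\G_1\conc\G_2]$ via $\lgtn{b}$, and $\G_2\xrightarrow{\lgtn{x}}$. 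Every event type after $\lgtn{x}$ lies only in $\G_1$, so no normalised chain has $u_1$ in $\G_2$.

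Your $p=1$ case has the same defect. The fact that $u_1\in\subscription(\role)$ gives $\role\in\oldroles[\G'_i]$ only when $u_1$ occurs in $\G'_i$, and that fails for the components of $\evt$ that do not contain $u_1$. In the example, $\rolename{B}\in\oldroles[\G_2]$ holds only if $\lgtn{x}\in\subscription(\rolename{B})$. Causal consistency and branching do not provide this.

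The fix is to run your backward propagation one more step, down to $j=0$, whenever $\evt$ is interfacing. In that case $\evt$ is emitted by a role in $\ifr$ at a subterm of the component it shares with $u_1$. The continuation of that transition contains $u_1\in\subscription(\role)$, so the \emph{Interfacing} step of \Cref{alg:comp-swarm-subscriptions} adds $\evt$ to $\subscription(\role)$. Since $\evt$ occurs in every $\G'_i$ with $\G'_i\xrightarrow{\evt}$, you get $\role\in\oldroles[\G'_i]$ for all such $i$ at once, without tailoring the chain to $i$. This is how the paper closes the interfacing case. Your handling of a non-interfacing $\evt$, which lives in a single component that must be $\G_i$, is fine as written.
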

\begin{proof}[Proof of \Cref{lem:subscribers-correctness}]
Since $\role \in \newroles[\evt][\G']$, 
there is a path $ \G' \xrightarrow{\evt_0} \xrightarrow{\lgtn{l_1} } \xrightarrow{\evt_1} \ldots  \xrightarrow{\lgtn{l_n} } \xrightarrow{\evt_n}$ in the composition $\G$ such that $\evt=\evt_0$, $\evt_n \in \subscription(\role)$, and $\evt_i, \evt_{i+1}$ are not concurrent for $i<n$.

We now show that we can assume WLOG that $\evt_{1}\ldots \evt_{n-1}$ are interfacing event types such that $\evt_{i},\, \evt_{i+1}$ occur in the same swarm protocol for  $i< n$.
Assume, towards contradiction, that this is not the case. This means that there are  $\evt_i,\evt_{i+1}$ that do not occur in the same swarm protocols and they are not both interfacing. Further, there are no event types in $\lgt_{i+1}$ that can be added to the sequence $\evt_{1}\ldots \evt_{n-1}$. After-all, if there were some we could add, then we could keep adding those event types to the sequence until all pairs from different protocols are interfacing.
Since there are no elements in $\lgt_{i+1}$ that can be added, this means there is no chain of pairwise non-concurrent event types in $\lgt_{i+1}$ that starts with $\evt_i$ and ends in $\evt_{i+1}$. 
Thus, we can rearrange the path (by moving event types from $\lgt_{i+1}$ directly before $\evt_i$ or directly after $\evt_{i+1}$) such that $\evt_i$ and $\evt_{i+1}$ are next to each other. 
According to the definition of swarm protocol composition, these two non-interfacing event types from different protocols are concurrent. This is a contradiction. 

This means whenever  $\evt_i,\evt_{i+1}$ are from different protocols, then they are interfacing. If $\evt_i$ is non-interfacing, this means $\evt_{i-1}$ and $\evt_{i+1}$ occur in the same swarm protocol as $\evt_i$ and thus they are non-concurrent with each other. This means we can safely remove $\evt_i$ from the sequence $\evt_{1}\ldots \evt_{n-1}$. We can safely remove all non-interfacing event types and then our WLOG assumption holds.

Since $\evt_{n-1},\, \evt_{n}$ occur in the same swarm protocol, say $\G_j$, and $\G$ has a path $\xrightarrow{\evt_{n-1}}\xrightarrow{\lgt_{n-1}}\xrightarrow{\evt_n}$ that means there are $\G',\G''\subterm \G_j$ such that $\G' \xrightarrow{\evt_{n-1}} \G''$, $\evt_n\in \G''$, and $\evt_{n}\in \subscription(\role)$.
It follows $\role \in \oldroles[\G''][\subscription]$. 
Recall, that $\evt_1 \ldots \evt_{n-1}$ are interfacing.
The \emph{Interfacing} rule ensures $\evt_{n-1}\in \subscription(\role)$.
We iterate this application of the \emph{Interfacing} rule to get $\evt_1 \in \subscription (\role)$.

Assume $\evt$ is not interfacing. This means it occurs only in one swarm protocol $\G_i$.
Then $\evt_1 \in \subscription (\role)$ together with $\evt_1$ occurs in $\G'_i$, ensures that $\role\in \oldroles[\G'_i]$.

If $\evt$ is interfacing, then $\evt$ and $\evt_1$ occur together in some swarm protocol. In this case, the \emph{Interfacing} rule ensures $\evt \in \subscription(\role)$ and thus $\role \in \oldroles[\G'_i]$ for any $i\leq n$ with $\G'_i\xrightarrow{\evt}$.
\end{proof}

Recall that we call those looping event type that satisfies the looping condition in Determinacy and are thus chosen as updating event types \emph{looping in $\subscription$}.

\begin{proof}[Proof of \Cref{lem:subscription_composition_WF}]
To prove that $\G=\G_1 \conc \ldots \conc \G_n$ is $\subscription$-well-formed, we verify the following conditions: (1) confusion-freeness, (2) causal consistency, and (3) determinacy.
It is easy to see that  confusion-freeness (\Cref{def:confusion-freeness}) holds: Property (1) holds since $\G_1 \conc \ldots \conc \G_n$ are confusion-free and interfacing. Properties (2) and (3) hold since the swarm protocol composition preserves these properties.
Causal consistency is enforced by the Causal Consistency rule.
We now verify that determinacy (\Cref{def:determinacy}) holds by examining each condition that determinacy requires.
\subparagraph{Branching.}
Assume $\evt$ is branching with $\evt'$ at $\G'\subterm \G$.
It holds $\G'=\G'_1 \conc \ldots \conc \G'_n$ where $\G'_i\subterm \G_i$ for $i\leq n$ and there is an $i\leq n$ where $\evt$ branches with $\evt'$ at $\G'_i$. Thus they are not in $\concSet$.
Since $\evt$ and $\evt'$ are not concurrent, but both are outgoing transitions of $\G'$, 
\Cref{def:swarm-proto-composition} (Swarm Protocol Composition) ensures there is some $\G_i$ with $i\leq n$ that they both occur in. 
For any  $\role \in \newroles[\evt][\G']$, \Cref{lem:subscribers-correctness} ensures $\role\in \oldroles[\G'_i]$ and thus the \emph{Branching} rule sets $\evt,\evt' \in \subscription(\role)$.
\subparagraph{Joining.}
Assume towards contradiction that the Joining condition is not satisfied.
Then there are  $\G_a, \G_b, \G_c,\G_d \subterm \G$ with  $\G_a \xrightarrow{\evt_a} \G_b \xrightarrow{\evt}\G_c$, $\G_d \xrightarrow{\evt_b} \G_b$,  $\evt_a$ and $\evt_b$ are concurrent but not concurrent with $\evt$ and $\role \in \newroles[\evt][\G_b]$ and  $\{\evt_a, \evt_b, \evt \} \not \subseteq \subscription(\role)$. 
It holds that since $\evt_a$ and $\evt$ are not concurrent, then it is not the case that: $\G_a \xrightarrow{\evt} \xrightarrow{\evt_a}\G_c$.
According to the composition of swarm protocols, this means that there is a $\G_i$ such that $\evt_a, \evt \in \G_i$ and thus $\{\evt_a,\evt \}\not\in \concSet$. 
Further, there is a $\G'_i \subterm \G_i$ such that $ \G'_i\xrightarrow{\evt_a} \xrightarrow{\evt}$.
Analogue, we argue there is a $\G'_j \subterm \G_j$ such that $\G'_j \xrightarrow{\evt_b}  \xrightarrow{\evt}$ and thus $\{\evt_b,\evt \}\not\in \concSet$. 
Since $\concSet$ is an overapproximation of concurrent event types, it holds  $\{\evt_a,\evt_b \}\in \concSet$.
This means there is a role $\role \in \newroles[\evt][\G_b]$ but there is no corresponding $\G'_j \subterm \G_j$ with $j\leq n$, $\G'_j \xrightarrow{\evt}$, and $\role \in \oldroles[\G'_j]$. This is a contradiction to \Cref{lem:subscribers-correctness}.
\subparagraph{Looping.}
Assume towards contradiction that the Looping condition is not satisfied.
Then there is a  $\G' \subterm \G$ with  $\G' \xrightarrow{\lgt} \G'$ wit $\lgt$ being non-empty and $\G'$ has no looping update.
Let $\G'=(\G'_1 \conc \ldots \conc \G'_n)$.
According to the composition of swarm protocols, this means that there is an $i\leq n$ such that $\G'_i \xrightarrow{\lgt'} \G'_i$ and $\epsilon \subset \lgt' \subseteq \lgt$.
If $\lgt$ contains an interface event type, then the interfacing rule ensures that this is a looping update.
If $\lgt$ contains no interface event type, then neither does $\lgt'$.
If $\G'_i$ contains branching event types, then at least one of those event types remains in $\lgt'$, even if it may be no longer branching. 
We argued in paragraph \emph{Branching} that the remaining one is in the necessary subscriptions to be a looping update. This is a contradiction.
If $\G'_i$ contains no branching event types, then Step 3 picks an event type $\evt\in \lgt'$ and ensures $\evt\in \subscription(\role)$ for all $\role \in \oldroles$.
Since the looping condition is not satisfied by $\evt$, that means $\role \in \newroles[\evt][\G']$ but not $\role \in \oldroles[\G'_i]$.
This is a contradiction, the argument is analogue to branching.
\end{proof}
}

\section{Realising Swarm Protocols as Swarms of Machines}
\label{sec:projecting-machines-comp-swarm-proto}

\noindent%
We now address the problem of correctly \emph{realising} a swarm protocol $\G$ (\Cref{def:realisation}), i.e.,
constructing a swarm of machines that interact as specified by $\G$. %
We begin by fixing the language for writing machines (\Cref{sec:formalisms:machines}).
Following the standard top-down approach for behavioral types, we rely on a \emph{projection} operation (\cref{sec:projection}) to derive correct machines for the different roles of a protocol. %
Then, we refine the swarm model introduced
in~\cite{DBLP:conf/ecoop/KuhnMT23} by adding a
\emph{branch-tracking} mechanism to the semantics of machines (\cref{sec:branch-tracking-machines}) in a swarm (\cref{sec:formalisms:swarms}).
This revised semantics is essential for (1) ensuring that the machines projected from a swarm protocol $\G$ are \emph{eventually faithful} to $\G$
(\Cref{sec:eventual-fidelity}), and (2)
enabling the composition of swarms (discussed in \cref{sec:swarm-composition}).

\subsection{Machines}
\label{sec:formalisms:machines}
A \emph{machine} models a swarm agent that processes a local log of
events to make decisions on the emission of new events.
We define a machine as a grammar term (\Cref{def:formalisms:machine}) with a corresponding finite-state automaton (\Cref{def:machine-state-transitions}).%

\begin{definition}[Machine]
  \label{def:formalisms:machine}
  A \emph{machine} is a \emph{regular term} from the
  coinductive grammar:

  \smallskip%
  \centerline{\(
		\afish  \stackrel{\text{co}}\bnfdef \kappa \cdot  \&_{i\in I} \inp[i] \afish[i] \qqand[where $\kappa$ is a finite set of event types dubbed the \emph{emitter set}]
  \)}%
  \smallskip%

    \noindent
		such that $\evt_i \neq \evt_j$  for all $i \neq j \in I$.
		We write \( \zero \) when \( \kappa = I = \emptyset \).
\end{definition}

\begin{definition}[Machine states and transitions]
\label{def:machine-state-transitions}
A machine $\afish$ from \Cref{def:formalisms:machine} %
can be described as the deterministic finite-state automaton where:
\begin{itemize}
	\item Each distinct subterm of \( \afish \) (finitely many since $\afish$ is regular) corresponds to a \textbf{state}.
\item The \textbf{initial state} is \( \afish \) itself.
  \item There is an \textbf{event acceptance transition}  $\afish[1] \xrightarrow{\inp} \afish[2]$
  whenever machine $\afish[1]$ has a branch $\inp$ leading to
  $\afish[2]$
  \item and an \textbf{event emission transition}
  $\afish' \xrightarrow{\outp[]} \afish'$ for any $\evt$ in the
  emitter set of $\afish'$.
\end{itemize}
\end{definition}

\begin{remark}
The grammar-based presentation of machines (\Cref{def:formalisms:machine}) and the automaton presentation (\Cref{def:machine-state-transitions}) are equivalent: each distinct subterm corresponds to a state, branches correspond to acceptance transitions, and emitter sets yield emission self-loops.%
\end{remark}

\begin{example}%
  By \Cref{def:machine-state-transitions},
  the automaton depicted in \cref{fig:Door} corresponds to the
  machine:\footnote{%
   For readability, in our examples we often use brackets and infix
   notation for the branching operator $\&$.%
  }

  \ifbool{ecoop}{\smallskip}{\medskip}\centerline{\(%
   \afish_\rolename{D} \;=\; \{\aeventtype[][closingTime]\} \cdot \Bigl( (\inp[][partReq] \inp[][partOK] \afish )\ \&\ (\inp[][closingTime] \zero ) \Bigr)
  \)}%
  \ifbool{ecoop}{\smallskip}{\medskip}%

  \noindent%
  Note that a closing-time event can be emitted only
  from $\afish_\rolename{D}$ (the initial state in the automaton of \cref{fig:Door})
  since $\aeventtype[][closingTime]$ belongs only to the emitter set of $\afish_\rolename{D}$.
  Also note that in \cref{fig:Door}, emitted events appear as \emph{self-loops}:
this is because %
(by \Cref{def:machine-state-transitions} above)
a machine remains in its current state while emitting an event;
a state transition may occur only when a machine accepts an event from its log.
\end{example}

\subsection{Projecting a Swarm Protocol onto its Roles}
\label{sec:projection}

By \Cref{def:swarm-proto-projection} below, the \emph{projection} of a swarm
protocol $\G$ onto a role $\role$ yields a machine that captures the behaviour
of $\role$ in $\G$. Specifically, the projected machine
retains the event types that are either emitted or subscribed to
by $\role$ -- and it disregards all other event types.

\begin{definition}[Projection]
	\label{def:swarm-proto-projection}
	We coinductively define the \emph{projection} of a $\subscription$-well-formed swarm protocol $\G$ onto a role $\role$  as:

	\ifbool{ecoop}{\smallskip}{\medskip}\centerline{\(
		\G \proterm \;\;\stackrel{\text{co}}{\define}\;\; \kappa \cdot  \& \Bigl\{  \evt ? (\G' \proterm) \;\;\Big|\;\;  \exists \lgtn{l}:
		\G \xrightarrow{\lgtn{l}} \xrightarrow{\evt} \G' \text{ and }
		\lgtn{l}\cap \subscription (\role)=\emptyset%
		\;\text{ and }\; \evt \in \subscription(\role)  \Bigr\}
		\)}%
	\ifbool{ecoop}{\smallskip}{\medskip}

	\noindent
\begin{tabular}{lrl}
	where& $\kappa = \bigl\{\evt \;\big|\; \exists \lgtn{l}:
	\G \xrightarrow{\lgtn{l}} \xrightarrow{\mycmd{R}{t}}
	\text{ and } \lgtn{l}\cap \subscription (\role)=\emptyset
	\bigr\}$.\\
\end{tabular}
\end{definition}

Observe that, by \Cref{def:swarm-proto-projection}, the initial state of the projected machine $\G \proterm$ has:
\begin{itemize}
\item A branch $\inp$ for every event type $\aeventtype$ that
  $\arole[]$ subscribes to (by $\subscription$) and can be reached from $\G$ after
  a (possibly empty) sequence of events of type $\lgt$ \emph{not} subscribed to by $\arole[]$.
\item An emitter set $\kappa$ consisting of all the event types that
  $\arole[]$ can emit in $\G$ after a (possibly empty)
  sequence of events of type $\lgt$ \emph{not} subscribed to by $\role$.%
\end{itemize}
\noindent%
Hence, the projection $\G \proterm$ may start by either
emitting an event of a type in $\kappa$ or proceed to
$\G' \proterm$ upon accepting an event of type $\evt$.
Note that an
event type $\evt$ that is \emph{not} subscribed to by $\role$ does \emph{not}
appear in $\G \proterm$, i.e., the projected machine ignores all events of such a type $\evt$.

\begin{example}[Projection]
\label{ex:projection}
Consider the swarm protocol $\G = \Gn{Warehouse} \conc \Gn{Factory}$ shown in
\Cref{fig:Composition} and the following subscription (where $E$ stands for
the set of all event types in $\G$):

\smallskip%
\(
  \subscription \;=\; \left\{
      \rolename{T} \mapsto E \setminus \{ \lgtn{car}\}, \;\;
      \rolename{D} \mapsto E \setminus \{ \lgtn{pos}, \lgtn{car}\}, \;\;
      \rolename{FL} \mapsto E \setminus \{ \lgtn{partOK}, \lgtn{car}\}, \;\;
      \rolename{A} \mapsto E \setminus \{ \lgtn{pos} \}
    \right\}
\)%
\smallskip%

\noindent
The projections of $\G$ are shown in \Cref{fig:ProjWHF}.
\end{example}

\begin{figure}[tb]
\ifbool{ecoop}{  \vspace{-5mm}%
  \begin{minipage}{0.55\linewidth}}{\centering}
	   \tabcolsep=0pt
  \begin{tabular}{r@{\quad$=$\quad}l}
    $\G \projshort{\subscription}{\rolename{T}}$ &
    \raisebox{-5mm}{%
    \begin{tikzpicture}[gt, scale = \ifbool{ecoop}{.6}{1}, node distance = 4mm and 15mm, font=\footnotesize]
        \node (0) [plb, minimum size = 4mm]{};
        \node (1) [below = 0.3cm of 0, minimum size = 4mm] {};
        \node (2) [plb, right = of 0, minimum size = 4mm] {};
        \node (4) [plb, right = of 2, minimum size = 4mm] {};
        \node (3) [plb, left = of 0, minimum size = 4mm] {};
        \node (5) [plb, right = of 4, minimum size = 4mm] {};
        \node (6) [plb, right = of 5, minimum size = 4mm] {};

        \draw[->] (1) to node[sloped, anchor=center, above=1mm]{}  (0);
        \path (0) edge[loop above] node{$\lgtn{partReq}!$}  (0);
        \path (0) edge[above] node{$ \lgtn{closingTime}?$}  (3);
        \path (0) edge[sloped, above] node{$ \lgtn{partReq}?$}  (2);
        \path (2) edge[sloped, above] node{$ \lgtn{pos}?$}  (4);
        \path (4) edge[loop above] node{$\lgtn{partOK}!$}  (4);
        \path (4) edge[sloped, above] node{$  \lgtn{partOK}?$}  (5);
        \path (5) edge[sloped, above] node{$\lgtn{closingTime}?$} (6);
    \end{tikzpicture}
    }%
    \\[1em]
    $\G \projshort{\subscription}{\rolename{D}}$ &
    \raisebox{-5mm}{%
    \begin{tikzpicture}[gt, scale = \ifbool{ecoop}{.6}{1}, node distance = 4mm and 15mm, font=\footnotesize]
        \node (0) [plb, inner sep=0pt, minimum size = 4mm]{};
        \node (1) [below =0.3cm of 0,inner sep=0pt, minimum size = 4mm] {};
        \node (2) [plb, right = of 0,inner sep=0pt, minimum size = 4mm] {};
        \node (3) [plb, left = of 0,inner sep=0pt, minimum size = 4mm] {};
        \node (4) [plb, right = of 2,inner sep=0pt, minimum size = 4mm] {};
        \node (5) [plb, right = of 4,inner sep=0pt, minimum size = 4mm] {};

        \draw[->] (1) to node[sloped, anchor=center, above=1mm]{}  (0);
        \path (0) edge[loop above] node{$\lgtn{closingTime}!$}  (0);
        \path (0) edge[above] node{$ \lgtn{closingTime}?$}  (3);
        \path (0) edge[sloped, above] node{$ \lgtn{partReq}?$}  (2);
        \path (2) edge[sloped, above] node{$  \lgtn{partOK}?$}  (4);
        \path (4) edge[loop above] node{$\lgtn{closingTime}!$} ();
        \path (4) edge[sloped, above] node{$\lgtn{closingTime}?$} (5);
    \end{tikzpicture}
    }%
    \\[1em]
    $\G \projshort{\subscription}{\rolename{FL}}$ &
    \raisebox{-5mm}{%
    \begin{tikzpicture}[gt, scale = \ifbool{ecoop}{.6}{1}, node distance = 4mm and 15mm, font=\footnotesize]
        \node (0) [plb, minimum size = 4mm]{};
        \node (1) [below =0.3cm of 0, minimum size = 4mm] {};
        \node (2) [plb, right = of 0, minimum size = 4mm] {};
        \node (3) [plb, left = of 0, minimum size = 4mm] {};
        \node (4) [plb, right = of 2, minimum size = 4mm] {};
        \node (5) [plb, right = of 4, minimum size = 4mm] {};

        \draw[->] (1) to node[sloped, anchor=center, above=1mm]{}  (0);
        \path (0) edge[above] node{$ \lgtn{closingTime}?$}  (3);
        \path (0) edge[sloped, above] node{$ \lgtn{partReq}?$}  (2);
        \path (2) edge[loop below] node{$\lgtn{pos}!$} ();
        \path (2) edge[sloped, above] node{$  \lgtn{pos}?$}  (4);
        \path (4) edge[sloped, above] node{$\lgtn{closingTime}?$} (5);
    \end{tikzpicture}
    }%
    \\[1em]
    $\G \projshort{\subscription}{\rolename{A}}$ &
    \raisebox{-5mm}{%
    \begin{tikzpicture}[gt, scale = \ifbool{ecoop}{.6}{1}, node distance = 4mm and 15mm, font=\footnotesize]
        \node (0) [plb, minimum size = 4mm]{};
        \node (1) [below =0.3cm of 0, minimum size = 4mm] {};
        \node (2) [plb, right = of 0, minimum size = 4mm] {};
        \node (3)  [plb, left = of 0, minimum size = 4mm] {};
        \node (4) [plb, right = of 2, minimum size = 4mm] {};
        \node (5) [plb, above right = of 4, minimum size = 4mm] {};
        \node (6) [plb, below right = of 4, minimum size = 4mm] {};
        \node (7) [plb, above right = of 6, minimum size = 4mm] {};

        \draw[->] (1) to node[sloped, anchor=center, above=1mm]{}  (0);
        \path (0) edge[above] node{$ \lgtn{closingTime}?$}  (3);
        \path (0) edge[sloped, above] node{$ \lgtn{partReq}?$}  (2);
        \path (2) edge[sloped, above] node{$  \lgtn{partOK}?$}  (4);
        \path (4) edge[loop above] node{$\lgtn{car}!$} ();
        \path (4) edge[sloped, above] node{$\lgtn{closingTime}?$} (5);
        \path (5) edge[loop below] node{$\lgtn{car}!$} ();
        \path (5) edge[sloped, above] node{$\lgtn{car}?$} (7);
        \path (4) edge[sloped, below] node{$\lgtn{car}?$} (6);
        \path (6) edge[sloped, below] node{$\lgtn{closingTime}?$} (7);
    \end{tikzpicture}
    }%
  \end{tabular}
  \vspace{-3mm}
  \caption{Projection of $\G = \Gn{Warehouse} \conc \Gn{Factory}$ (from \Cref{fig:Composition}) on roles $\rolename{T}, \rolename{D}, \rolename{FL}$, and $\rolename{A}$, using the subscription $\subscription$ defined in \Cref{ex:projection}.}
  \label{fig:ProjWHF}
 \ifbool{ecoop}{ \end{minipage}
 \hfill%
 \begin{minipage}{0.4\linewidth}}{\end{figure}\begin{figure}\centering}
  \begin{tabular}{r@{\quad$=$\quad}l}
    $\G \projshort{\subscription}{\rolename{T}}$ &
    \raisebox{-5mm}{%
    \begin{tikzpicture}[gt, scale = \ifbool{ecoop}{.6}{1}, node distance = 4mm and 15mm, font=\footnotesize]
        \node (0) [plb, minimum size = 4mm]{};
        \node (1) [below =0.3cm of 0, minimum size = 4mm] {};
        \node (2) [plb, above right = of 0, minimum size = 4mm] {};
        \node (4) [plb, below right = of 2, minimum size = 4mm] {};
        \node (3)  [plb, left = of 0, minimum size = 4mm] {};

        \draw[->] (1) to node[sloped, anchor=center, above=1mm]{}  (0);
        \path (0) edge[loop above] node{$\lgtn{partReq}!$}  (0);
        \path (0) edge[above] node{$ \lgtn{closingTime}?$}  (3);
        \path (0) edge[sloped, above] node{$ \lgtn{partReq}?$}  (2);
        \path (2) edge[sloped, above] node{$ \lgtn{pos}?$}  (4);
        \path (4) edge[loop above] node{$\lgtn{partOK}!$}  (4);
        \path (4) edge[sloped, above] node{$  \lgtn{partOK}?$}  (0);
    \end{tikzpicture}
    }%
    \\[1em]
    $\G \projshort{\subscription}{\rolename{D}}$ &
    \raisebox{-5mm}{%
    \begin{tikzpicture}[gt, scale = \ifbool{ecoop}{.6}{1}, node distance = 4mm and 15mm, font=\footnotesize]
        \node (0) [plb, minimum size = 4mm]{};
        \node (1) [below =0.3cm of 0, minimum size = 4mm] {};
        \node (2) [plb, right = of 0, minimum size = 4mm] {};
        \node (3)  [plb, left = of 0, minimum size = 4mm] {};

        \draw[->] (1) to node[sloped, anchor=center, above=1mm]{}  (0);
        \path (0) edge[loop above] node{$\lgtn{closingTime}!$}  (0);
        \path (0) edge[above] node{$ \lgtn{closingTime}?$}  (3);
        \path (0) edge[sloped, above, bend left] node{$ \lgtn{partReq}?$}  (2);
        \path (2) edge[sloped, below, bend left] node{$  \lgtn{partOK}?$}  (0);
    \end{tikzpicture}
    }%
    \\[1em]
    $\G \projshort{\subscription}{\rolename{FL}}$ &\hspace*{-7mm}
    \raisebox{-5mm}{%
    \begin{tikzpicture}[gt, scale = \ifbool{ecoop}{.6}{1}, node distance = 4mm and 15mm, font=\footnotesize]
        \node (0) [plb, minimum size = 4mm]{};
        \node (1) [left =0.5cm of 0, minimum size = 4mm] {};
        \node (2) [plb, above right = of 0, minimum size = 4mm] {};
        \node (3)  [plb, right = of 2, minimum size = 4mm] {};
        \node (4) [plb, below right = of 3, minimum size = 4mm] {};

        \draw[->] (1) to node[sloped, anchor=center, above=1mm]{}  (0);
        \path (0) edge[sloped, below] node{$ \lgtn{closingTime}?$}  (4);
        \path (0) edge[sloped, above] node{$ \lgtn{partReq}?$}  (2);
        \path (2) edge[loop above] node{$\lgtn{pos}!$} ();
        \path (2) edge[sloped, below, bend right] node{$  \lgtn{pos}?$}  (3);
        \path (3) edge[sloped, above, bend right] node{$  \lgtn{partReq?}$}  (2);
        \path (3) edge[sloped, above] node{$\lgtn{closingTime}?$} (4);
    \end{tikzpicture}
    }%
  \end{tabular}
  \short{\vspace{-1mm}}
  \caption{Projection of $\G = \Gn{Warehouse}$ (from \Cref{fig:Warehouse}) on roles $\rolename{T}, \rolename{D}$, and $\rolename{FL}$,
  using the subscription $\subscription$ defined in \Cref{ex:bt-swarm-warehouse}.}
  \label{fig:ProjWH}
  \short{\end{minipage}}
\end{figure}

Notice that in \Cref{def:swarm-proto-projection},
when constructing the projection $\G \proterm$
there might be multiple $\lgt$ and $\G_i$ such that
$\G \xrightarrow{\lgtn{l}} \xrightarrow{\evt_{i}} \G_i$,
meaning the projected machine can accept or emit $\evt_i$ and then reach one of potentially many different options for $\G_i \proterm$. %
Still, \Cref{lem:equalprojections} ensures that if $\G$ is well-formed, then $\G \proterm$ relates $\G$ to only one unique machine.%

\begin{lemma}\label{lem:equalprojections}
  If $\G$ is a $\subscription$-well-formed protocol, then $\G \proterm$ is uniquely defined for all
  $\arole[]$.
\end{lemma}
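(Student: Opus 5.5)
The issue is that the coinductive clause of \Cref{def:swarm-proto-projection} may offer, for one subscribed event type $\evt$, several continuations: there may be paths $\G \xrightarrow{\lgt} \xrightarrow{\evt} \G_1$ and $\G \xrightarrow{\lgt'} \xrightarrow{\evt} \G_2$ with $\lgt,\lgt'$ avoiding $\subscription(\role)$ and $\G_1 \neq \G_2$. The projection is a well-defined machine only if, in every such case, $\G_1 \proterm = \G_2 \proterm$. Machines require distinct labels in each branch, and the emitter set $\kappa$ is always a set, so only the branch continuations need attention.

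I would therefore use coinduction with the relation
\[
\mathcal{R} = \{(\G_1\proterm,\G_2\proterm) \mid \exists \G_0 \subterm \G,\ \lgt,\lgt' \text{ with } \lgt\cap\subscription(\role)=\lgt'\cap\subscription(\role)=\emptyset,\ \G_0\xrightarrow{\lgt}\G_1,\ \G_0\xrightarrow{\lgt'}\G_2\},
\]
together with the identity. I would show $\mathcal{R}$ is a bisimulation on machine terms. Then every pair of candidate continuations (each has the form $\G_i \xrightarrow{\evt} \G_i'$ with $\G_i$ reached by unsubscribed paths) coincides, and regularity gives finitely many states.

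The central step is a diamond-like lemma. Let $\G_0 \xrightarrow{\lgt}\G_1$ and $\G_0 \xrightarrow{\lgt'}\G_2$ be unsubscribed paths, and suppose $\G_1 \xrightarrow{\lgt''}\xrightarrow{\evt}\G_1'$ is also unsubscribed with $\evt\in\subscription(\role)$. Then $\G_2 \xrightarrow{\lgt'''}\xrightarrow{\evt}\G_2'$ for some unsubscribed $\lgt'''$, and $\G_1',\G_2'$ are again reachable from a common subterm by unsubscribed paths. The same lemma, applied to emissions, gives equality of the emitter sets. I would prove it by induction on the lengths of the unsubscribed prefixes, using determinacy as follows.

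\textbf{Branching.} Suppose the first events of $\lgt$ and $\lgt'$ differ and are not concurrent. Then they branch at $\G_0$. Since $\role$ subscribes to $\evt$ downstream, $\role \in \newroles[\cdot][\G_0]$, so Branching forces these events into $\subscription(\role)$, a contradiction. Hence diverging unsubscribed events are concurrent.

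\textbf{Concurrency.} By the footnote to \Cref{def:concurrency} and confusion-freeness (items 2 and 3), concurrent events can be commuted and the paths reconverge. This yields a common subterm from which both sides proceed by unsubscribed paths.

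\textbf{Joining.} Joining prevents an unsubscribed concurrent event from being absorbed at a join before $\evt$. Causal consistency gives that $\evt$ itself and its immediate non-concurrent predecessor are subscribed.

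\textbf{Looping.} Looping ensures that unsubscribed paths cannot circle a loop unseen. This bounds the commuting argument, so it terminates rather than pushing an unsubscribed event around a cycle.

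Determinism of event types (item 2 of \Cref{def:confusion-freeness}) then makes $\G_1'$ and $\G_2'$ determined by the reconverged state. Their pair stays in $\mathcal{R}$, which closes the coinduction.

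The hardest part will be the commutation argument, which is where the argument could go wrong. Specifically, I need to show that when $\evt$ is not reachable from the reconverged state in the same way, the only possible obstruction is a branching, joining, or looping event that would have to be subscribed. I would first attempt it via the characterisation $\newroles[\evt][\G']$, tracing a chain of pairwise non-concurrent events ending at $\evt$. If that fails, I would fall back on the decomposition $\G = \G_1 \conc \cdots \conc \G_n$ from confusion-freeness item 3, and argue componentwise.
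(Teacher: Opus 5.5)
\textbf{Where the proposal breaks.} Your \textbf{Branching} step fails as stated. From ``$\role$ subscribes to $\evt$ downstream'' you infer $\role\in\newroles[u][\G_0]$ for the first divergent event $u$. But $\newroles$ only collects roles reached along a chain of pairwise \emph{non-concurrent} event types, and in a composed protocol $\evt$ can be concurrent with $u$.

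For a concrete case, take $\G_0=(\cstmcmd{A}{a}\cdot\cstmcmd{A}{c}\cdot\zero+\cstmcmd{B}{b}\cdot\zero)\conc(\cmd\cdot\zero)$, with $\rolename{A}$, $\rolename{B}$, $\role$ pairwise distinct, so nothing synchronises. Use the subscription $\subscription(\rolename{A})=\{\lgtn{a},\lgtn{b},\lgtn{c}\}$, $\subscription(\rolename{B})=\{\lgtn{a},\lgtn{b}\}$, $\subscription(\role)=\{\evt\}$.
\begin{itemize}
\item This protocol is $\subscription$-well-formed. The only branching is $\lgtn{a}$ versus $\lgtn{b}$, whose $\newroles$ sets are $\{\rolename{A},\rolename{B}\}$, and there are no joins or loops.
\item Yet $\lgtn{a}$ and $\lgtn{b}$ branch at $\G_0$, both are unsubscribed by $\role$, and $\role\notin\newroles[\lgtn{a}][\G_0]$, because $\evt$ is concurrent with both $\lgtn{a}$ and $\lgtn{c}$.
\item So the unsubscribed paths $\G_0\xrightarrow{\lgtn{a}}$ and $\G_0\xrightarrow{\lgtn{b}}$ diverge non-concurrently, and your claim that ``diverging unsubscribed events are concurrent'' is false.
\end{itemize}
Your relation $\mathcal{R}$ contains exactly such pairs, so the diamond lemma needs an argument that covers them. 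Its conclusion happens to hold in this example, but not for the reason you give.

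\textbf{The missing idea: minimality.} This is what the paper's proof uses. Before comparing with the other path, take the unsubscribed path leading to $\evt$ to be minimal, meaning every event on it is causally upstream of $\evt$.
\begin{itemize}
\item An event on which $\evt$ does not causally depend is concurrent with $\evt$ and with every later event on which $\evt$ does depend.
\item Processing such events from last to first, each can therefore be commuted past $\evt$, using global commutation of concurrent event types (footnote to \Cref{def:concurrency}).
\item On a minimal path, every event $u$ satisfies $\role\in\newroles[u][\cdot]$ at the point where it is enabled. If $u$ branched with the divergent event of the other path, the Branching clause of \Cref{def:determinacy} would force $u\in\subscription(\role)$, a contradiction.
\item Only now are divergences forced to be concurrent, and the alignment by commutation (the paper's Cases A1/A2) goes through.
\end{itemize}
The postponed events are unsubscribed, so both continuations are reachable by unsubscribed paths from the state right after $\evt$, and the pair stays in $\mathcal{R}$. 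With this repair, your bisimulation is a legitimate repackaging of the paper's induction on sequences of accepted event types.

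\textbf{The Looping remark.} It is also wrong. The Looping clause only constrains roles in $\newroles[\evt'][\G'']$, so a loop causally unrelated to $\role$ can be traversed freely by $\role$-unsubscribed paths. You do not need it, though: each commutation acts on a fixed finite path, and the paper's proof uses neither Joining nor Looping.
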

\begin{proof}
	  \emph{(Outline; full proof in \Cref{sec:app:proof_equalprojections})}
    The $\subscription$-well-formedness property ensures that the
		projection satisfies $\G' \proterm = \G'' \proterm$
		for all sub-terms $\G'$ and $\G''$ reachable from $\G$
		by event types not in $\subscription(\role)$%
  -- so it does
  not matter which one is chosen in the construction of $\G \proterm$. This is
  because the subscription $\subscription$ includes all event types that can
  influence the future behaviour of the projection -- so in
  \Cref{def:swarm-proto-projection}, the event types selected in $\lgt$ (which
  are not in $\subscription$)
  do not influence the shape of $\G_i \proterm$.
\end{proof}

\subsection{Branch-Tracking Machine Semantics}
\label{sec:branch-tracking-machines}

In this section we define how machines emit events and process their local log of events.
We fix a set of \emph{events}, ranged over by $\aevent$; each event has a
unique \emph{event type} $\evt$. An \emph{event log
  $\alog$} is a \ecoop{sequence $\mklog{\aevent[1],\aevent[2], \ldots}$ of
  pairwise distinct events}.
As anticipated in \Cref{sec:overview:approach}, we incorporate a new \emph{branch-tracking mechanism} to the machine semantics
originally proposed in \cite{DBLP:conf/ecoop/KuhnMT23} to establish an explicit causal link between events.
Such a link was not present nor needed in \cite{DBLP:conf/ecoop/KuhnMT23} 
-- but it is necessary for swarms to work correctly with our 
compositional well-formedness (\Cref{def:wf}), which removes restrictions assumed by the machine semantics in \cite{DBLP:conf/ecoop/KuhnMT23}.

The behaviour of a machine $\afish$ is determined by a \emph{log-processing
  function} $\hat\stchange$ (formalised in \Cref{def:branch-tracking-transition-function-machines} below).
Let $\typeset$ be the set of all event types. In defining $\hat\stchange$, we assume that machines are equipped with an additional element, which  does not appear in \cite{DBLP:conf/ecoop/KuhnMT23}:
A set of \emph{updating event types} $\mup{\M}\subseteq\typeset$.\footnote{We will see in \Cref{def:realisation} that this set is populated with branching, joining, and looping event types from a swarm protocol.}
If an event $\aevent$ has a type $\evt \in \mup{\M}$, we call $\aevent$ an \emph{updating event}.
We also add to each event $\ev$ a field $\ev.\lbjePointer$, which intuitively refers to the last updating event that caused $\ev$.
A machine $\M$ processes its log starting from the oldest event. When processing an event $\ev$ of type $\evt$, $\M$ performs a transition if:
\begin{enumerate}
\item There is an acceptance transition  $\afish \xrightarrow{\inp} \afish'$; i.e., $\afish$ accepts events of type $\evt$; and
\item $\ev.\lbjePointer$ coincides with the last updating event observed by $\M$; i.e., $\ev.\lbjePointer$ is the last updating event that causally preceded $\ev$.
\end{enumerate}
If the two conditions above hold, then $\M$ accepts
$\ev$ and moves to the next state $\M'$, i.e., the continuation of the event acceptance transition, from which the remaning events in the log are processed. Otherwise, $\M$ skips $\ev$ and processes the next event in the log.

More precisely, in order to remember previously-seen updating events and check the pointer $\ev.\lbjePointer$, each machine keeps
a partial mapping $\lbje : \typeset \rightharpoonup \eventset$ that
associates each event type $\evt$ with the \emph{last updating event} %
for the type $\evt$.
The mapping $\lbje$ is built by the machine while processing a log: initially, $\lbje$
is empty (i.e., $\emptyset$) and it is updated every
time a machine processes an updating event.
The updating of $\lbje$ is performed as follows. %
   Let $\mbranch{\M, \evt}$  be a partial mapping that, for each machine $\M$ and event type $\evt$, gives the set event types that continue along the same branch as $\evt$ until the next updating event;
specifically, if $\M \xrightarrow{\evt\,?} \xrightarrow{\evt_1\,?} \ldots \xrightarrow{\evt_n\,?}$ with $\evt_1,\ldots,\evt_{n-1}\notin \mup{\M}$ and $\evt_1,\ldots,\evt_n$ not concurrent with $\evt$, then $\evt_1,\ldots,\evt_n \in \mbranch{\M, \evt}$.
The processing of an event $\ev$ of type $\evt$ by a machine $\M$ that has an event acceptance transition
$\M \xrightarrow{\inp[][\evt]} \M'$ is described by the following relation:

  \smallskip%
  \centerline{\(
    \begin{array}{lcl@{\quad}l}
    \parSt &\xrightarrow{\ev}& \parSt[\M']  & \text{if $\ev.\lbjePointer = \lbje(\evt)$ and $\evt \notin \mup{\M}$}\\
    \parSt &\xrightarrow{\ev}& \parSt[\M'][{\lbje[\mbranch{\M,\evt} \myupd \ev ]}] & \text{if $\ev.\lbjePointer = \lbje(\evt)$ and $\evt \in \mup{\M}$.}
    \end{array}
  \)}%
  \smallskip%

\noindent%
where $\lbje[\typeSet \myupd \ev]$ denotes the update of $\lbje$, namely the
mapping that maps each event type in the set $\typeSet$ to event $\ev$, and behaves as $\lbje$ for other event types.

The mechanism through which a machine $\M$ processes a complete event log $\alog$,
or equivalently determines its current state based on $\alog$, is specified by \Cref{def:branch-tracking-transition-function-machines} below.

\begin{definition}[Log-processing function]
  \label{def:branch-tracking-transition-function-machines}  The \emph{log-processing function} $\hat\stchange$
is defined as $\hat\stchange(\M,\alog) = \stchangeBT(\M,\alog,\emptyset)$ where:

  \smallskip%
  \centerline{\(
  \begin{array}{r@{\;}c@{\;}l}
      \stchangeBT(\M,\emptylog, \lbje) &=& (\M, \lbje)\\
      \stchangeBT(\M,\aevent\logcat\alog, \lbje) &=&
        \begin{cases}
          \stchangeBT(\M',\alog, \lbje') &\text{if $\parSt \xrightarrow{\ev} \parSt[\M'][\lbje']$}\\
          \stchangeBT(\M,\alog, \lbje)   & \text{otherwise.}
        \end{cases}
  \end{array}
  \)}%
\end{definition}

Note that in \Cref{def:branch-tracking-transition-function-machines}, $\hat\stchange$ is defined in terms of the auxiliary
function $\stchangeBT$, which is in charge of constructing $\lbje$
while traversing the log.
Observe that $\stchangeBT$, and consequently $\hat\stchange$,
returns a pair $(\M', \lbje)$ where $\M'$ is the reached machine and
$\lbje$ is the mapping built after processing the log.
Based on the current state computed by the log-processing function \(\hat\stchange\), machines generate new events according to \Cref{def:operational-semantics-machines} below.

\begin{definition}[Operational semantics of machines]
  \label{def:operational-semantics-machines} 
  A machine $\M$ with a local log $\alog$ emits events according to the following rule:

  \smallskip%
  \centerline{\(%
  \begin{array}{c}
    \mathrule{
    \hat\stchange(\M,\alog) = (\kappa \cdot  \&_{i\in I} \inp[i] \afish[i],\  \lbje)
    \qquad
    \evt \in \kappa
    \qquad
    \ev \text{ has type } \evt
    \qquad
    \evP{\ev} = \lbje(\evt)
    }{
    (\M,\alog) \xRightarrow{\evt!} (\M,\alog\logcat\aevent)
    }{emit}
  \end{array}
  \)}%
\end{definition}

The transition
$(\M,\alog) \xRightarrow{\evt!} (\M,\alog\logcat\aevent)$ in \Cref{def:operational-semantics-machines} reads as:
\emph{``a machine $\M$ with a log $\alog$ emits an event $\ev$ of type $\evt$.''}
Observe that the new event $\ev$ is added to the end of the local log, and %
the transition is possible only if, after processing $\alog$, the machine reaches a state $\kappa \cdot \&_{i\in I} \inp[i] \afish[i]$ where the emission of events of type $\evt$ is enabled (i.e., $\evt \in \kappa$). Moreover, $\ev$ must be of type $\evt$ and must reference the last updating event for $\evt$.

\subsection{Swarms}\label{sec:formalisms:swarms}

A \emph{swarm} (of size $n$) is a pair $(\asysrt, \alog)$ where:
\begin{itemize}
\item $\asysrt$ maps indices $i \in \{1, \ldots, n\}$ to machines and their local
log, i.e. $\asysrt(i) = (\afish[i],\alog[i])$;
\item $\alog$ is a \emph{global log}, i.e., a sequence consisting of all
  events generated by the machines in $\asysrt$ that preserves their order of
  generation. More formally:
  \begin{itemize}
  \item Every log (global or local)
	 $\alog' = \aevent[1] \cdots \aevent[n]$ totally orders its events using a relation $<_{\alog'}$ defined as 
	 $\aevent[i] < _{\alog'}\aevent[j]$ iff $i < j$;
    \item The global log $\alog$ contains all and only the events of the local logs $\alog[i]$ (for all $i \in \{1, \ldots, n\}$) and preserves their respective ordering: $\bigcup_{i \in \{ 1, \ldots, n \}}\alog[i] = \alog$ and $ <_{\alog[i]} \subseteq <_{\alog}$.
  \end{itemize}
\end{itemize}
For brevity, we may write $\asysrt$ instead of $(\asysrt, \alog)$ (and call $\asysrt$ a swarm) if the global log $\alog$ is empty.

\begin{remark}
  \label{remark:global-log-trick}%
  The global log $\alog$ of a swarm $(\asysrt, \alog)$ does not exist in
  actual swarm implementations: it is a formalisation device which models the global ordering and
  non-deterministic propagation of events, as defined by swarm semantics formalised in
  \Cref{def:formalisms:swarm-semantics} below.
\end{remark}

\begin{definition}[Swarm semantics]
  \label{def:branch-tracking-swarm-semantics}
  \label{def:formalisms:swarm-semantics}
  The behaviour of a swarm is defined by the smallest relation closed under
  the following rules:

  \smallskip%
  \centerline{\(%
    \mathrule{
    \asysrt(i) = (\afish[i],\alog[i])
    \quad
    (\afish[i],\alog[i]) \xRightarrow{\aeventtype!} (\afish[i],\alog[i]\cdot \aevent)
    }{
    (\asysrt, \alog \cdot \alog') \red[{\aeventtype}!] (\upd \asysrt i {(\afish[i], \alog[i]\cdot \aevent)},  \alog \cdot \aevent \cdot \alog')
    }{Local}
  \qquad%
    \mathrule{
    \asysrt(i) = (\afish[i],\alog[i])
    \quad
    <_{\alog[i] }  \subset <_{\alog'} \subseteq <_{\alog}
  }{
    (\asysrt, \alog) \red[\tau] (\upd \asysrt i {(\afish[i], {\alog'})}, \alog)
  }{Prop}
  \)}%
  \smallskip%
\end{definition}

In \Cref{def:branch-tracking-swarm-semantics}, rule \textsc{Local}
formalises the emission of an event enabled at the
machine at $\asysrt(i)$ of the swarm $(\asysrt, \alog)$. The emitted
event $\aevent$ is appended to the local log of $\asysrt(i)$ and
non-deterministically merged in the global log respecting the order of
events previously generated by $\asysrt(i)$.
Rule \textsc{Prop} models the asynchronous
event log propagation among machines: \ecoop{for some machine $\afish[i]$ with a local log
$\alog[i]$ (that is strictly included in the global log $\alog$), the
rule non-deterministically selects a log
$ \alog' \subseteq \alog$ that contains $\alog[i]$ and transfers the events in $\alog'$ to
$\alog[i]$ while preserving their order.}
Note that during an execution, machines can emit events
independently, and their local log may only partially reflect the ``global state''
in the global log (due to the non-deterministic and asynchronous
propagation of events).

\Cref{def:realisation} below formalises the \emph{realisation} of a swarm protocol $\G$, i.e.,
a swarm whose machines interact according to the roles $\G$.

\begin{definition}[Realisation of a swarm protocol]
	\label{def:realisation}%
  A swarm $(\S, \emptylog)$ of size $n$
  \emph{realises} a protocol $\G$
with respect to a subscription $\subscription$ iff for all $i \in \{1, \ldots, n\}$, ($i$) $\S(i) = (\M[i], \emptylog)$, and $(ii)$
 $\M[i] = \G\projshort{\subscription}{\role}$ for some $\role \in \G$; and ($iii$) $\mup{\M[i]}$ is
 equal to the event types of $\G$ that are branching, joining, or looping in $\subscription$.
 We also say that \emph{$\M[i]$ realises $\role$}.
\end{definition}

\begin{example}[Swarm protocol realisation and execution]\label{ex:bt-swarm-warehouse}
  Let $E$ denote the set of all event types in the $\Gn{Warehouse}$%
  protocol (\Cref{fig:Warehouse}) and  $\subscription = \{ \rolename{T} \mapsto E, \rolename{FL} \mapsto E
  \setminus \{ \lgtn{partOK} \}, \rolename{D} \mapsto E \setminus \{
  \lgtn{pos} \} \}$.
  By \Cref{def:wf}, $\Gn{Warehouse}$ is $\subscription$-well-formed.
  By \Cref{def:realisation}, the swarm $\asysrt_\Gn{W}$ below realises $\Gn{Warehouse}$ w.r.t. $\subscription$:

  \smallskip%
  \centerline{\(%
     \asysrt_\Gn{W} \;=\; \left\{
      i_{{\rolename{T}}_1} \mapsto (\afish[{\rolename{T}}], \emptylog),\quad
      i_{{\rolename{T}}_2} \mapsto (\afish[{\rolename{T}}], \emptylog),\quad
      i_{{\rolename{FL}}} \mapsto (\afish[{\rolename{FL}}], \emptylog),\quad
      i_{{\rolename{D}}} \mapsto (\afish[{\rolename{D}}], \emptylog)
    \right\}
    \)}%
    \smallskip%

  \noindent%
  where $\afish[{\rolename{T}}] = \Gn{Warehouse}\projshort{\subscription}{\rolename{T}}$, $\afish[{\rolename{FL}}] = \Gn{Warehouse}\projshort{\subscription}{\rolename{FL}}$,
  and $\afish[{\rolename{D}}] = \Gn{Warehouse}\projshort{\subscription}{\rolename{D}}$ (shown in \Cref{fig:ProjWH}). Notice that the machine in
  	$\asysrt_\Gn{W}(i_{{\rolename{T}}_1})$ and
  	$\asysrt_\Gn{W}(i_{{\rolename{T}}_2})$ play the same role
  	$\rolename T$.
  By \Cref{def:branch-tracking-swarm-semantics}, a possible execution of the swarm $\asysrt_\Gn{W}$ is:
  
  \smallskip%
  \scalebox{0.7}{%
	 \(%
		\begin{array}{r@{\;}r@{\;}l@{\quad}c@{\;}r@{\;}c@{\;}l}
		  & (\asysrt_\Gn{W}, \emptylog) \xRightarrow{\lgtn{partReq}!} & (\asysrt_1, \lgname{partReq}_1) \quad && \asysrt_1 &=& \asysrt_\Gn{W}{[i_{{\rolename{T}}_1} \myupd (\afish[{\rolename{T}}], \lgname{partReq}_1)]}\\
		  & \xRightarrow\tau& (\asysrt_2, \lgname{partReq}_1) \quad && \asysrt_2  &=& \asysrt_1{[i_{{\rolename{FL}}_1} \myupd (\afish[{\rolename{FL}}], \lgname{partReq}_1)]}\\
		  & \xRightarrow{\lgtn{pos}!}& (\asysrt_3, \lgname{partReq}_1\logcat\lgname{pos}_2) \quad && \asysrt_3 &=& \asysrt_2{[i_{\rolename{FL}} \myupd (\afish[{\rolename{FL}}], \lgname{partReq}_1\logcat\lgname{pos}_2)]}\\
		  & \xRightarrow\tau& (\asysrt_4, \lgname{partReq}_1\logcat\lgname{pos}_2)\quad && \asysrt_4 &=& \asysrt_3{[i_{{\rolename{T}}_1} \myupd (\afish[{\rolename{T}}], \lgname{partReq}_1\logcat\lgname{pos}_2)]}\\
		  & \xRightarrow \tau & (\asysrt_5, \lgname{partReq}_1\logcat\lgname{pos}_2)\quad && \asysrt_5 &=& \asysrt_4{[i_{{\rolename{T}}_2} \myupd (\afish[{\rolename{T}}], \lgname{partReq}_1\logcat\lgname{pos}_2)]}\\
		  & \xRightarrow{\lgtn{partOK}!} & (\asysrt_6, \lgname{partReq}_1\logcat\lgname{pos}_2\logcat\lgname{partOK}_3) \quad && \asysrt_6 &=& \asysrt_5{[i_{{\rolename{T}}_1} \myupd (\afish[{\rolename{T}}], \lgname{partReq}_1\logcat\lgname{pos}_2\logcat\lgname{partOK}_3)]}\\
		  & \xRightarrow{\lgtn{partReq}!} & (\asysrt_7, \lgname{partReq}_1\logcat\lgname{pos}_2\logcat\lgname{partOK}_3\logcat\lgname{partReq}_4) \quad && \asysrt_7 &=& \asysrt_6{[i_{{\rolename{T}}_1} \myupd (\afish[{\rolename{T}}], \lgname{partReq}_1\logcat\lgname{pos}_2\logcat\lgname{partOK}_3\logcat\lgname{partReq}_4)]}\\
		  & \xRightarrow {\lgtn{partOK}!} & (\asysrt_8, \alog) \quad && \asysrt_8 &=& \asysrt_7{[i_{{\rolename{T}}_2} \myupd (\afish[{\rolename{T}}], \lgname{partReq}_1\logcat\lgname{pos}_2\logcat\lgname{partOK}_5)]}\\
		  & \xRightarrow \tau^*& (\asysrt_9, \alog) \quad && \asysrt_9 &=& \asysrt_8{[i_{{\rolename{T}}_1} \myupd (\afish[{\rolename{T}}], \alog),i_{{\rolename{T}}_2} \myupd (\afish[{\rolename{T}}], \alog),i_{{\rolename{FL}}} \myupd (\afish[{\rolename{FL}}], \alog), i_{{\rolename{D}}} \myupd (\afish[{\rolename{D}}], \alog)]}\\
		\end{array}
	 \)
  }
  \smallskip%

  \noindent%
  where, in states $\S_8$ and $\S_9$, the global log is
  $\alog =
  \lgname{partReq}_1\logcat\lgname{pos}_2\logcat\lgname{partOK}_3\logcat\lgname{partReq}_4\logcat\lgname{partOK}_5$
  and its events carry the following pointers:

  \smallskip%
  \centerline{\(%
	 \arraycolsep=2pt
		\begin{array}{rclllllll}
			\lgname{partReq}_1.\lbjePointer &=& \perp \text{(undefined)}&&&&&&\\
			\lgname{pos}_2.\lbjePointer &=& \lgname{partOK}_3.\lbjePointer &=&
			\lgname{partReq}_4.\lbjePointer &=& \lgname{partOK}_5.\lbjePointer
			&=& \lgname{partReq}_1
		\end{array}
  \)}%
  \smallskip%

  In each state of the swarm, we compute the state of a machine for its local log
  using $\hat\stchange$
  (\Cref{def:branch-tracking-transition-function-machines}), %
  which in turn uses the
  branching, joining, and looping events in $\alog$ to update the mapping
  $\lbje$ and only considers events carrying correct pointers.
  For this example, when the machines process the event
  $\lgname{partReq}_4$, they will update their mapping $\lbje$ to
  obtain:

  \smallskip%
  \centerline{\(%
	 \lbje[\lgtn{partReq},\, \lgtn{pos},\, \lgtn{partOk},\, \lgtn{closingTime}  \myupd \lgname{partReq}_4]
  \)}%
  \smallskip%

  \noindent%
  In this updated mapping, all event types of
  $\Gn{Warehouse}$ point to $\lgname{partReq}_4$, which means
  that the machines will
  only accept new events $\ev$ such that
  $\ev.\lbjePointer = \lgname{partReq}_4$.
  Observe that, in the execution above, the event $\lgname{partOK}_5$ was emitted by the machine with ID $i_{{\rolename{T}}_2}$
  in response to the event $\lgname{partReq}_1$.
  Hence, $\lgname{partOK}_5.\lbjePointer = \lgname{partReq}_1 \neq
  \lgname{partReq}_4$;
  i.e., the event $\lgname{partOK}_5$ is causally linked to $\lgname{partReq}_1$
  through branch tracking.
  Since the machines expect events,
  whose $\lbjePointer$ fields refer to $\lgname{partReq}_4$,
  the event $\lgname{partOK}_5$ is ignored by all
  machines after the logs propagate.
  Once this happens in state $\asysrt_9$,
  the machines reach the states:

    \smallskip%
  \centerline{\(%
  \begin{array}{rcl}
	  \hat\stchange(\afish[{\rolename{T}}], \changeNoMargin{\alog}) &=& (\cstmcmd{FL}{pos}.\cstmcmd{T}{partOK}.\Gn{Warehouse})\projshort{\subscription}{\rolename{T}}
    \\
	  \hat\stchange(\afish[{\rolename{D}}], \alog) &=& (\cstmcmd{T}{partOK}.\Gn{Warehouse})\projshort{\subscription}{\rolename{D}}
    \\
	  \hat\stchange(\afish[{\rolename{FL}}], \changeNoMargin{\alog}) &=& (\cstmcmd{FL}{pos}.\cstmcmd{T}{partOK}.\Gn{Warehouse})\projshort{\subscription}{\rolename{FL}}
  \end{array}
  \)}%
  \smallskip%
  
  \noindent%
  Here, the machines $\afish[{\rolename{T}}]$ and $\afish[{\rolename{FL}}]$
  expect the forklift to
  emit an event of type $\lgtn{pos}$, while $\afish[{\rolename{D}}]$
  (which \ecoop{does not subscribe} to $\lgtn{pos}$ in
  $\subscription$) is waiting for an event of type
  $\lgtn{partOK}$ that has the correct pointer and thus follows $\lgtn{pos}$ in the swarm protocol
  $\Gn{Warehouse}$.
  These states are consistent with each other: $(\cstmcmd{T}{partOK}.\Gn{Warehouse})\projshort{\subscription}{\rolename{D}}=( (\cstmcmd{FL}{pos}.\cstmcmd{T}{partOK}.\Gn{Warehouse})\projshort{\subscription}{\rolename{D}}$.%
  Thus they correspond
  to a valid run of the $\Gn{Warehouse}$ swarm protocol.
\end{example}

\begin{remark}
Running the swarm of \Cref{ex:bt-swarm-warehouse} with the semantics
of~\cite{DBLP:conf/ecoop/KuhnMT23} can lead to a state where
the machines irremediably diverge due to the absence of branch tracking
and the resulting lack of event causality tracking.
We illustrate this situation in \Cref{sec:app:non_bt_swarm}.
\end{remark}

\subsection{Eventual Fidelity of Swarm Protocol Realisations}
\label{sec:eventual-fidelity}

In this section, we address the problem of ensuring that a swarm behaves correctly with respect to a swarm protocol $\G$.
As our correctness criterion, we adopt the notion of
\emph{eventual fidelity} (\Cref{def:eventual-fidelity} below).
Intuitively, a swarm is eventually faithful to a protocol $\G$ if, in every execution, once all machines have received all messages,
they agree on the same \emph{consistent view} of the execution path of $\G$.
By consistent view, we mean that machines are able to determine which events belong to the effective execution path through $\G$ and disregard all other events that may have been emitted inconsistently.
We call such a view the \emph{effective log} (see \Cref{sec:app:effective-logs} for formal definitions).
When considering a global log $\alog$ generated by a swarm, the effective log w.r.t.~$\G$, written $\srteffof[\lg][\G]$, is the
sublog
of $\alog$ that
consists solely of those events of $\alog$ that follow a valid path through $\G$.
We also define the effective log with
respect to a specific machine $\M$, written $\srteffof[\lg][\M]$, as
the sublog of $\alog$ containing only the events that $\M$ accepts
(i.e., all the events not ignored by the log-processing function $\hat\stchange$).

To formalise the eventual fidelity of a swarm to a protocol $\G$,
we assign to every machine in the swarm some role of $\G$: we denote this by a mapping $\roleMap$ from machines to the roles they play. %
Eventual fidelity requires that if a machine $\M$ that plays role $\roleMap(\M)$ is provided with a global log $\lg$ produced by the swarm, then $\M$ accepts only those events which are in the global effective log and to which $\roleMap(\M)$ subscribes.
We denote such events by $\srteffof[\lg][\G] \downarrow_{\subscription(\roleMap(\M))}$, which is the projection of $\srteffof[\lg][\G]$ onto the event types that role $\roleMap(\M)$ subscribes to in $\subscription$.
\begin{definition}[Eventual fidelity]
  \label{def:eventual-fidelity}
   A swarm $(\S,\epsilon)$  of size $n$ is
  \emph{eventually faithful} to a protocol $\G$ with respect to a subscription $\subscription$ if
        for all  $\alog$ such that $(\S,\epsilon)\rightarrow^* (\S',\lg)$, for all %
        $\role \in \G$, and for all $i \in \{ 1...n \}$, we have
   $\srteffof[\lg][\G] \downarrow_{\subscription(\roleMap(\S(i)))} = \srteffof[\lg][\S(i)]$.
\end{definition}

\Cref{thm:eventual_fidelity} below ensures that a realisation of a swarm protocol $\G$ (i.e., a swarm 
consisting of machines projected from $\G$, by \Cref{def:realisation}) is eventually faithful to $\G$. The proof is in \Cref{sec:app:fidelity-proof}.

\begin{theorem}[Eventual fidelity of swarm protocol realisations]\label{thm:eventual_fidelity}
  If a swarm $\S$ realises a $\subscription$-well-formed protocol $\G$,
  then $\S$ is eventually faithful to $\G$ with respect to $\subscription$.
\end{theorem}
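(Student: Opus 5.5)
The plan is to prove, for every reachable swarm state $(\S',\lg)$ and every index $i$, the equality $\srteffof[\lg][\G]\downarrow_{\subscription(\roleMap(\S(i)))} = \srteffof[\lg][\S(i)]$. I will do this by strong induction on the length of the global log $\lg$. Since the effective log of a machine only depends on the log it is fed, I fix the machine $\M = \G\projshort{\subscription}{\role}$ with $\role = \roleMap(\S(i))$. I then compare two runs over $\lg$:
\begin{itemize}
\item the protocol-level run, which defines $\srteffof[\lg][\G]$ by following a valid path of $\G$ and skipping events that do not fit;
\item the machine-level run $\stchangeBT(\M,\lg,\emptyset)$ of \Cref{def:branch-tracking-transition-function-machines}.
\end{itemize}

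The invariant I carry along the prefixes $\lg'$ of $\lg$ has three parts. If the protocol run has reached $\G'$ after $\lg'$, then:
\begin{enumerate}
\item the machine run has reached state $\G'\projshort{\subscription}{\role}$, which is well defined by \Cref{lem:equalprojections};
\item the events accepted so far coincide with $\srteffof[\lg'][\G]\downarrow_{\subscription(\role)}$;
\item the map $\lbje$ built by the machine sends each event type $\evt$ that is still enabled in the current branch to the last effective updating (branching, joining, or looping) event governing $\evt$.
\end{enumerate}
The step case takes an event $\ev$ of type $\evt$ and splits into four cases.

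\textbf{(a)} $\evt\notin\subscription(\role)$. The machine ignores $\ev$. I must show that $\G'\projshort{\subscription}{\role}$ equals the projection of the protocol successor. This follows from \Cref{def:swarm-proto-projection} and \Cref{lem:equalprojections}, because unsubscribed steps do not change the projection.

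\textbf{(b)} $\evt$ is subscribed and $\ev$ is effective. Then $\evt$ labels a transition of $\G'$ reachable through unsubscribed event types, so the projection has a branch $\inp$. It remains to show that $\ev.\lbjePointer=\lbje(\evt)$. At emission time, the emitting machine $\M_j$ set the pointer to its own $\lbje(\evt)$ (\Cref{def:operational-semantics-machines}). I therefore need a lemma saying that every machine computes the same value of $\lbje(\evt)$ on the effective events it sees. This is where determinacy comes in: every role in $\newroles[\evt][\cdot]$ subscribes to the branching and joining event types that precede $\evt$, and to a looping event of each loop. Together with causal consistency (the emitter subscribes to the immediately preceding non-concurrent events), this guarantees that the emitter has observed exactly the updating events relevant to $\evt$ that $\M$ observes.

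\textbf{(c)} $\evt$ is subscribed and $\ev$ is non-effective. I must show the machine rejects $\ev$. Either $\M$ has no $\inp$ branch, or the pointer mismatches. The mismatch arises because $\ev$ was emitted in response to a stale updating event: a different branch choice, an earlier loop iteration, or a different joining predecessor. This is what happens to $\lgname{partOK}_5$ in \Cref{ex:bt-swarm-warehouse}.

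\textbf{(d)} Concurrency. Concurrent event types can interleave in $\lg$ in any order. Here I will use the footnote to \Cref{def:concurrency} together with confusion-freeness (item 3: each event type is emitted from a unique subterm per component) to commute concurrent steps. The machine and the protocol then stay in matching states regardless of the arrival order. The function $\mbranch{\M,\evt}$ ensures that updating one branch's pointer does not clobber the pointers of concurrent branches.

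The main obstacle is the pointer-agreement lemma used in cases (b) and (c). Its difficulty is that the emitting machine computed its pointer from a \emph{local} log, which is an arbitrary order-preserving sublog of $\lg$, whereas the receiving machine sees the full global log. My approach is as follows.
\begin{itemize}
\item First, show that every updating event that causally precedes an effective $\ev$ in $\G$ already appears in the emitter's local log before $\ev$. I would derive this from \textsc{Local}/\textsc{Prop} in \Cref{def:formalisms:swarm-semantics}, combined with the emitter having been in a state where $\evt\in\kappa$.
\item Second, show that events arriving later, or stale ones, cannot change the value of $\lbje(\evt)$ that the receiver sees. This rests on the uniqueness of events and on the requirement that an updating event be accepted only if its own pointer matches.
\end{itemize}
Non-effective events emitted concurrently in the global order are the delicate subcase. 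For them I would argue by a secondary induction along the causality chain of $\lbjePointer$ links: such an event's chain must pass through a non-effective or superseded updating event, so it can never match the receiver's current $\lbje(\evt)$.
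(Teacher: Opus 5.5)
\paragraph{Overall structure.} Your architecture is the paper's. Its proof takes the longest prefix of $\lg$ on which fidelity holds, which is an induction on prefixes. Your invariants (1) and (3) are \Cref{lem:equivprojections}: same projected state, and same $\lbje$ on $\subscription(\role)$. Your case (b) is its Case~1.

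Two smaller points:
\begin{itemize}
\item Case (b) needs no reasoning about the emitter. Protocol-level effectiveness already means $\ev.\lbjePointer$ equals the \emph{protocol's} $\lbje(\evt)$, so invariant (3) closes it directly.
\item Case (a) must also preserve (3) when the skipped event is updating. That needs determinacy: an unsubscribed updating $\evt_b$ cannot govern a type in $\subscription(\role)$, since otherwise $\role\in\newroles[\evt_b][\G']$ would force $\evt_b\in\subscription(\role)$.
\end{itemize}

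\paragraph{The gap, in (c).} For the delicate subcase you claim that a non-effective event's pointer chain always passes through a non-effective or superseded updating event, so its pointer never matches. This is false for duplicates emitted by replicas. Drop the emission of $\lgname{partReq}_4$ from \Cref{ex:bt-swarm-warehouse}; this gives the reachable log $\lgname{partReq}_1\logcat\lgname{pos}_2\logcat\lgname{partOK}_3\logcat\lgname{partOK}_5$. Here $\lgname{partOK}_5$ is non-effective. Yet its pointer $\lgname{partReq}_1$ is effective and never superseded, because $\lgtn{partOK}$ is not updating. So the pointer equals every machine's current $\lbje(\lgtn{partOK})$. The transport and door machines reject $\lgname{partOK}_5$ \emph{only} because their control state offers no $\lgtn{partOK}?$ branch.

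The same happens for two alternatives emitted at one branching point: they carry the same pointer, and accepting one need not update $\lbje$ for the other. So the ``no branch'' half of your dichotomy carries the weight, you give no argument for it, and the pointer-chain induction cannot supply one. That half must also exclude the case where the projection offers $\evt?$ only because $\evt$ is reachable through unsubscribed types that have not yet effectively occurred.

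\paragraph{The missing argument.} This is the paper's Case~2/3 argument.
\begin{itemize}
\item Start from $\ev$ and use causal consistency (\Cref{def:causalconsistency}): each emitter subscribes to the direct non-concurrent predecessor. Walk back a chain of simple events to $\ev.\lbjePointer$, replacing non-effective links by earlier effective ones with the same type and pointer.
\item This shows the direct predecessor of $\ev$ was effectively taken. So the protocol rejects $\ev$ only because an effective, non-concurrent competing successor $\ev''$ with the same pointer came first.
\end{itemize}
Then split on the kind of $\evt$:
\begin{itemize}
\item If $\evt$ is simple, then $\ev''$ has type $\evt$. Accepting both $\ev''$ and $\ev$ with the same pointer means the machine traversed a cycle through $\evt$ containing no updating type it observes. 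This contradicts the Looping clause of \Cref{def:determinacy}.
\item If $\evt$ is branching or looping in $\subscription$, the Branching clause (or the fact that $\ev''$ has type $\evt$) makes $\role$ observe $\ev''$. After that, either $\lbje(\evt)$ has changed or no $\evt?$ branch remains.
\item If $\evt$ is joining, the Joining clause makes $\role$ subscribe to all direct predecessors of $\evt$. So the projection cannot skip not-yet-effective events, and its $\evt?$ branch is directly enabled in the protocol.
\end{itemize}
Without this control-state reasoning, and in particular without the Looping clause in this direction, your plan does not show that machines reject non-effective events.
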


\begin{example}[Effective logs and eventual fidelity]
    Consider the swarm protocol $\Gn{Warehouse}$ (\Cref{fig:Warehouse}),
    its projected machine $\afish[{\rolename{D}}] = \Gn{Warehouse}\projshort{\subscription}{\rolename{D}}$,
    (\Cref{fig:ProjWH}),
    and the log $\alog $ from \Cref{ex:bt-swarm-warehouse}. %
    By \Cref{def:global-effective-log,def:machine-proto-role-effective-log}, the effective logs for $\Gn{Warehouse}$
    and $\M_\rolename{D}$ filtered from $\alog$ are:

  \smallskip%
  \centerline{\(%
    \begin{array}{rcl}
    \srteffof[\alog][\Gn{Warehouse}] &=& \lgname{partReq}_1\logcat\lgname{pos}_2\logcat\lgname{partOK}_3\logcat\lgname{partReq}_4
    \\
    \srteffof[\alog][\M_\rolename{D}] &=& \lgname{partReq}_1\logcat\lgname{partOK}_3\logcat\lgname{partReq}_4
    \end{array}
  \)}%
  \smallskip%

  \noindent%
  Therefore, we have $\srteffof[\alog][\Gn{Warehouse}] \downarrow_{\subscription(\rolename{D})} = \srteffof[\alog][\M_\rolename{D}]$:
  hence, the effective logs for the machine and the swarm protocol align, as required by eventual fidelity (\Cref{def:eventual-fidelity}).
\end{example}

\section{Composing Machines and Swarms}
\label{sec:swarm-composition}%
\label{sec:swarmcomposition}

\noindent%
In this section, we present an approach for realising a composition of swarm protocols
$\G_1 \conc \ldots \conc \G_n$ by composing swarms of pre-existing machines which realise the individual protocols $\G_i$ (for $i \in \{1, \ldots, n\}$). %
This is crucial for enabling the reuse of pre-existing machines and their deployment in new, larger swarms %
(that we implement in \Cref{sec:implementation}). %
We first formalise the composition of machines (\Cref{sec:machine-composition}); %
then, we introduce an automatic \emph{adaptation} mechanism (\Cref{sec:machine-adaptation}) for such pre-existing machines, enabling us to compose entire swarms (\Cref{alg:swarm_composition_short}) while guaranteeing eventual fidelity to the composed swarm protocol (\Cref{lem:projection_composition_correct}).

\subsection{Machine Composition}
\label{sec:machine-composition}%
\label{sec:swarmcompositionalgo}%

Similar to swarm protocol composition (\Cref{def:swarm-proto-composition}), in
\Cref{def:machine-composition} below we define the composition of two machines $\M$ and $\M'$ as a synchronised product. %
Intuitively, if a transition is common to $\M$ and $\M'$, %
the machines synchronise and the continuation of $\M \concM{} \M'$ is given by the composition of both corresponding continuations of $\M$ and $\M'$ (clause \emph{Common Events});
otherwise, the continuation of $\M \concM{} \M'$ is the composition of
the continuation of either $\M$ or $\M'$ with the other machine unchanged
(clauses \emph{Events Unique}).

\begin{definition}[Machine composition]
    \label{def:machine-composition}%
    Let $\machine \define \map \cdot \&_{i \in I} \evt_i ? \machine_i$ and $\machine' \define \map' \cdot \&_{j \in J} \evt_j' ? \machine_j'$ be
    machines. Let $\typeSet$ be the set of event types that occur in both $\M$ and $\M'$.
    The \emph{composition of $\machine$ and $\machine'$}, written $ \machine  \concM{} \machine'$, is defined as $\machine  \concM{\typeSet} \machine'$, which in turn is coinductively defined:
    
    \smallskip%
    \centerline{\(
    	\machine  \concM{\typeSet} \machine' \;\;\stackrel{co}{\define}\;\; \map'' \cdot \&_{k \in K}  \evt_k ? \machine_k
    \)}%
    \smallskip%

    \noindent%
     where $K $ is the smallest set such that for all $i \in I, j \in J$:
    \begin{description}
        \item[Common Events ($\evt_i \in \typeSet,\; \evt_i =\evt_j'$):] There is a $k\in K$ with
        $\evt_k = \evt_i $ and $ \machine_k   = \machine_i  \concM{\typeSet} \machine_j'$.
        \item[Events Unique to $\machine$ ($\evt_i \notin \typeSet$):] There is a $k\in K$ with $\evt_k=\evt_i$ and
        $ \machine_k   = \machine_i  \concM{\typeSet} \machine'$.
        \item[Events Unique to $\machine'$ ($ \evt_j' \notin \typeSet$):] There is a $k\in K$ such that
        $\evt_k=\evt_j'$ and
        $ \machine_k = \machine \concM{\typeSet} \machine_j'$.
      \end{description}
    Then, the emitter set $\map''$  is  $\map''=(\map\cup \map') \cap \{\evt_k \mid k\in K\}$.
\end{definition}

Note that by \Cref{def:machine-composition}, %
the acceptance transitions of $\machine \concM{} \machine'$ consist of the acceptance transitions
of both $\M$ and $\M'$; common transitions are synchronised via the set $\typeSet$, hence they are only enabled in a state if both $\M$ and $\M'$ can take them. %
Further, an event type $\evt$ belongs to the emitter set of
$\machine \concM{} \machine'$ if and only if ($i$) $\evt$ belongs to the
emitter set of either $\M$ or $\M'$, and ($ii$) either $\evt \in \typeSet$ and both machines accept $\evt$, or $\evt \not\in \typeSet$ and one of the machines accepts $\evt$.
This means a machine either implements an interfacing role and thus only emits events in $\typeSet$ or it is non-interfacing and thus only emits events not in $\typeSet$.
Also note that, as with swarm protocol composition (\Cref{def:swarm-proto-composition}), we do not introduce new syntax for composed machines: composed machines are treated as ordinary machines, to maintain backward compatibility with existing
tooling (that we extend in \Cref{sec:implementation}).

\begin{example}[Relating swarm protocol composition, projection, and machine composition]%
    Consider the swarm protocols $\Gn{Warehouse}$ and $\Gn{Factory}$ in \Cref{fig:Warehouse,fig:Factory}, and the subscription $\subscription$ from \Cref{ex:projection}. %
    Also consider the projected machines $\Gn{Warehouse} \projshort{\subscription}{\rolename{D}}$ in \Cref{fig:ProjWH}, and $\Gn{Factory} \projshort{\subscription}{\rolename{D}}{} = \lgtn{partReq}?\cdot \lgtn{partOK}?\cdot \zero$. If we compose these two machines (using \Cref{def:machine-composition}), then we derive the machine denoted as $\G \projshort{\subscription}{\rolename{D}}$ in \Cref{fig:ProjWHF}, and therefore we have:

    \smallskip%
    \centerline{\(%
    \Gn{Factory} \projshort{\subscription}{\rolename{D}}{} \,\concM{}\, \Gn{Warehouse} \projshort{\subscription}{\rolename{D}}{} \;\;=\;\; (\Gn{Factory}\conc\Gn{Warehouse}) \projshort{\subscription}{\rolename{D}}
    \)}%
    \smallskip%

    \noindent%
    Note that this holds because, in $\subscription$, the roles of both projected machines subscribe to the interfacing event types that synchronise the composed swarm protocols (by \Cref{def:swarm-proto-composition}).
\end{example}

\subsection{Machine Adaptation}
\label{sec:machine-adaptation}
In \Cref{def:machine-adaptation} we introduce a function $\mathcal{A}$
that \emph{adapts} a single machine $\M$, projected from a swarm protocol $\G_k$, so that it behaves correctly when used in a swarm
with other adapted machines that were projected from swarm protocols $\G_1, \ldots, \G_n$.
Intuitively, the adaptation function $\mathcal{A}$ restructures the transition graph of $\M$ in two steps (described in more detail below): (1) it adds automatically generated acceptance transitions to wait for extra synchronisation event types required by a composed subscription $\subscription$, and (2) it disables transitions whose continuations are incompatible with the other projections. Notably, $\mathcal{A}$ does not rewrite the logic of $\M$:
the adapted machine looks different from $\M$ (see \Cref{ex:machine-adaptation} below) but is still just the original automaton with a thin structural wrapper that delays progress until required synchronisation events are emitted by other machines, or disables incompatible paths.

\begin{definition}[Machine adaptation for swarm composition]
	\label{def:machine-adaptation}%
	Let $\G_1, \ldots, \G_n$ be swarm protocols with subscriptions
	$\subscription_1,\ldots,\subscription_n$.
	Let $\M$ be a machine that, for some $k \in \{1, \ldots, n\}$, realises role $\role$ in the swarm protocol $\G_k$
	for subscription $\subscription_k$.
	Let $\subscription$ be a well-formed subscription for $\G_1 \conc \ldots \conc \G_n$
	(obtained \longer{from $\subscription_1,\ldots,\subscription_n$}
	via~\Cref{alg:comp-swarm-subscriptions}).
	Then, we define the %
	\emph{machine adaptation $\mathcal{A}$} as:
	
	\smallskip%
	\centerline{\(
		\mathcal{A}\!\left(\M, (\G_i)_{i \in \{ 1, \ldots, n \}},\role, k, \subscription\right)%
		\;=\;%
		\left(%
		\G_1 \projshort{\subscription}{\role} \concM{} \ldots \concM{}%
		\left(\M \concM{} \G_k \projshort{\subscription}{\role}\right) %
		\concM{} \ldots \concM{} \G_n \projshort{\subscription}{\role}
		\right).
		\)}%
	\smallskip%

	\noindent%
	We also set $\mup{\mathcal{A}\!\left(\M, (\G_i)_{i \in \{ 1, \ldots, n \}}, \role, k, \subscription\right)}$
	as the set of branching, joining, or looping event types selected by \Cref{alg:comp-swarm-subscriptions}: %
	this overapproximates the set of branching/joining/looping event types of $\G_1 \conc \ldots \conc \G_n$.
\end{definition}

To better see the effect of the adaptation $\mathcal{A}$ in \Cref{def:machine-adaptation}, observe that by \Cref{alg:comp-swarm-subscriptions} we have $\subscription_k \subseteq \sigma$, hence the subscriptions $\subscription$ and $\subscription_k$ may differ; consequently, $\G_k \projshort{\subscription_k}{\role}$ and $\G_k\projshort{\subscription}{\role}$ may not coincide.
Additionally, $\arole[]$ might be an interfacing role; therefore, it might appear in other protocols besides $\G_k$. Hence, the input machine $\M$ must be adapted to play role $\arole[]$ while conforming to \emph{all} protocols where $\arole[]$ appears.
Consequently, the  adaption of $\M$ involves the following two steps (that were outlined above):
\begin{enumerate}
	\item  $\M$ is  composed with $\G_k\projshort{\subscription}{\role}$ to adapt it to the behaviour expected by the subscription $\subscription$ (which may be larger than $\sigma_k$). This adaptation may force $\M$ to await and accept additional events types, i.e., those in $\subscription(\arole[])$ that are not in $\subscription_k(\arole[])$.
	
	\item $\M$ is also composed with the projections obtained from the remaining protocols of the composition. This step may prune some transitions of $\M$ that are absent in the other protocols and add necessary transitions from the other protocols to ensure correct synchronisation in the composed swarm.
\end{enumerate}

\smallskip%

We extend the adaptation mechanism in \Cref{def:machine-adaptation} to whole swarms: \Cref{alg:swarm_composition_short} takes a set of swarms that respectively realise protocols $\G_1, \ldots ,\G_n$, and automatically generates a swarm that realises $\G_1 \conc \ldots \conc \G_n$. Specifically, \Cref{alg:swarm_composition_short} uses \Cref{alg:comp-swarm-subscriptions} to find a suitable subscription for the composition, and then it applies the adaptation function in \Cref{def:machine-adaptation}. (For a more detailed version of \Cref{alg:swarm_composition_short}, see  \Cref{alg:swarm_composition} in  \Cref{sec:app:swarm_composition_algorithm}.)

\begin{algorithm}[t]
	\noindent\textbf{Input:}
	\begin{minipage}[t]{0.9\linewidth}
		\begin{itemize}
			\item  $\G_1,\ldots,\G_n$ composable protocols, which are respectively realised by swarms $\S_1,\ldots,\S_n$ for the subscriptions $\subscription_1,\ldots,\subscription_n$.
			\item Mappings $\roleMap[1], \ldots, \roleMap[n]$ that respectively map machines to the role they implement, i.e.,  $\roleMapApp[k]{\S_k(i)}$ is the role realised by the machine at $\S_k(i)$.
		\end{itemize}
	\end{minipage}
	\noindent\textbf{Output:} \makebox[\linewidth][l]{A swarm $\S$ that realises $\G_1 \conc \ldots \conc \G_n$.}
\vspace*{-4mm}
{\small
	\begin{algorithmic}[1]
		\STATE Apply \Cref{alg:comp-swarm-subscriptions} to obtain a well-formed subscription $\subscription$ for $\G_1 \conc \ldots \conc \G_n$.
		\STATE For each swarm  $\S_k$, for each machine $\S_k(i)$: adapt $\S_k(i)$ into $\M'=\mathcal{A}\!\left(\M, (\G_i)_{i \in \{ 1, \ldots, n \}}, \roleMapApp[k]{\S_k(i)}, k, \subscription\right)$ according to \Cref{def:machine-adaptation}.
		\STATE Return a swarm built up from the   machines adapted in the previous step.
\end{algorithmic}}
\caption{Swarm composition (shortened version of \Cref{alg:swarm_composition} in \Cref{sec:app:swarm_composition_algorithm}).}\label{alg:swarm_composition_short}
\end{algorithm}

\ecoop{\begin{example}[Composing swarms by adapting their machines]\label{ex:machine-adaptation}

	 Recall the swarm $\S_\Gn{W}$ whose machines are projected from the swarm protocol $\Gn{Warehouse}$ and subscription $\subscription$ in \Cref{ex:bt-swarm-warehouse}.

In order to compose $\S_\Gn{W}$ with another swarm $\S_\Gn{F}$ projected from $\Gn{Factory}$
using \Cref{alg:swarm_composition_short}, we apply the adaptation in \Cref{def:machine-adaptation} to all machines in both $\S_\Gn{W}$ and $\S_\Gn{F}$, and we obtain a swarm $\S_{\Gn{W}\conc \Gn{F}}$ that realises the composed protocol $\Gn{Warehouse} \conc \Gn{Factory}$.
We illustrate this on the forklift machine $\afish[{\rolename{FL}}] = \Gn{Warehouse}\projshort{\subscription}{\rolename{FL}}$.
We first apply \Cref{alg:comp-swarm-subscriptions} (as in \Cref{ex:applying-sub-gen-alg}) to obtain a subscription $\subscription'$ such that $\Gn{Warehouse} \conc \Gn{Factory}$ is $\subscription'$-well-formed, and which contains the initial $\subscription$ and the interfacing event types needed for synchronising the machines that implement $\Gn{Warehouse}$ and $\Gn{Factory}$. By \Cref{def:machine-adaptation}, the machine adaptation of $\afish[{\rolename{FL}}]$ is:

    \ifbool{ecoop}{\smallskip}{\medskip}%
    \centerline{\(
        \mathcal{A}\!\left(\afish[{\rolename{FL}}], {\left(\Gn{Warehouse}, \Gn{Factory}\right), \rolename{FL}}, 1, \subscription'\right)%
        \;=\;%
             \left(\afish[{\rolename{FL}}] \concM{} \Gn{Warehouse} \projshort{\subscription'}{\rolename{FL}}\right) %
             \concM{} \Gn{Factory} \projshort{\subscription'}{\rolename{FL}}
    \)}
    \ifbool{ecoop}{\smallskip}{\medskip}%

\noindent%
where ``1'' is the index of $\Gn{Warehouse}$ in the tuple $(\Gn{Warehouse},\Gn{Factory})$.
The stages of the adaptation %
are depicted in \Cref{fig:ProjAdapt}, where the state numbers show the relationship between the original machine and its adaptations. The original $\afish[{\rolename{FL}}]$ is adapted with the projection of its original swarm protocol ($\Gn{Warehouse}$) using $\subscription'$ (computed with \Cref{alg:comp-swarm-subscriptions}) to obtain the machine $\afish[{\rolename{FL}}] \conc \Gn{Warehouse} \projshort{\subscription'}{\rolename{FL}}$: observe that this adapted machine has an additional state 3 which forces the machine to wait for an event of type $\lgtn{partOK}$ after $\lgtn{pos}$, between the original states 2 and 4. When $\afish[{\rolename{FL}}] \conc \Gn{Warehouse} \projshort{\subscription'}{\rolename{FL}}$ is further composed with $\Gn{Factory} \projshort{\subscription'}{\rolename{FL}}$, the adapted machine loses the possibility in state 4 of awaiting another event of type $\lgtn{partReq}$ and looping to state 2: %
hence, in state 4, the adapted machine can only await and process an event of type $\lgtn{closingTime}$.
Using \Cref{alg:swarm_composition_short}, this final machine is added to the adapted swarm $\S_{\Gn{W}\conc \Gn{F}}$.
Then, \Cref{alg:swarm_composition_short} repeats this adaptation for every machine in $\S_\Gn{W}$ and $\S_{\Gn{F}}$.
\end{example}}

  \begin{figure}[t]
  	\centering
  	\begin{tabular}{@{}c@{\qquad}c@{\qquad}c@{}}
  		\begin{minipage}{0.25\textwidth}
  			\centering
  			\begin{tikzpicture}[gt, scale = \ifbool{ecoop}{.55}{1}, node distance = 4mm and 15mm, font=\footnotesize]
  				\node (0) [plb]{1};
  				\node (1) [left =0.5cm of 0] {};
  				\node (2) [plb, above right = of 0] {2};
  				\node (3)  [plb, right = of 2] {4};
  				\node (4) [plb, below right = of 3] {5};
  				
  				\draw[->] (1) to node[sloped, anchor=center, above=1mm]{}  (0);
  				\path (0) edge[sloped, below] node{$ \lgtn{closingTime}?$}  (4);
  				\path (0) edge[sloped, above] node{$ \lgtn{partReq}?$}  (2);
  				\path (2) edge[loop above] node{$\lgtn{pos}!$} ();
  				\path (2) edge[sloped, below, bend right] node{$  \lgtn{pos}?$}  (3);
  				\path (3) edge[sloped, above, bend right] node{$  \lgtn{partReq?}$}  (2);
  				\path (3) edge[sloped, above] node{$\lgtn{closingTime}?$} (4);
  			\end{tikzpicture}
  			
  			{\small $\afish[{\rolename{FL}}]$}
  		\end{minipage}
  		&
  		\begin{minipage}{0.3\textwidth}
  			\centering
  			\begin{tikzpicture}[gt, scale = \ifbool{ecoop}{.55}{0.9}, node distance = 4mm and \ifbool{ecoop}{13}{15}mm, font=\footnotesize]
  				\node (0) [plb]{1};
  				\node (1) [left =0.5cm of 0] {};
  				\node (2) [plb, above right = of 0] {2};
  				\node (3)  [plb, right = of 2] {3};
  				\node (4)  [plb, right = of 3] {4};
  				\node (5) [plb, below right = of 4] {5};
  				
  				\draw[->] (1) to node[sloped, anchor=center, above=1mm]{}  (0);
  				\path (0) edge[sloped, below] node{$ \lgtn{closingTime}?$}  (5);
  				\path (0) edge[sloped, above] node{$ \lgtn{partReq}?$}  (2);
  				\path (2) edge[loop above] node{$\lgtn{pos}!$} ();
  				\path (2) edge[sloped, below] node{$  \lgtn{pos}?$}  (3);
  				\path (3) edge[sloped, below] node{$  \lgtn{partOK}?$}  (4);
  				\path (4) edge[sloped, above, bend right] node{$  \lgtn{partReq?}$}  (2);
  				\path (4) edge[sloped, above] node{$\lgtn{closingTime}?$} (5);
  			\end{tikzpicture}
  			
  			{\small $\afish[{\rolename{FL}}] \conc \Gn{Warehouse} \projshort{\subscription'}{\rolename{FL}}$}
  		\end{minipage}
  		&
  		\begin{minipage}{0.35\textwidth}
  			\centering
  			\begin{tikzpicture}[gt, scale = \ifbool{ecoop}{.55}{0.9}, node distance = 4mm and \ifbool{ecoop}{13}{15}mm, font=\footnotesize]
  				\node (0) [plb]{1};
  				\node (1) [left =0.5cm of 0] {};
  				\node (2) [plb, above right = of 0] {2};
  				\node (4) [plb, right = of 2] {3};
  				\node (5) [plb, right = of 4] {4};
  				\node (6) [plb, below right = of 5] {5};
  				
  				\draw[->] (1) to node[sloped, anchor=center, above=1mm]{}  (0);
  				\path (0) edge[sloped, below] node{$ \lgtn{closingTime}?$}  (6);        
  				\path (0) edge[sloped, above] node{$ \lgtn{partReq}?$}  (2);
  				\path (2) edge[loop above] node{$\lgtn{pos}!$} ();
  				\path (2) edge[sloped, below] node{$  \lgtn{pos}?$}  (4);
  				\path (4) edge[sloped, below] node{$  \lgtn{partOK}?$}  (5);
  				\path (5) edge[sloped, above] node{$\lgtn{closingTime}?$} (6);
  			\end{tikzpicture}
  			
  			{\small $(\afish[{\rolename{FL}}] \concM{} \Gn{Warehouse} \projshort{\subscription'}{\rolename{FL}}) %
  			\concM{} \Gn{Factory} \projshort{\subscription'}{\rolename{FL}}$}
  		\end{minipage}
  	\end{tabular}
  	
  	\caption{\ecoop{How \Cref{def:machine-adaptation} adapts the machine $\afish[{\rolename{FL}}]= \Gn{Warehouse}\projshort{\subscription}{\rolename{FL}}$ to $\Gn{Warehouse} \conc \Gn{Factory}$.}}
  	\label{fig:ProjAdapt}
  \end{figure}

We can now state the correctness of our swarm composition procedure. \emph{(Proof in \Cref{sec:app:proof_of_projection_composition_correct}.)}

\begin{theorem}[\ecoop{Correctness of swarm adaptation and composition}]
\label{lem:projection_composition_correct}%
Let $\S$ be a swarm obtained by applying \Cref{alg:swarm_composition_short} %
to \ecoop{swarms $\S_1,\ldots,\S_n$, protocols $\G_1, \ldots, \G_n$} and
\ecoop{subscriptions} $\subscription_1, \ldots, \subscription_n$. %
Then, $\S$ is eventually faithful to $\G_1 \conc \ldots \conc \G_n$.
\end{theorem}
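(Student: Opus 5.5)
The plan is to reduce the statement to \Cref{thm:eventual_fidelity}. By \Cref{lem:subscription_composition_WF}, the subscription $\subscription$ computed in step~1 of \Cref{alg:swarm_composition_short} makes $\G = \G_1 \conc \ldots \conc \G_n$ $\subscription$-well-formed. So it suffices to show that every adapted machine behaves, on every log, exactly like the projection $\G\projshort{\subscription}{\role}$ of the role $\role$ it plays. Concretely, I would prove that $\hat\stchange$ applied to the adapted machine and to $\G\projshort{\subscription}{\role}$ accepts the same events and enables the same emissions. Then the swarm $\S$ "realises" $\G$ up to this behavioural equivalence, and the proof of \Cref{thm:eventual_fidelity} goes through unchanged.

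\textbf{Step 1: absorbing the original machine.} Since $\M$ realises $\role$ in $\G_k$ for $\subscription_k$, we have $\M = \G_k\projshort{\subscription_k}{\role}$. I would show that $\M \concM{} \G_k\projshort{\subscription}{\role}$ is bisimilar to $\G_k\projshort{\subscription}{\role}$ whenever $\subscription_k \subseteq \subscription$. The argument is coinductive. Take the relation pairing $\G'\projshort{\subscription_k}{\role} \concM{} \G'\projshort{\subscription}{\role}$ with $\G'\projshort{\subscription}{\role}$ for subterms $\G' \subterm \G_k$ (closed under reachability via events outside $\subscription$, using \Cref{lem:equalprojections}). Common event types are in $\subscription_k(\role)$ and synchronise to the same continuation. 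Types in $\subscription(\role)\setminus\subscription_k(\role)$ occur only in the second component and leave the first unchanged. Here one must check, via the unique-projection property, that the first component still equals the $\subscription_k$-projection of the new subterm. Emitter sets also coincide, because $\role$'s emissions are subscribed by causal consistency.

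\textbf{Step 2: composing the projections.} I would prove a projection--composition lemma: for composable $\G_i$ and a $\subscription$ produced by \Cref{alg:comp-swarm-subscriptions},
\[ \G_1\projshort{\subscription}{\role} \concM{} \ldots \concM{} \G_n\projshort{\subscription}{\role} \;=\; (\G_1\conc\ldots\conc\G_n)\projshort{\subscription}{\role}. \]
The proof is again coinductive, matching the three clauses of \Cref{def:swarm-proto-composition} with those of \Cref{def:machine-composition}. Event types common to two projections are interfacing, because protocols are pairwise interfacing and confusion-free. The \emph{Interfacing} rule of the algorithm makes every role that subscribes to events after an interfacing event also subscribe to that event. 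Hence synchronisation in the machine product happens exactly where the protocol composition synchronises. Non-shared event types interleave in both constructions. For the emitter sets, I would use the observation after \Cref{def:machine-composition} that an emission is kept iff the corresponding acceptance survives the product. This matches the $\kappa$ of \Cref{def:swarm-proto-projection} computed on the composed term.

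\textbf{Step 3: updating sets.} The set $\mup{\cdot}$ of adapted machines is the set selected by the algorithm, which over-approximates the branching, joining and looping types of $\G$. Definition~\ref{def:realisation} demands equality, so I must show that the extra types do no harm. The extra types are never branching in $\G$, so every role subscribed to them sees them in the same order. Updating $\lbje$ at them then refines the pointers consistently for all machines, and emitted events carry matching pointers. I would re-examine the invariant in the proof of \Cref{thm:eventual_fidelity} (pointers agree with the last common updating event along the effective log) and check that it only needs $\mup{\cdot}$ to be a superset closed under the determinacy conditions, which \Cref{lem:subscribers-correctness} guarantees.

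I expect Step 2 to be the main obstacle, specifically the bookkeeping in which projection skips unsubscribed events $\lgt$ before an interfacing event. The same $\lgt$ is split across components, and one must show that the skipped prefixes commute. For this I would use confusion-freeness item~3 and the concurrency of non-interfacing events from different components. Step 3 is the second risk; if the superset argument fails, I would instead strengthen \Cref{thm:eventual_fidelity} to allow any updating set containing the branching, joining and looping types.
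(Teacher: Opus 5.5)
Your three steps are the same as the paper's. The paper also first shows $\M\concM{}\G_k\projshort{\subscription}{\role}=\G_k\projshort{\subscription}{\role}$; your Step~1 is more careful about re-choosing representatives via \Cref{lem:equalprojections}. It then proves that the product of the $\G_i\projshort{\subscription}{\role}$ equals $(\G_1\conc\ldots\conc\G_n)\projshort{\subscription}{\role}$, by induction on the number of components rather than a single coinduction. Finally it argues that the over-approximated updating set is harmless.

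The gap is in Step~2, in the sentence ``non-shared event types interleave in both constructions''. In \Cref{def:swarm-proto-composition}, a summand whose role is in $\roleset$ is excluded from items (1) and (2). It survives only through item (3), i.e.\ only if the other protocol offers the very same event type. So an event type emitted by an interfacing role that does not occur in the other component is \emph{blocked}; this is the phenomenon of \Cref{ex:behaviour_restriction}. In \Cref{def:machine-composition}, by contrast, $\typeSet$ contains only the event types occurring in both machines, so such a type counts as unique and is freely enabled. Your argument that shared types are interfacing, with the \emph{Interfacing} rule aligning the synchronisations, covers only one direction. Nothing in \Cref{def:composable} ensures that every interfacing event type is shared.

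This is not merely presentational. Take $\G_1=\cstmcmd{T}{a}\cdot\cstmcmd{T}{b}\cdot\zero$ and $\G_2=\cstmcmd{T}{a}\cdot\zero$, which are composable. \Cref{alg:comp-swarm-subscriptions} gives $\subscription(\rolename{T})=\{\lgtn{a},\lgtn{b}\}$, and $\G_1\conc\G_2=\cstmcmd{T}{a}\cdot\zero$, whose projection onto $\rolename{T}$ is $\{\lgtn{a}\}\cdot\lgtn{a}?\,\zero$. The adapted machine, however, is $\{\lgtn{a}\}\cdot\lgtn{a}?\,\bigl(\{\lgtn{b}\}\cdot\lgtn{b}?\,\zero\bigr)$, because $\lgtn{b}\notin\typeSet$. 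It can emit and accept an $\lgtn{a}$-event and then a $\lgtn{b}$-event. With no updating event types, nothing filters the latter: it is in the machine's effective log but not in the global effective log for $\G_1\conc\G_2$, so eventual fidelity fails.

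Hence Step~2, and the statement itself, need an extra hypothesis. One option: every event type emitted by a role occurring in both $\G_i$ and $\G_j$ occurs in both. Another: a machine product that synchronises on all interfacing event types, not just the shared ones. The paper's proof has the same blind spot. Its Cases~A/B of the reverse direction conclude a transition of the composed protocol without checking that $\evt$ is not blocked. The remark after \Cref{def:machine-composition} tacitly assumes that an interfacing role only emits event types in $\typeSet$. Once that assumption is explicit, your Step~2 coincides with the paper's case analysis.
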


The key insight in the proof of \Cref{lem:projection_composition_correct} is that the subscription $\subscription$  computed by \Cref{alg:comp-swarm-subscriptions} includes all the
events needed for the synchronization of the composed swarm. This ensures that each
adapted machine is a correct projection of the composed swarm protocol. Therefore, a machine adapted from the input
swarm $\S_i$ will execute without desynchronising and diverging from other machines adapted from another input swarm $\S_j$ (for $i, j \in \{1 \dots n\}$ such that $i \neq j$).

\section{Implementation and Experiments}
\label{sec:implementation}
We now present our implementation of the theory in \Cref{sec:swarm-protocols-composition,sec:projecting-machines-comp-swarm-proto,sec:swarm-composition}.
In \Cref{sec:implementation:branch-verif}
we discuss how we extend the open source Actyx toolkit~\cite{actyx} to support branch-tracking machine semantics
and machine adaptation (which are key elements of our compositional approach), 
and to statically verify our new, compositional well-formedness of swarm protocols.
Then, in \Cref{sec:implementation:experiments} we benchmark different strategies for
computing the subscriptions of composed swarms.
Our implementation is availble in the companion artifact of this paper \cite{furbach_2026_18459720}.

\subsection{Branch Tracking, Adaptations, and New Verification Facilities}
\label{sec:implementation:branch-verif}

\subparagraph{Branch Tracking and
  Adaptation.}%
The \texttt{machine-runner} library~\cite{machineRunner} of the Actyx
toolkit features a TypeScript API to program \emph{machine
  implementations}.
For instance, \Cref{fig:code_machine_door} yields the machine implementation
of a machine $\M$ (defined according to
\Cref{{def:formalisms:machine}}) for the role $\rolename D$ of our
$\Gn{Warehouse}$ use case.

The \texttt{machine-runner} library also helps instantiating and running machine implementations in a
swarm using the Actyx middleware~\cite{actyxMiddleware} which
implements the log propagation mechanisms described
in \Cref{sec:formalisms:swarms}.
Let us discuss our extension of \texttt{machine-runner}.

In order to support our new branch-tracking semantics
(\Cref{def:branch-tracking-transition-function-machines}), \emph{(1)}
we added a pointer $\ev.\lbjePointer$ (\quo{previous updating event})
to each event $\ev$, and \emph{(2)} modified \texttt{machine-runner}
to produce and access $\ev.\lbjePointer$.
We also added new methods that \longer{allow programmers to
}automatically adapt machine implementations according to
\Cref{def:machine-adaptation}.
This allows developers to re-use and deploy existing machine implementations in composed
swarms.
For instance, to adapt the machine implementation
\lstset{backgroundcolor={gray}}\lstinline[style=ES6]{door} of role
$\rolename{D}$ in \Cref{fig:code_machine_door} to the composed swarm
$\Gn{Warehouse} \conc \Gn{Factory}$, a programmer can simply invoke:%

\smallskip%
\centerline{%
\lstset{backgroundcolor={gray}}\lstinline[style=ES6]{adaptedDoor  =  adaptMachine('D', listOfProtocols, 0, subscriptions, [door, s0])}
}%
\smallskip%

\noindent%
The parameters to \lstset{backgroundcolor={gray}}\lstinline[style=ES6]{adaptMachine} are the machine role $\rolename{D}$, a list of the swarm protocols in the composition, the index of $\Gn{Warehouse}$ in that list, a well-formed (and automatically generated) subscription for the composition, and the original \lstset{backgroundcolor={gray}}\lstinline[style=ES6]{door} machine with its initial state.
\longer{
	Consider a machine implementation of the role $\rolename{D}$ from the protocol $\Gn{Warehouse}$ ($\G \projshort{\subscription}{\rolename{D}}$ in \Cref{fig:ProjWH}).
	To automatically adapt this implementation into a correct implementation of $\rolename{D}$ in a swarm for $\Gn{Warehouse} \conc \Gn{Factory}$,
	we invoke:

	\ifbool{ecoop}{\smallskip}{\medskip}%
	\lstset{backgroundcolor={gray}}\lstinline[style=ES6,float=t]{adaptMachine('mD', projectionOfCompositionD, eventTypesOfComposition, originalDMachine)}
	\vspace{-3mm}%
	
	\noindent%
	where the first argument \lstset{backgroundcolor={}}\lstinline[style=ES6,float=t]{'mD'} is an internal name of the machine,
	the second argument is the projection $(\Gn{Warehouse} \conc \Gn{Factory}) \projshort{\subscription}{\rolename{D}}$ (which is shown in \Cref{fig:ProjWHF} and can be obtained using the features discussed in \Cref{sec:implementation:new-static} below),
	and the last two arguments are the event types of the composition and the original machine implementing role $\rolename{D}$ in $\Gn{Warehouse}$, respectively. }

\subparagraph{New Static Verification Facilities.}
\label{sec:implementation:new-static}
The Actyx toolkit can statically verify several properties through the
\texttt{machine-check}~\cite{machineCheck} library.
More precisely, \texttt{machine-check} can verify ($i$) the
well-formedness (as defined in~\cite{DBLP:conf/ecoop/KuhnMT23}) of a
protocol $\G$ under a given subscription $\subscription$ and ($ii$)
whether the machine implementation of a role $\rolename R$ of $\G$
emits and accepts events as specified by
the projection of $\G \projshort \subscription {\rolename R}$.

We extended \texttt{machine-check} to verify our new well-formedness
of protocol compositions (\Cref{def:swarm-proto-composition,def:wf}),
and generate subscriptions for which a composition is well-formed
(\Cref{alg:comp-swarm-subscriptions}).
We also modified the \texttt{machine-check} projection code (which was based on \cite{DBLP:conf/ecoop/KuhnMT23}) 
to conform to our notion of projections in \Cref{def:swarm-proto-projection}.
This enables (1)  the static verification of machine implementations in composed swarm protocols, 
and (2) the automatic adaptation of previously-implemented machines to work correctly in a composed swarm.

\subsection{Experiments on Compositional Subscription Computation}
\label{sec:implementation:experiments}
A key element for the correct composition of protocols and swarms is
the availability of a subscription $\subscription$ suitable for composition
(required by \Cref{def:eventual-fidelity} and
\Cref{thm:eventual_fidelity}).
We analyse the performance and trade-offs of
\Cref{alg:comp-swarm-subscriptions}, which is instrumental for
computing subscriptions suitable for swarm composition
(cf. \Cref{sec:swarmcomposition}).

\Cref{alg:comp-swarm-subscriptions} may compute a subscription larger than necessary, 
which may lead to non-optimal performance in the deployed swarm.
Therefore, we compare \Cref{alg:comp-swarm-subscriptions} with a naive
algorithm (available in {\Cref{sec:app:inference_rules}}) that computes the \quo{exact} subscription by directly
applying \Cref{def:wf}.
More precisely, this algorithm accepts a set $\{\G_1 \mydots \G_n\}$  
of composable protocols and a set of subscriptions $\{\subscription_1\mydots \subscription_n\}$, 
expands the composed protocol according to
\Cref{def:swarm-proto-composition}
and applies \Cref{def:wf} to produce the smallest $\subscription \supseteq \subscription_1 \mydots \subscription_n$
for which $\G = \G_1 \conc \ldots \conc \G_n$ is $\subscription$-well-formed.
As mentioned in \Cref{sec:swarm-protocols-composition-computing-sub},
the expansion of the composed protocol makes the time complexity of
this algorithm exponential in the number of input protocols.

\subparagraph{Benchmark Selection and Experimental Setup.}%
We created a benchmark consisting of 454 sets of randomly-generated swarm protocols.
Each set consists of $n$ swarm protocols
$\G_1, \ldots, \G_n$ (with $n \in \{1, \ldots, 10\}$) with up to 9
roles each; each role emits up to 9 event types.
In order to cover a broad range of possible practical cases, the
random generation produces variety of branching and looping patterns.
Moreover, protocols $\G_i$ and $\G_{i+1}$ share an
interfacing role, for all $i \in \{ 1, \ldots, n-1 \}$.
All randomly-generated protocols follow a pattern: %
two interfacing event types $\evt_1$ and $\evt_2$ may be separated by
a random number of choices with non-interfacing event types, but
$\evt_1$ and $\evt_2$ are always enabled in the same order in all
protocols they occur in.

The experiments were conducted by invoking both \Cref{alg:comp-swarm-subscriptions} and our naive ``exact'' algorithm on empty
input subscriptions: this is to compute the smallest possible output subscription %
for each protocol composition $\G_1 \conc \ldots \conc \G_n$.
The execution times were measured using Criterion \cite{criteriondoc}; %
for each randomly-generated set of swarm protocols, %
Criterion invoked \Cref{alg:comp-swarm-subscriptions} and the ``exact'' algorithm at least 50 times
after 3 seconds of warm-up, reporting averages, medians, and standard deviations (including outliers).

\subparagraph{Results.}%
\Cref{fig:runtimes_refinement_2_edges} plots the execution time measured in our
experiments against the number of transitions in the composition $\G_1 \conc \ldots \conc \G_n$;
note that the number of transitions grows exponentially in $n$.
As expected, \Cref{alg:comp-swarm-subscriptions} is faster and more
scalable than the ``exact'' algorithm, which suffers from the exponential
blow-up of the composition.
For comparison, the generation of a subscription for a composition of
protocols with ${\sim}10^5$ transitions takes ${\sim}0.01$ seconds
using \Cref{alg:comp-swarm-subscriptions,} and ${\sim}10$ seconds 
using the ``exact'' algorithm.
Notably, the trend shown in \Cref{fig:runtimes_refinement_2_edges} persists even if we discount the time
needed to expand the composition of the input protocols before invoking the ``exact'' algorithm:
this is because the ``exact'' algorithm spends more time applying \Cref{def:wf} to generate 
a well-formed subscription for the expanded composition.

\begin{figure}[t]
\begin{minipage}[t]{0.55\linewidth}
    \pgfkeys{/pgf/number format/.cd,1000 sep={\,}}
\centering
\vspace{-6mm}%
\resizebox{1\linewidth}{!}{\begin{tikzpicture}[spy using outlines = {circle, size=2cm, magnification=5, connect spies}]
\begin{axis}[
    width=1.5\linewidth,
    height=5cm,
    ylabel = {\small Time ($\mu s$, logarithmic scale)},
    xlabel = {\small Number of transitions in the composition (logarithmic scale)},
    enlarge x limits = false,
    legend style = {at={(0.5,-0.26)}, anchor=north},
    legend image post style={mark size=2pt},
    every axis plot/.append style={
        fill opacity=0,
        thin},
    ymode=log,                %
    log basis y=10,           %
    xmode=log,                %
    log basis x=10,           %
    grid=both,                %
    minor tick num=9          %
    ]
\addplot[only marks, mark=+,mark size=0.9pt,draw=dtured,fill=dtured]
table[col sep=comma, x=number_of_edges, y=mean_point_estimate] {figures_experiments/exact_microseconds_execution_time.csv};
\addplot[only marks, mark=o,mark size=0.9pt,draw=lightblue2,fill=lightblue2]
table[col sep=comma, x=number_of_edges, y=mean_point_estimate] {figures_experiments/alg1_microseconds_execution_time.csv};
\legend{Exact,  \hyperref[alg:comp-swarm-subscriptions]{Alg. 1}} %
\end{axis}
\end{tikzpicture}}
\vspace{-3mm}%
\captionof{figure}{%
    Execution times for generating well-formed subscriptions, as a function of
    the number of transitions in the composition. %
    {\scriptsize(AlmaLinux 9.5 on Intel Xeon Gold 6226R, 32GB of RAM; averages produced using Criterion 0.7.0 \cite{criteriondoc} with sample size 50.)}%
    \label{fig:runtimes_refinement_2_edges}%
}
\end{minipage}
\hfill%
\begin{minipage}[t]{0.2\linewidth}
    \centering
\vspace{-8mm}%
\resizebox{1\linewidth}{!}{\begin{tikzpicture}
\begin{axis}[
    width=1.2\linewidth,
    scale only axis,
    height=5.0cm,
    boxplot/draw direction=y,
    xtick={1,2},
    xticklabels={\small Exact, \small \hyperref[alg:comp-swarm-subscriptions]{Alg. 1}},%
    x axis line style={opacity=0},
    enlarge y limits,
    ymajorgrids,
    cycle list name=DTU,
    every axis plot/.append style={semithick},
    ymax=1,
    ymin=0,
    ylabel={\small $E_\textit{frac}$},
]
\addplot+[draw=black,fill=dtured,
boxplot prepared={
	lower whisker=0.0930232558139534, lower quartile=0.16529362293403538,
	median=0.21272586602775279,
	average=0.22418626973013725,
	upper quartile=0.26489441930618396, upper whisker=0.4117647058823529,
},
] coordinates {(0,0.417391304347826) (0,0.4222222222222222) (0,0.4259259259259259) (0,0.4375) (0,0.4490740740740741) (0,0.4615384615384615) (0,0.4835164835164835) (0,0.5625) (0,0.6) (0,0.6071428571428571) (0,0.8333333333333334) (0,0.875)};

\addplot+[draw=black,fill=lightblue2,
boxplot prepared={
	lower whisker=0.1329937747594793, lower quartile=0.2381868131868131,
	median=0.28320105820105823,
	average=0.2994252299534819,
	upper quartile=0.3451011273209549, upper whisker=0.5,
},
] coordinates {(0,0.5195312499999999) (0,0.5220588235294118) (0,0.5234375) (0,0.5392156862745098) (0,0.5454545454545454) (0,0.5833333333333334) (0,0.59375) (0,0.6) (0,0.6428571428571429) (0,0.6483516483516484) (0,0.8333333333333334) (0,0.875)};

\end{axis}
\end{tikzpicture}}
\captionof{figure}{%
    Boxplot showing the average fraction of events in the computed subscriptions. (Lower is better.)%
    \label{fig:boxplotgeneral}%
}
\end{minipage}
\hfill
\begin{minipage}[t]{0.2\linewidth}
	\centering
\vspace{-8mm}%
\resizebox{1\linewidth}{!}{\begin{tikzpicture}
\begin{axis}[
    width=1.2\linewidth,
    scale only axis,
    height=5.0cm,
    boxplot/draw direction=y,
    xtick={1,2,3},
    xticklabels={\small Exact, \small \hyperref[alg:comp-swarm-subscriptions]{Alg. 1}, \small \cite{DBLP:conf/ecoop/KuhnMT23}},
    x axis line style={opacity=0},
    enlarge y limits,
    ymajorgrids,
    cycle list name=DTU,
    every axis plot/.append style={semithick},
    ymax=1,
    ymin=0,
    ylabel={\small $E_\textit{frac}$},
]

\addplot+[draw=black,fill=dtured,
boxplot prepared={
	lower whisker=0.2037037037037037, lower quartile=0.3875,
	median=0.4642857142857143,
	average=0.4761522661516901,
	upper quartile=0.55, upper whisker=0.7916666666666666,
},
] coordinates {(0,0.8) (0,0.8) (0,0.8) (0,0.8) (0,0.8) (0,0.8) (0,0.8125) (0,0.8125) (0,0.8148148148148148) (0,0.8333333333333334) (0,0.8333333333333334) (0,0.8333333333333334) (0,0.8333333333333334) (0,0.8333333333333334) (0,0.8333333333333334) (0,0.8333333333333334) (0,0.8333333333333334) (0,0.8333333333333334) (0,0.8333333333333334) (0,0.8571428571428571) (0,0.8666666666666666) (0,0.875) (0,0.875) (0,0.875) (0,0.875) (0,0.875) (0,0.875) (0,0.875) (0,0.9) (0,0.9166666666666666) (0,0.9285714285714286) (0,1.0) (0,1.0)};

\addplot+[draw=black,fill=lightblue2,
boxplot prepared={
	lower whisker=0.2037037037037037, lower quartile=0.3875,
	median=0.4642857142857143,
	average=0.4761522661516901,
	upper quartile=0.55, upper whisker=0.7916666666666666,
},
] coordinates {(0,0.8) (0,0.8) (0,0.8) (0,0.8) (0,0.8) (0,0.8) (0,0.8125) (0,0.8125) (0,0.8148148148148148) (0,0.8333333333333334) (0,0.8333333333333334) (0,0.8333333333333334) (0,0.8333333333333334) (0,0.8333333333333334) (0,0.8333333333333334) (0,0.8333333333333334) (0,0.8333333333333334) (0,0.8333333333333334) (0,0.8333333333333334) (0,0.8571428571428571) (0,0.8666666666666666) (0,0.875) (0,0.875) (0,0.875) (0,0.875) (0,0.875) (0,0.875) (0,0.875) (0,0.9) (0,0.9166666666666666) (0,0.9285714285714286) (0,1.0) (0,1.0)};

\addplot+[draw=black,fill=green,
boxplot prepared={
	lower whisker=0.4333333333333333, lower quartile=0.6333333333333333,
	median=0.7058823529411765,
	average=0.7081477348769393,
	upper quartile=0.7777777777777778, upper whisker=0.9835390946502058,
},
] coordinates {(0,0.375) (0,0.3777777777777777) (0,0.3833333333333333) (0,0.3928571428571428) (0,0.4) (0,0.4074074074074074)};

\end{axis}
\end{tikzpicture}}
\captionof{figure}{%
    Comparison of subscription sizes obtained using our definition of well-formedness and that of \cite{DBLP:conf/ecoop/KuhnMT23}. %
    \label{fig:boxplot-kmt-vs-compositional}%
}
\end{minipage}
\end{figure}

\Cref{fig:boxplotgeneral} compares the accuracy of the two algorithms,
measured as the fraction of event types involved in the generated
subscription $\subscription$.
Specifically, $E_\textit{frac}$ is the fraction of all event types
subscribed to by a role in $\subscription$ on average across all
roles.
For instance, $E_\textit{frac}=0.2$ means that, on average, a role
subscribes to one out of five event types, while
  $E_\textit{frac}=1.0$ means that all roles subscribe to every event
  type.
In general, a lower $E_\textit{frac}$ means less subscriptions and a
higher degree of concurrency, leading to a swarm with smaller and more
efficient machines.
The plot shows that
the \quo{exact} algorithm requires roles to subscribe to ${\sim}22.4\%$
of all event types in the input protocols.
\Cref{alg:comp-swarm-subscriptions} requires subscribing to
${\sim}29.9\%$ of the event types, which is quite close to the
\quo{exact} algorithm.
The difference occurs because \Cref{alg:comp-swarm-subscriptions} may not see that parts of the input protocols
become unreachable in the composition, and thus, may produce redundant subscriptions to some updating event types; also,
\Cref{alg:comp-swarm-subscriptions} adds interfacing event types to subscriptions (even when not required by well-formedness). This is necessary to ensure correct machine adaptation without expanding the composed protocol.

\subparagraph{Comparing Subscription Sizes with Those of \cite{DBLP:conf/ecoop/KuhnMT23}.}%
In \Cref{sec:overview:approach} we mentioned that the well-formedness definition of \cite{DBLP:conf/ecoop/KuhnMT23} may lead to larger subscriptions than necessary, especially compared to our \Cref{def:wf} and \Cref{alg:comp-swarm-subscriptions}.
We now quantify this claim experimentally. %
\Cref{fig:boxplot-kmt-vs-compositional} compares the sizes of minimal well-formed subscriptions obtained using our definition of well-formedness (\Cref{def:wf}) and that of \cite{DBLP:conf/ecoop/KuhnMT23} 
for 2093 randomly generated swarm protocols. %
Since \cite{DBLP:conf/ecoop/KuhnMT23} does not support concurrent events, these 2093 protocols contain no compositions and no concurrency.
As a consequence, the minimal subscriptions that ensure well-formedness (for both \cite{DBLP:conf/ecoop/KuhnMT23} and our \Cref{def:wf}) are expected to be larger than those in \Cref{fig:boxplotgeneral}.
The plot in \Cref{fig:boxplot-kmt-vs-compositional} shows that our \Cref{def:wf} and \Cref{alg:comp-swarm-subscriptions} require roles to subscribe to ${\sim}47.6\%$ of all event types in the input protocols on average,
whereas \cite{DBLP:conf/ecoop/KuhnMT23} requires ${\sim}70.8\%$.\footnote{%
	\label{footnote:outliers}%
	In \Cref{fig:boxplot-kmt-vs-compositional}, the ``exact'' and \hyperref[alg:comp-swarm-subscriptions]{Alg. 1} boxplot both contain two outliers with $E_\textit{frac} = 1$, greater than all values in the boxplot for \cite{DBLP:conf/ecoop/KuhnMT23}. %
	These are two very small protocols with a choice where a role $\role$ only appears in one of the branches.
	In this case, our \Cref{def:wf} requires $\role$ to subscribe to event types in all branches of the choice, whereas \cite{DBLP:conf/ecoop/KuhnMT23} only requires $\role$ to subscribe to the branch where $\role$ participates. As a result, our \Cref{def:wf} requires additional subscriptions for such cases -- but it also provides a stronger guarantee w.r.t.~\cite{DBLP:conf/ecoop/KuhnMT23}: a machine playing role $\role$ can determine if the swarm is following a branch of the swarm protocol where $\role$ is not involved.%
} %
\Cref{fig:boxplot-kmt-vs-compositional} also shows that, for non-concurrent protocols, \Cref{alg:comp-swarm-subscriptions} yields the same output subscriptions as the ``exact'' algorithm: this is because the protocols contain no interfacing event types, nor do they exhibit behaviour restrictions
like those in \Cref{ex:behaviour_restriction}.

\section{Related Work}
\label{sec:related work}
This work embraces the research path taken in~\cite{DBLP:conf/ecoop/KuhnMT23},
advocating
\emph{local-first cooperation}~\cite{lfc,localfirst} in the design,
analysis, and development of distributed applications modelled as swarms of machines,
using swarm protocols as a specification language.
We tackle the problem of compositional design and verification of swarms --
which was left open in~\cite{DBLP:conf/ecoop/KuhnMT23} and, to the best of our knowledge, is not
addressed elsewhere.
The importance of modularisation for managing software complexity has been well
established since early work~\cite{parnas1972criteria}:
separate parts of a system should be composed via well-defined
\emph{interfaces} that hide implementation details between components.
We accomplish this using \emph{interfacing roles} (\Cref{def:swarm-proto-interface})
whose emitted events act as the glue that binds different swarm
protocols and swarms into larger ones.%

\subparagraph{Technical Differences w.r.t.~\cite{DBLP:conf/ecoop/KuhnMT23}.}
Besides introducing our new compositionality results, in this work we simplify the theory of~\cite{DBLP:conf/ecoop/KuhnMT23}
aligning it to the common usages of the open source Actyx toolkit~\cite{actyx},
without reducing the expressiveness of the model:
\begin{itemize}
\item The grammar in \Cref{def:formalisms:machine} is as
	 in~\cite{DBLP:conf/ecoop/KuhnMT23}, except that we do not explicitly model the
	 ``commands'' a machine performs when emitting events:

	 we assume a 1:1 correspondence between events and commands, leaving the
	 latter implicit. Moreover, our transitions range over single event types
	 instead of finite non-empty sequences of event types.

  \item In \Cref{def:formalisms:swarm-semantics}, our rule
	 \textsc{Local} simplifies the corresponding rule
	 in~\cite{DBLP:conf/ecoop/KuhnMT23}: since we have a 1:1
	 correspondence between events and commands, the shuffling operator
	 of~\cite{DBLP:conf/ecoop/KuhnMT23} is unnecessary.
     Also, our rule \textsc{Global} simplifies the corresponding rule
	 in~\cite{DBLP:conf/ecoop/KuhnMT23}.\footnote{%
	 	More specifically, the closure property imposed on the sub-log $\alog'$
	 	in~\cite{DBLP:conf/ecoop/KuhnMT23} is not necessary in this work. This
	 	is because we do not need to ensure that for each event %
		$\aevent$ in the global log $\alog$ emitted by a machine of a swarm
	 	$\asysrt$, say $\asysrt(j)$, $\alog$ includes each event $\aevent'$ that
	 	precedes $\aevent$ in the local log of $\asysrt(j)$.}%
\end{itemize}

A key challenge of our compositional approach, compared to the
``monolithic'' approach of~\cite{DBLP:conf/ecoop/KuhnMT23}, is that %
in the monolithic setting, the eventual fidelity of a machine $\M$ is only ensured if $\M$ knows and accepts the \emph{guard event} for every command that precedes $\M$'s event emissions -- %
and this, in our one-event-per-command setting, would mean that $\M$ must know and accept all events that precede $\M$'s emissions. %
However, in our composed swarms, a machine $\M$ projected from one swarm protocol may encounter events produced by other machines and protocols that $\M$ did not originally anticipate.
To address this, our revised machine semantics (\Cref{sec:branch-tracking-machines})
tracks event causality more precisely, requiring a machine $\M$ that plays role $\role$ to accept only those events that can affect $\role$'s behaviour.
This is achieved via \emph{updating event types} and our  improved swarm semantics (\Cref{def:branch-tracking-swarm-semantics}): each event carries minimal causal information, i.e., a pointer to the last relevant updating event observed by the emitting machine.
This allows a machine to decide whether to accept or ignore an event.
The main result of \cite{DBLP:conf/ecoop/KuhnMT23} is that any swarm
$\S$ that realises a well-formed $\G$ is eventually faithful to $\G$; %
we generalise this result to composed swarms and protocols in
\Cref{sec:eventual-fidelity,sec:machine-adaptation}.

\subparagraph{Global Types and Behavioural Types.}%
Swarm protocols are inspired by \emph{global types}, a.k.a.~\emph{choreographies}~\cite{honda2008multiparty,honda2016multiparty}; they
describe multiparty interaction pattern from a global point of view,
and they can be projected onto roles to obtain local
specifications, against which one can check the correctness of
implementations.
{%
	The literature on global types includes compositional approaches: e.g., 
	partial choreographies that can be composed while preserving deadlock-freedom \cite{DBLP:conf/concur/MontesiY13},
	and choreography refinement to replace part of a choreography with a more complex one \cite{DELIGUORO2022100776}. %
}%
However, the highly dynamic and non-deterministic semantics of swarms is
significantly different from the semantics in global types literature. %
When compared to the literature on behavioural
types~\cite{gay2017behavioural,hlvccdmprt16,ancona2016behavioral}, our approach
and~\cite{DBLP:conf/ecoop/KuhnMT23} depart from the usual assumptions in several
key ways.
Standard behavioural typing approaches %
ensure all interacting components are aware of the current state of the distributed computation.
In contrast, we allow inconsistencies between machines to emerge, as long as consistency is \emph{eventually} achieved~\cite{burckhardt2014principles}:
e.g., machines can make transient discordant choices that are reconciled as the logs propagate in the swarm.
This is akin to data replication methods, where there is a trade-off between availability, consistency, and partition tolerance~\cite{CAP}. Several approaches have emerged to balance this trade-off, including conflict-free replicated data types~\cite{DBLP:conf/sss/ShapiroPBZ11}, cloud types~\cite{burckhardt2012cloud}, consistency contracts~\cite{sivaramakrishnan2015declarative}, invariants~\cite{gotsman2016cause, kaki2018safe, DBLP:journals/pvldb/BalegasDFRP18}, linearizability~\cite{wang2019replication}, and operational models for applications such as GSP~\cite{DBLP:conf/ecoop/BurckhardtLPF15, gotsman2017consistency};
a key difference in our approach lies in how we achieve eventual consistency, and how we introduce compositional mechanisms.
Moreover, in standard behavioural type sytems, the variability in the number of participants has been addressed via \emph{parametricity}, e.g.~in multiparty session types~\cite{ydbh10,dy11,mybh12,chsny19,jy20}:
this requires explicit handling of
the parameters of the protocol. 
Instead, swarm protocols do not restrict the
number of instances playing a given role.

In behavioural types literature, most compositional approaches require
component protocols to be aware that they are part of a composition: e.g.~hybrid types~\cite{gheriyoshida23} distinguish
inter-component interactions from intra-component interactions; other
approaches allow composition via open endpoints~\cite{DBLP:conf/concur/MontesiY13,stolze2023composable}.
An approach closer to ours is \emph{Participants-as-Interfaces}
(PaI)~\cite{DBLP:journals/jlap/BarbaneradH19,bdlt21,blt22a,barbanera2023multicompatibility,blt23}, where interfaces are defined by
roles that perform complementary behaviours. Like us, PaI allows viewing
a closed system as an open one -- but PaI uses complementary roles as \emph{forwarders} to connect protocols, whereas our interfacing roles simply play their roles and do not act as forwarders.

\subparagraph*{Choreographic Programming Languages \cite{M23}}\hspace{-0.5em}%
differ from swarm protocols in both goals and approach. Choreographic programs provide a global \emph{implementation} of communicating roles via point-to-point messages, with notions of compositionality inspired by programming languages theory (e.g., higher-order choreographic programs~\cite{ShenICFP23,CruzFilipeECOOP23}). Swarm protocols and projected machines, by contrast, serve as specifications only: machines are implemented and checked separately, e.g.,~using the Actyx toolkit. Moreover, machines operate under asynchronous, subscription-based event propagation, with possible event reordering and machine replication. These goals and semantics are significantly different from choreographic programming languages, and thus, they lead to a different notion of composition.

\section{   Conclusion and Future Work}
\label{sec:conclusion}
\label{sec:future-work}

We have presented novel results on the compositional verification and implementation of swarms and swarm protocols. Our results enable the compositional verification of swarm systems, as well as the reuse of the implementations of swarm machines in larger composed systems.
This enables a compositional swarm design process, as well as the seamless extension of existing swarm systems with new functionality.
We believe our novel compositionality results can be applied to other local-first distributed systems.
We have implemented our results in a custom version of the open source Actyx toolkit for swarm application development \cite{actyx}, thus enabling the compositional design and implementation of swarms in practice.
We have also streamlined the swarm protocol theory presented in~\cite{DBLP:conf/ecoop/KuhnMT23}, while aligning it with the observed use of Actyx tools in the real world.

\subparagraph{Future Work.}%
{%
	Swarms and machines have some similarity with \emph{actor systems} \cite{HewittActors1973,AghaActors1986} (asynchronous communication, replication)
	but differ significantly in their semantics (out-of-order event propagation). We are not aware of specific work on the composition of actor systems -- although the communicating finite-state machines used in \cite{bdlt21,blt22a,blt23} have analogies with actors (message buffers can be seen as a form of mailbox), hence their approach to compositionality may be applicable to actors. %
	This is a valuable area of study, and we will investigate whether our compositionality results can be applied to actor systems.%
}%

{%
	Currently, our method of composing protocols and swarms (\Cref{alg:comp-swarm-subscriptions,alg:swarm_composition_short})
	and our correctness result (\Cref{lem:projection_composition_correct}) are limited to compositions of \emph{sequential} protocols without internal concurrency: %
	i.e., we support compositions of the form $\G_1 \conc \ldots \conc \G_n$ where each $\G_i$ is a sequential protocol (by \Cref{def:composable}) -- and this allows us to
	reuse and compose all swarm protocols that are well-formed according to \cite{DBLP:conf/ecoop/KuhnMT23}, and all machines projected from such protocols. %
	However, we do not currently support arbitrary compositions of concurrent protocols, like e.g.~$(\G_1 \conc \G_2) \conc (\G_3 \conc \G_4)$. %
}%
To address this limitation, we plan to build upon our results
and generalise them to design a full-fledged
\emph{module system for swarm development}. 
We will also study better compositional methods for the computation of subscriptions and the minimisation of swarm machines,
in order to improve the precision of our \Cref{alg:comp-swarm-subscriptions} (measured in \Cref{sec:implementation:experiments})
without losing scalability for larger swarms.

\bibliography{main}
\newpage
\part*{Appendices}
\appendix

\section{Proof of \autoref{lem:equalprojections}}\label{sec:app:proof_equalprojections}

We introduce the concept of causal dependency between event types which is implicitly used by the function $\newroles$.
\begin{definition}[{Definition: causal dependency}]
An event type \( \evt' \) is causally dependent on \( \evt \) in \( \G \) if there exists a sequence of transitions
$$\G \xrightarrow{\evt_0}  \xrightarrow{\lgtn{l}_1 \cdot \evt_1} \ldots  \xrightarrow{\lgtn{l}_n\cdot \evt_n}$$ where 
$\evt_0=\evt$, $\evt_n=\evt'$, and 
$\evt_i, \evt_{i+1}$ are not concurrent for all $i<n$.
\end{definition}

A role \( \role \) is in the set \( \newroles \) if and only if there exists an event \( \evt' \) in \( \G \) that is causally dependent on \( \evt \) and that \( \role \) subscribes to.

\subsection*{Proof of \Cref{lem:equalprojections} by Contradiction}
Given a swarm protocol $\G$, a subscription $\subscription$ such that $\G$ is well-formed w.r.t. $\subscription$ and a role $\role$, we will prove that all projections of $\G$ onto $\role$ for $\subscription$ behave identically.

Assume, for contradiction, that there exist two distinct correct projections \( \M_1 \) and \( \M_2 \) such that:
\begin{itemize}
	\item[A:] \( \M_1 \) and \( \M_2 \) receive different log types, or
	\item[B:] They emit different event types after receiving the same log type.
\end{itemize}
We analyze these two cases separately.

\subsection*{Case A: Different Log Types.}  
There is a shortest sequence of event types $\evt_1, \dots, \evt_n, \evt$ such that:
\begin{eqnarray*}
	\M_1=\G \proterm \receiving{\cmdnrindx{1}} \G_1  \proterm \ldots \receiving{\cmdnrindx{n}} \G_n \proterm &\receiving{\cmdnr} &\G_{n+1}\proterm\label{eq:eq-proj-1a}\\
	\M_2=\G \proterm \receiving{\cmdnrindx{1}} \G'_1 \proterm  \ldots \receiving{\cmdnrindx{n}} \G'_n \proterm &\not\receiving \evt&\label{eq:eq-proj-1b}
\end{eqnarray*}
We show that this is not the case.
\paragraph*{Induction Hypothesis:}
It holds that for any pairs of sequences 
\begin{eqnarray}
	\G \proterm \receiving{\cmdnrindx{1}} \G_1  \proterm \ldots \receiving{\cmdnrindx{n}} \G_n \proterm &\receiving{\cmdnr} \G_{n+1}\proterm\label{eq:appendable1}\\
	\G \proterm \receiving{\cmdnrindx{1}} \G'_1 \proterm  \ldots \receiving{\cmdnrindx{n}} \G'_n \proterm&\label{eq:appendable}
\end{eqnarray}
there is a $\G'_{n+1}$ such that we can append a transition $\G'_n \proterm \receiving{\cmdnr} \G'_{n+1}\proterm$ to Sequence (\ref{eq:appendable}).
We apply an induction over the length of the sequences $n$:
\paragraph*{Induction Basis $n=0$:}
Consider:
\begin{eqnarray*}
	\G \proterm &\receiving{\cmdnr} \G_{2}\proterm\\
	\G \proterm &
\end{eqnarray*}
It follows immediately that we can append a transition $\G \proterm \receiving{\cmdnr} \G_{2}\proterm$, since it is the same projection.

\paragraph*{Induction Step $(n-1)\rightarrow n$:}
It follows from the definition of projections and from the Sequences \ref{eq:appendable1} and \ref{eq:appendable}, that $\evt_1, \dots, \evt_n, \evt$ is the shortest series of event types such that:
\begin{eqnarray}
	\G  \xrightarrow{\cmdnrindx{a,1}} \G_{a,1}   \ldots \   \xrightarrow{\cmdnrindx{a,k}} \G_{a,k} \xrightarrow{\cmdnrindx{1}} \G_1   \ldots \xrightarrow{\evt_n}\G_n &\xrightarrow{\cmdnrindx{c,1}  \ldots \cmdnrindx{c,l}} \G_{c,l}  \xrightarrow{\cmdnr} \G_{n+1}\label{eq:project:ib1}\\
	\G  \xrightarrow{\cmdnrindx{b,1}} \G_{b,1}   \ldots \xrightarrow{\cmdnrindx{b,m}} \G_{b,m} \xrightarrow{\cmdnrindx{1}} \G'_1  \dots \xrightarrow{\evt_n} \G'_n&\label{eq:project:ib2}
\end{eqnarray}
and the only event types in Sequence \ref{eq:project:ib1} and \ref{eq:project:ib2} that occur in $\subscription(\role)$ are $\evt_1\mydots \evt_n,\evt$.

We assume WLOG that Sequence \ref{eq:project:ib1} is minimal, meaning the maximal subsequence of Sequence \ref{eq:project:ib1} that can be removed while still maintaining a valid path in the swarm protocol with $\evt_1$ and $\evt$ was removed.
For readability, we write Sequence \ref{eq:project:ib1} as:
$$\G  \xrightarrow{\cmdnrindx{a,1}}    \ldots \xrightarrow{\cmdnrindx{a,k}} \xrightarrow{\cmdnrindx{1}}  \ldots \xrightarrow{\evt_n} \xrightarrow{\cmdnrindx{c,1}}    \ldots \xrightarrow{\cmdnrindx{c,l}}   \xrightarrow{\cmdnr} $$
We will now show that we can apply a series of changes to Sequence \ref{eq:project:ib1} that ensure that it has a prefix that is identical to Sequence \ref{eq:project:ib2}.

Let $x$ be the smallest index such that $\evt_{a,x}\neq \evt_{b,x} $ (assuming it exists). 

\subparagraph{Case A1.  There is a $\evt_{i,j}$ in Sequence \ref{eq:project:ib1}  such that it is the first occurence of event type $\evt_{b,x}$ after $\evt_{a,x}$:}
$$\G  \xrightarrow{\cmdnrindx{a,1}}    \ldots  \xrightarrow{\cmdnrindx{a,x}}    \ldots  \xrightarrow{\cmdnrindx{i,j}}\ldots  \xrightarrow{\cmdnr} $$
Since $x$ is the smallest index where the sequences diverge, it holds $\G_{a,(x-1)}=\G_{b,(x-1)}$.

\subparagraph{Argument for Concurrency.} Since both transitions $\evt_{a,x}$ and $\evt_{b,x}$ are outgoing from $\G_{a,(x-1)}=\G_{b,(x-1)}$ and $\role$ does not subscribe to at least one of the event types, it follows that either the event types $\evt_{a,x}$ and $\evt_{b,x}$ are concurrent with each other or they are branching. If they are branching, then determinacy requires that they do not satisfy condition $\role \in \newroles[\evt_{a,x}][\G_{a,(x-1)}]$ of \autoref{def:determinacy}.%
Thus, $\evt$ is not causally dependent on $\evt_{a,x}$: There is no subsequence of Sequence \ref{eq:project:ib1} starting at $\evt_{a,x}$ and ending at $\evt$ where each event type is non-concurrent with its successor (see definition of $\newroles$ in \Cref{def:determinacy}).
This means we can move $\evt_{a,x}$ and all events that are causally dependent on it after $\evt$.
This means $\evt_{a,x}$ can be removed. This is a contradiction to the sequence being minimal. Thus, the event types are concurrent.

Since $\evt_{b,x}$ is concurrent with $\evt_{a,x}$ and they are both outgoing from $\G_{a,(x-1)}=\G_{b,(x-1)}$, this means $\G_{a,(x-1)} \xrightarrow{\evt_{a,x}} \G_{a,(x)} \xrightarrow{\evt_{b,x}}$. 
Thus, $\evt_{b,x}$ and $\evt_{a,x+1}$ are both outgoing from $\G_{a,(x)}$. We can apply the Argument for Concurrency, which means they are concurrent.
We iteratively apply this argument to show that $\evt_{b,x}$ is concurrent with every event type between $\evt_{a,x}$ and $\evt_{i,j}$ in Sequence \ref{eq:project:ib1}.
It follows that we can move $\evt_{i,j}$ directly before  $\evt_{a,x}$ and still get a valid interleaving:
$$\G  \xrightarrow{\cmdnrindx{a,1}}    \ldots  \xrightarrow{\cmdnrindx{i,j}} \xrightarrow{\cmdnrindx{a,x}}    \ldots   \xrightarrow{\evt} $$

\subparagraph{Case A2. $\evt_{b,x}$ does not occur in Sequence \ref{eq:project:ib1} after $\evt_{a,x}$:}
Here, we can apply the Argument for Concurrency from Case A1 and show that $\evt_{b,x}$ is concurrent with every event type after $\evt_{a,x}$ in Sequence \ref{eq:project:ib1}.
This means we can append $\evt_{b,x}$ at the end of the sequence:
$$\G  \xrightarrow{\cmdnrindx{a,1}}  \ldots \xrightarrow{\cmdnrindx{a,x}}    \ldots \xrightarrow{\evt} \xrightarrow{\evt_{b,x}}.$$
Since it is concurrent with $\evt_{a,x}$ and every event type afterwards in Sequence \ref{eq:project:ib1}, we can move the appended $\evt_{b,x}$ directly before $\evt_{a,x}$.

\subparagraph{Conclusion.}
Each time we apply this change, the prefix of Sequence \ref{eq:project:ib1} that is equal to Sequence \ref{eq:project:ib2} gets one transition longer.
We keep applying these changes until $\G_{a,m}=\G_{b,m}$. If no such diverging index existed, then this holds initially and no application were necessary.
If the next transition is $\G_{a,m} \xrightarrow{\evt_{a,m+1}}$, then we move $\evt_{1}$ directly before $\xrightarrow{\evt_{a,m+1}}$ in Sequence \ref{eq:project:ib1}, using the same concurrency argument as above.

We have now adapted Sequence \ref{eq:project:ib1} such that it starts with the prefix $$\G  \xrightarrow{\cmdnrindx{b,1}} \G_{b,1}   \ldots \xrightarrow{\cmdnrindx{b,m}} \G_{b,m} \xrightarrow{\cmdnrindx{1}} \G'_1$$ followed by a sequence 
$$ \G'_1 \ldots \xrightarrow{\evt_{2}}\G''_2 \ldots \xrightarrow{\evt_{n}}\G''_n \xrightarrow{e}\G''_{n+1}.$$
Thus, we have constructed a sequence
with a prefix that is equal to Sequence \ref{eq:project:ib2} until $\G'_{1}$.
We remove the prefix and get the following sequences for the series $\evt_2 \dots \evt_n$ of length $(n-1)$:
\begin{eqnarray}
	\G'_1  \proterm \receiving{\evt_2}\ldots \receiving{\cmdnrindx{n}} \G''_n \proterm &\receiving{\cmdnr} \G''_{n+1}\proterm\\
	\G'_1 \proterm \receiving{\evt_2}  \ldots \receiving{\cmdnrindx{n}} \G'_n \proterm&
\end{eqnarray}
We now apply the induction hypothesis to conclude that there is a $\G'_{n+1}$ with $\G'_n \proterm \receiving{\cmdnr} \G'_{n+1}\proterm$. 
Since Sequence \ref{eq:appendablelast1} also ends in  $\G'_{n}$, we can append $\G'_n \proterm \receiving{\cmdnr} \G'_{n+1}\proterm$ to it:
\begin{eqnarray}
	\G \proterm \receiving{\cmdnrindx{1}} \G_1  \proterm \ldots \receiving{\cmdnrindx{n}} \G_n \proterm \; &\receiving{\cmdnr} \G_{n+1}\proterm\label{eq:appendablelast}\\
	\G \proterm \receiving{\cmdnrindx{1}} \G'_1 \proterm  \ldots \receiving{\cmdnrindx{n}} \G'_n \proterm.&\label{eq:appendablelast1}
\end{eqnarray}
\medspace

\subsection*{Case B: Emitting Different Event Types.}  
We have shown that projections of the same swarm protocol receive the same log types. 
It remains to show that they can always emit the same set of event types after receiving the same log type. This is easy to see since the event types a projection can emit are determined by the events from its own role $\role$ that it can receive:

Examine $\G_n \proterm$ from Case A. It holds that  $\G_n\proterm$ emits event type $\evt$ iff there is an $\cmd$ and some sequence $$\G_n \xrightarrow{\cmdnrindx{a,1}} \ldots \xrightarrow{\cmdnrindx{a,m}} \G''\xrightarrow{\evt} \G_{n+1}$$ where $\role$ does not subscribe to any $\evt_{a,1}\ldots \evt_{a,m}$. 
This is the case iff there is an $\cmd$ and $\G_n \proterm \receiving{\cmdnr}$. 
Since $\G'_n \proterm$ can receive the same event types, it follows $\G'_n \proterm \receiving{\cmdnr}$ and thus $\G'_n \proterm $ can emit $\evt$.

\section{Formal Definitions of Effective logs}
\label{sec:app:effective-logs}
To formally define eventual fidelity, we introduce the concept of \emph{effective global logs} (\Cref{def:global-effective-log}),
\ecoop{obtained by filtering out all events $\ev$ that have the wrong type} according to $\G$, or have the wrong pointer $\ev.\lbjePointer$;
to this end, \Cref{def:swarm-proto-branch-tracking-semantics} defines a transition system of a swarm protocol with branch tracking, analogue to the transition systems of machines in \Cref{def:branch-tracking-transition-function-machines}.

\begin{definition}[Global effective log]
  \label{def:swarm-proto-branch-tracking-semantics}
  \label{def:global-effective-log}
  Let $\ev$ be of type $\evt$, $\aevent.\lbjePointer = \lbje(\evt)$, and $\G \xrightarrow{\evt} \G'$. We extend the partial function $\xrightarrow{\ev}$ to swarm protocols as follows: %
  \emph{(1)} if $\evt$ is not updating then $\parSt[\G] \xrightarrow{\ev} \parSt[\G']$; %
  and \emph{(2)} if $ \evt$ is updating, then $\parSt[\G] \xrightarrow{\ev} \parSt[\G'][{\lbje[\branch(\evt) \myupd \ev ]}]$.

	We define the global effective log filtering function $\effof{\lg}{\G}$ as:

  \ifbool{ecoop}{\smallskip}{\medskip}%
  \centerline{\(
	  \begin{array}{rcl}
	  	\effof{\epsilon}{\G} &=& \epsilon
      \\
	  	\effof{\ev\cdot \lg}{\G} &=&
      \begin{cases}
        \ev \cdot \effof{\lg}{\G'}[\lbje'] &
	  	  \text{if }  \parSt[\G]  \xrightarrow{\ev} \parSt[\G'][\lbje']
        \\
        \effof{\lg}{\G}& \text{otherwise}
      \end{cases}
	  \end{array}
  \)}%
  \ifbool{ecoop}{\smallskip}{\medskip}%

	Given a log $\lg$, we define the \emph{global effective log for $\G$} as $\srteffof[\lg][\G] \define \effof{\lg}{\G}[\emptyset]$.
\end{definition}

\ecoop{Similarly to \Cref{def:swarm-proto-branch-tracking-semantics}, the effective log of a machine $\M$ (\Cref{def:machine-effective-log}) only includes the events in a log that $\M$ actually accepts.}

\begin{definition}[Effective Log of a Machine]
  \label{def:machine-effective-log}
  \label{def:machine-proto-role-effective-log}
	Using $\xrightarrow{\ev}$ from \Cref{def:branch-tracking-transition-function-machines}, we define the effective log filtering function $\effof{\lg}{\M}$ for a machine $\M$:

	\ifbool{ecoop}{\smallskip}{\medskip}%
  \centerline{\(
	  \begin{array}{rcl}
		  \effof{\epsilon}{\M} &=& \epsilon \\
		  \effof{\ev\cdot \lg}{\M} &=&
      \begin{cases}
        \ev \cdot \effof{\lg}{ \M'}[\lbje'] & \text{if }  \parSt  \xrightarrow{\ev}  \parSt[\M'][\lbje']\\
		    \effof{\lg}{\M} & \text{otherwise}
      \end{cases}
    \end{array}
  \)}%
  \ifbool{ecoop}{\smallskip}{\medskip}%

  Given a log $\lg$, we define the \emph{effective log for $\M$} as $\srteffof[\lg][\M]\define\effof{\lg}{\M}[\emptyset]$.

\end{definition}

\section{Proof of Eventual Fidelity}\label{sec:app:fidelity-proof}

In order to prove eventual fidelity, we require some technical results:
 If an event $\ev$ has type $\evt$, we say that $\role$ \emph{subscribes to} $\ev$ if $\evt \in \subscription(\role)$.
 
We define $(\parSt[\G], \epsilon)\define \parSt[\G]$. 
If $\parSt[\G] \xrightarrow{\ev} \parSt[\G'][\lbje']$, then $(\parSt[\G], \ev.\lg)\define (\parSt[\G'][\lbje'], \lg) $, otherwise $(\parSt[\G], \ev.\lg)\define (\parSt[\G], \lg)$.
We set $(\G,\lg)\define  (\parSt[\G][\initlbje],\lg)$.

We define the projection of a parameterized swarm protocol $\parSt[\G]$ as the projection of the non-parameterized version of the swarm protocol $\G$ with the same parameter: $\parSt[\G] \proterm \define \parSt[\G \proterm]$.

We write $\parSt[\G\proterm] \sim_\role \parSt[\G'\proterm][\lbje']$ if $\G \proterm = \G' \proterm$ and for any $\evt \in \subscription(\role)$ holds $\lbje(\evt)=\lbje'(\evt)$.
We call event types that are not updating \emph{simple event types}. These are the event types that don't update the branch tracking function $\lbje$.

\begin{lemma}\label{lem:equivprojections}
	Let $\G$ be $\subscription$-well-formed and $\projeffof{\lg}{\G}[\role]=\srteffof \downarrow_{\subscription(\role)}$. It follows $(\G  \proterm ,\lgname{l}) \sim (\G ,\lgname{l}) \proterm $.
\end{lemma}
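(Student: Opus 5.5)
Read the statement as follows. Running the projected machine $\G\proterm$ on $\lg$ with the log-processing relation gives a pair $(\M',\lbje_\M)$. Running $\G$ on $\lg$ with the global branch-tracking relation of \Cref{def:global-effective-log} gives $(\G',\lbje_\G)$, whose projection is $(\G'\proterm,\lbje_\G)$. The claim is that these two pairs are $\sim_\role$-related: $\M'=\G'\proterm$, and $\lbje_\M(\evt)=\lbje_\G(\evt)$ for every $\evt\in\subscription(\role)$. The hypothesis $\projeffof{\lg}{\G}[\role]=\srteffof\downarrow_{\subscription(\role)}$ says the two runs accept exactly the same subscribed events.

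I would prove a slightly stronger invariant by induction on the length of $\lg$, generalising the starting configuration. Suppose we have $\parSt[\G_0][\lbje_0]$ and a machine configuration $(\M_0,\lbje'_0)$ with $(\G_0\proterm,\lbje_0)\sim_\role(\M_0,\lbje'_0)$. Suppose further that the accepted events of the two runs on $\lg$ coincide modulo restriction to $\subscription(\role)$. Then the final configurations are again $\sim_\role$-related. The base case $\lg=\epsilon$ is immediate, and the statement follows by instantiating $\G_0=\G$, $\M_0=\G\proterm$, $\lbje_0=\lbje'_0=\emptyset$. For the step $\lg=\ev\cdot\lg'$ with $\ev$ of type $\evt$, there are three cases.

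\emph{Case 1: $\evt\notin\subscription(\role)$.} The machine has no $\evt?$ branch and ignores $\ev$, while the protocol may advance $\G_0\xrightarrow{\evt}\G_1$. I must show $\G_1\proterm=\G_0\proterm$. This is exactly the content of \Cref{lem:equalprojections} and its proof: subterms reachable through unsubscribed event types have equal projections. I must also show that $\lbje_0$ is unchanged on $\subscription(\role)$ if $\evt$ is updating. This holds because updating only touches $\branch(\evt)$. If $\branch(\evt)$ contained some $\evt'\in\subscription(\role)$ reachable causally after $\evt$, then $\role\in\newroles[\evt][\G_0]$, and the Branching or Joining clause of \Cref{def:determinacy} would force $\evt\in\subscription(\role)$, a contradiction. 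For looping event types, I use the Looping clause in the same way.

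\emph{Case 2: $\evt\in\subscription(\role)$ and neither run accepts $\ev$.} By the hypothesis on effective logs, acceptance agrees, so both configurations stay the same.

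\emph{Case 3: $\evt\in\subscription(\role)$ and both runs accept $\ev$.} The machine moves to the $\evt$-successor. By \Cref{def:swarm-proto-projection}, $\G_0\proterm\xrightarrow{\evt?}$ leads to $\G_1\proterm$ with $\G_0\xrightarrow{\evt}\G_1$, up to unsubscribed prefixes, which \Cref{lem:equalprojections} absorbs. The pointer condition $\ev.\lbjePointer=\lbje(\evt)$ agrees because $\evt\in\subscription(\role)$ and the maps agree there. For the updates, I need $\mbranch{\M_0,\evt}\cap\subscription(\role)=\branch(\evt)\cap\subscription(\role)$. Both are the event types following $\evt$ up to the next updating type, non-concurrent with $\evt$. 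The machine sees only the subscribed ones, and causal consistency guarantees the chain is visible to $\role$.

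The main obstacle is this last equality of branch sets, together with Case 1's claim that unsubscribed updating events never write to subscribed entries of $\lbje$. Both require relating the machine-level $\mbranch{\M,\evt}$, computed on the projected automaton, to the protocol-level $\branch(\evt)$. I would first try to prove a lemma that every $\evt'\in\branch(\evt)\cap\subscription(\role)$ witnesses $\role\in\newroles[\evt]$. Determinacy then yields $\evt\in\subscription(\role)$, from which the correspondence follows by walking the projection definition.
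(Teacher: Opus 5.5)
Your proof is essentially the paper's argument unrolled into an induction over the log. The paper likewise takes the machine state from \Cref{lem:equalprojections}. It then rules out a mismatch on a subscribed entry of $\lbje$ because such an entry could only be overwritten by an unsubscribed updating event whose $\branch$-set contains it, which would place $\role$ in the corresponding $\mathit{roles}$ set and, by determinacy, force that event's type into $\subscription(\role)$; your Case~3 additionally makes explicit the agreement of $\mbranch{\M,\evt}$ and $\branch(\evt)$ on $\subscription(\role)$, which the paper simply asserts.

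A caveat that applies equally to the paper: the Case~1 claim that an unsubscribed step $\G_0\xrightarrow{\evt}\G_1$ preserves the projection is stronger than \Cref{lem:equalprojections} (which only asserts uniqueness), and it fails when two non-concurrent summands share their continuation, since \Cref{def:branching-event} then does not count them as branching. For example, take $\cstmcmd{A}{e}\cdot\zero+\cstmcmd{R}{r}\cdot\zero$ with $\subscription(\rolename{A})=\{\lgtn{e}\}$ and $\subscription(\rolename{R})=\{\lgtn{r}\}$, which is well-formed, and a log with a single $\lgtn{e}$-event: the hypothesis holds, but the machine stays at $\{\lgtn{r}\}\cdot\lgtn{r}?\,\zero$ while the protocol is at $\zero$.
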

\begin{proof}
	Let $\parSt[\G'\proterm][\lbje']\define (\G  \proterm ,\lgname{l}) $ and $\G''\proterm [\lbje'']\define (\G ,\lgname{l}) \proterm $.
	\Cref{lem:equalprojections} ensures $\G \proterm = \G' \proterm$.
	Assume towards contradiction that there is an event type $\evt\in \subscription(\role)$ with $\lbje'(\evt)\neq \lbje''(\evt)$. 
	Note that the update function for the parameter of a swarm protocol is the same as the update function for the parameter of a projection.
	According to $\projeffof{\lg}{\G}[ \role]=\srteffof \downarrow_{\subscription(\role)}$, any update of the parameter that is performed by the projection is also performed by the swarm protocol.
	This means there was an update of the swarm protocol parameter caused by some event $\ev_b$ of type $\evt_b$ with $\evt \in \branch(\evt_b)$, which was not performed by the projection.
	Thus, $\ev_b$ is a branching or joining 
	event that $\role$ does not subscribe to. This means the condition $\role \in \newroles[\evt_b][\G_c]$ of \Cref{def:determinacy} is not satisfied.
	Accordingly, it holds that $\evt$, which is in $\branch(\evt_b)$, is not subscribed to by $\role$, since that would ensure  $\role \in \newroles[\evt_b][\G_c]$.
\end{proof}
\newtheorem*{theorem*}{Theorem}
\begin{theorem*}[Eventual Fidelity - \Cref{thm:eventual_fidelity}]
	Every realization of a $\subscription$-well-formed swarm protocol $\G$ is eventually faithful to $\G$ and $\subscription$.
\end{theorem*}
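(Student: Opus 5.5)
The plan is to fix a reachable swarm $(\S',\lg)$ with $(\S,\emptylog)\rightarrow^*(\S',\lg)$ and a machine $\S(i)=\G\proterm$ realising role $\role=\roleMap(\S(i))$. I will prove $\srteffof[\lg][\G]\downarrow_{\subscription(\role)}=\srteffof[\lg][\S(i)]$ by induction on the prefixes of $\lg$. Both effective logs are computed by a left-to-right scan: \Cref{def:global-effective-log} for the protocol, \Cref{def:machine-effective-log} for the machine. It therefore suffices to maintain a simulation invariant along the scan. After processing a prefix $\lg_0$, let $\parSt[\G_0][\lbje_G]$ be the protocol configuration and $\parSt[\M_0][\lbje_M]$ the machine configuration. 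The invariant states that these are related by $\sim_\role$ (in the sense used in \Cref{lem:equivprojections}): $\G_0\proterm=\M_0$ and $\lbje_G(\evt)=\lbje_M(\evt)$ for every $\evt\in\subscription(\role)$. This is exactly the conclusion of \Cref{lem:equivprojections}, so the inductive step can lean on that lemma. The invariant holds initially since both maps are $\emptyset$ and $\M_0=\G\proterm$.

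For the step, take the next event $\ev$ of type $\evt$. \textbf{Case $\evt\notin\subscription(\role)$.} The machine has no $\evt?$ branch and skips $\ev$. If the protocol accepts $\ev$, it moves $\G_0\xrightarrow{\evt}\G_1$. By \Cref{def:swarm-proto-projection} and the uniqueness of projections (\Cref{lem:equalprojections}, whose proof shows that subterms reachable via unsubscribed event types have equal projections), we get $\G_1\proterm=\G_0\proterm$. If $\evt$ is updating, the protocol also updates $\lbje_G$ on $\branch(\evt)$. I must show this does not touch any type in $\subscription(\role)$. If some $\evt'\in\branch(\evt)\cap\subscription(\role)$ existed, then $\role\in\newroles[\evt][\G_0]$. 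The Branching/Joining clauses of \Cref{def:determinacy} would then force $\evt\in\subscription(\role)$, a contradiction. If $\evt$ is updating only because it is looping in $\subscription$, the Looping clause gives the same contradiction. This is the argument already sketched in \Cref{lem:equivprojections}.

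\textbf{Case $\evt\in\subscription(\role)$.} I must show that the machine accepts $\ev$ iff the protocol does, and that the updates agree. The pointer check uses $\lbje(\evt)$ with $\evt$ subscribed, so by the invariant the two checks agree. For acceptance of the type, the protocol has $\G_0\xrightarrow{\evt}$ and hence $\G_0\proterm$ has an $\evt?$ branch (take $\lgt$ empty in \Cref{def:swarm-proto-projection}). The converse is the delicate direction. The machine may accept $\evt$ via a path $\G_0\xrightarrow{\lgt}\xrightarrow{\evt}$ with $\lgt$ unsubscribed, even though the protocol has not yet seen events of $\lgt$. Here I will use how events are generated: $\ev$ was emitted by some machine whose state was a projection reached along a log whose effective part is a valid protocol run. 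Its pointer $\ev.\lbjePointer$ names the last updating event in its causal chain. Any event of $\lgt$ that is causally required before $\evt$ must be non-concurrent with $\evt$. By the determinacy clauses such events are either subscribed, or separated by no updating event, so the pointer check rejects $\ev$ exactly when the protocol would. Concurrent event types in $\lgt$ can be commuted past $\evt$ using the footnote to \Cref{def:concurrency}. Finally, the successor states correspond because $\G_1\proterm$ is the unique continuation (\Cref{lem:equalprojections}). The $\lbje$ updates coincide because $\mbranch{\M_0,\evt}$ and $\branch(\evt)$ agree on subscribed types, given that $\mup{\M}$ is the set of updating types from \Cref{def:realisation}.

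I expect the main obstacle to be this converse direction: showing that the protocol accepts every event the machine accepts, which needs an auxiliary invariant over all reachable global logs. My first attempt is to prove, by induction on the swarm transitions of \Cref{def:formalisms:swarm-semantics}, that every emitted event $\ev$ of type $\evt$ satisfies the following. If its pointer matches the protocol's current $\lbje(\evt)$ when scanned, then some protocol state reachable from the current one through the window of events since $\ev.\lbjePointer$ enables $\evt$. Combining this invariant with confusion-freeness (event types are deterministic and appear in one subterm per component) yields the needed acceptance, and closes the induction.
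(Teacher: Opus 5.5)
Your overall structure matches the paper's. You run induction on prefixes of the global log, maintaining the $\sim_{\role}$ invariant of \Cref{lem:equivprojections}. You handle the unsubscribed case through the determinacy clauses. You get the direction ``protocol accepts $\Rightarrow$ machine accepts'' from the empty path in \Cref{def:swarm-proto-projection}, which is the paper's Case~1.

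The gap is the converse direction, which is where the real content lies, and your key claim there does not hold. Suppose $\G_0\proterm$ has an $\evt?$ branch only through a nonempty unsubscribed path $\G_0\xrightarrow{\lgt}\xrightarrow{\evt}$. As your determinacy remark indicates, the non-concurrent predecessors of $\evt$ in $\lgt$ are then non-updating. But that means an $\ev$ emitted after them carries exactly the pointer that $\role$ and the protocol currently store, so the pointer check \emph{passes} rather than rejects. This configuration has to be excluded through the emission history of $\ev$, and that requires clause~2 of \Cref{def:causalconsistency}. Your proposal never uses clause~2, yet it is indispensable. Take $\G=\cstmcmd{A}{a}\cdot\cstmcmd{B}{b}\cdot\zero$ with $\subscription(\rolename{A})=\{\lgtn{a}\}$ and $\subscription(\rolename{B})=\{\lgtn{b}\}$; this is confusion-free and determinate and violates only clause~2. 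Then $\G\projshort{\subscription}{B}$ emits $\lgtn{b}$ on the empty log and accepts it, while the global effective log stays empty.

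The paper's argument runs as follows. First, $\ev'=\evP{\ev}$ was accepted by $\role$, so by the induction hypothesis $\ev'$ is globally effective. Second, by clause~2 the emitter of $\ev$ had accepted an event of the immediately preceding type. Iterating this gives a chain of simple events from $\ev'$ to $\ev$, all pointing to $\ev'$ and all earlier than $\ev$ in the global log, since local logs are order-preserving sublogs. Finally, these can be replaced by the earliest effective events of the same types.

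Even with that chain in place, the protocol may already have consumed a competing successor $\ev''$ that is non-concurrent with $\ev$ and carries the same pointer, for example the same type emitted by a replica of the emitting role. Then $\G_0$ has no $\evt$-transition. Your auxiliary invariant (``some state reachable through the window enables $\evt$'') is consistent with this situation. Confusion-freeness does not stop $\G_0\proterm$ from re-enabling $\evt?$ through unsubscribed events around a loop.

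The paper closes this by cases on $\evt$:
\begin{itemize}
\item If $\evt$ is simple, re-acceptance would need a cycle of non-updating types back to $\evt$, contradicting the Looping clause of \Cref{def:determinacy}.
\item If $\evt$ is branching or looping in $\subscription$, then $\role$ subscribes to and has accepted the updating event $\ev''$. So $\ev'$ is no longer the last updating event, and $\ev$ fails the pointer check.
\item If $\evt$ is joining, the Joining clause makes $\role$ subscribe to all immediate predecessors of $\evt$, which forces $\lgt$ to be empty.
\end{itemize}

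Proving your invariant by induction on swarm transitions is also delicate. Rule \textsc{Local} of \Cref{def:formalisms:swarm-semantics} can insert a new event \emph{before} existing ones in the global log, which changes how those events are scanned. The paper sidesteps this by fixing the final log and reasoning about the emitter's local log at emission time.
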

\begin{proof}
	Assume towards contradiction that it does not hold and there is a  $(\S,\epsilon) \rightarrow^* (\S,\lg)$ and a longest prefix $\lg'$ with $\lg=\lgname{l'.e.l''}$ and $\lg'$  satisfies eventual fidelity with logs.
	This means there is a role $\role$ such that $\projeffof{\lg'}{\G}[\role]=\srteffof[\lg'][ \G] \downarrow_{\subscription(\role)}$ but $\projeffof{\lgname{l'.e}}{ \G}[\role]\neq \srteffof[\lgname{l'.e}][\G] \downarrow_{\subscription(\role)}$.
	Assume $ \ev $ has type $ \evt$. 
	There are two possibilities:
	
	\paragraph*{Case 1: $\ev\notin \projeffof{\lgname{l'.e}}{\G}[\role], \ev \in \srteffof[\lgname{l'.e}][\G] \downarrow_{\subscription(\role)}$.}
		This means there is a transition $(\G ,\lgname{l'})\stackrel{\lgname{e}}\rightarrow (\G ,\lgname{l'.e})$ and $\evt \in \subscription(\role) $.
		It follows that there is a transition %
		$(\G ,\lgname{l'})\proterm \stackrel{\evt}\rightarrow (\G ,\lgname{l'.e}) \proterm$.
		According to \Cref{lem:equivprojections}, it holds $(\G ,\lgname{l'})\proterm \sim_\role(\G \proterm ,\lgname{l'})$.
		Since $\evt \in \subscription(\role) $, this means that the function $\lbje$ of $(\G ,\lgname{l'})\proterm$ assigns the same value to $\evt$ as the corresponding function of $ (\G \proterm ,\lgname{l'})$.
		From this and $(\G ,\lgname{l'})\proterm \stackrel{\evt}\rightarrow (\G ,\lgname{l'.e}) \proterm$ follows $(\G \proterm ,\lgname{l'}) \stackrel{\evt}\rightarrow (\G \proterm,\lgname{l'.e})$.
		This means $\ev\in\projeffof{\lgname{l'.e}}{\G}[\role]$ which contradicts the assumption of Case 1.

	\paragraph*{Case 2: $\ev\in \projeffof{\lgname{l'.e}}{\G}[\role], \ev \notin\srteffof[\lgname{l'.e}][\G] \downarrow_{\subscription(\role)}$, $\evt$ is not joining and $\evP{\ev}\neq \mathtt{NULL}$.} 
	Here, $\ev$ is not an interface event that joins multiple concurrent branches together.
	Let $\evP{\ev}=\lgname{e'}$ and $ \lg'=\lg''.\ev'.\lg'''$.
	Since $\ev\in \projeffof{\lgname{l'.e}}{\G}[\role]$, we know that when $\ev$ was received by $\role$, it had the correct pointer value $\ev'$ in $\lbje$. This value could only have been added to the function when receiving $\ev'$.
	This means $\ev'\in \projeffof{\lgname{l'.e}}{\G}[\role]$. 
	From $\projeffof{\lg'}{\G}[\role]=\srteffof[\lg'][\G] \downarrow_{\subscription(\role)}$ follows $\ev' \in \srteffof[\lg'][\G] $ as well.
	
	Let $\role'$ be the role that emitted $\ev$. This role subscribes to any event type that directly precedes $\evt$ non-concurrently according to \Cref{def:causalconsistency}.
	This means when it emitted $\ev$, it had received an event $\ev_1$ that directly preceded $\ev$. That event was either branching or joining (which means it has to be $\ev'$ according to branch tracking) or it was some simple event. 
	Assume it was a simple event.
	According to branch tracking semantics, this ensures $\evP{\ev_1}=\ev'$ and thus $\ev_1$ occurs in $\lgname{l'}$ after $\ev'$.
	We apply this argument iteratively until we have reached $\ev'$. This gives us a sequence $\lgname{simple}=\ev_n\ldots \ev_1$ of simple events. 
	If $\ev_1$ was not simple, and thus it was $\ev'$, then $\lgname{simple}$ is the empty sequence.
	It holds $\lgname{simple}$ is a subsequence of $\lg'''$.
	According to the construction of $\lgname{simple}$, it holds $(\G, \lg''.\ev')\xrightarrow{\ev_n\ldots \ev_1} (\G, \lg''.\ev'.\ev_n\ldots \ev_1)$.
	
	Note that $\ev_n$ is a simple event that directly succeeds $\ev'$.
	If $\ev_n\notin \srteffof[\lgname{l'.e}][\G]$, then this means there was an earlier event $\ev'_n$ in $\lgname{simple}$ that is a direct successor of $\ev'$ and that $\ev'_n$ also pointed to $\ev'$. Since $\ev_n$ is simple, it holds that $\ev'_n$ has the same type as $\ev_n$. 
	We set $\ev_n\define\ev'_n$.
	We repeat this step until $\ev_n \in \srteffof[\lgname{l'.e}][\G]$.
	We apply this method in sequence to $\ev_{n-1}\ldots \ev_1$ as well.
	Afterwards, $\lgname{simple}\define\ev_n\ldots \ev_1$ is a subsequence of $\srteffof[\lgname{l'}][\G]$ that occurs after $\ev'$.
	Let $\srteffof[\lgname{l'}][\G]=\lg_{eff1}.\ev_1.\lg_{eff2}$.
	According to the construction of $\ev_1$, $\ev$ is a direct successor of $\ev_1$. Since $ \ev \notin \srteffof[\lgname{l'.e}][\G]$, it follows that there was another direct successor $\ev''$ of $\ev_1$ in $\lg_{eff2}$ that is not concurrent with $\ev$. 
	\begin{description}
		\item[Case 2A: $\evt$ is simple:] Note that $\ev''$ has the same type $\evt$ as $\ev$ (and thus $\role$ subscribes to it) and the same pointer. This means $\ev''\in \projeffof{\lgname{l'}} {\G}[ \role]$ and $\ev\in \projeffof{\lgname{l'}} {\G}[ \role]$. 
		Since they have the same pointer, that means between receiving $\ev''$ and $\ev$, $\role$ only received events that were either simple or concurrent with $\ev$. This means there is path $\evt^s_1 \ldots \evt^s_n$ of simple events types in the swarm protocol such that $ \evt^s_1, \evt^s_n = \evt$ and $n>1$. This is a contradiction to the looping condition of Determinacy.
		
		\item[Case 2B: $\evt$ is branching and not joining:] This means that $\ev''$ can also be another branching option to $\evt$ and thus can be another type then $\ev$. Note that $\ev''$ has the same pointer as $\ev$ (and thus $\role$ subscribes to it).
		This means $\ev''\in \projeffof{\lgname{l'}} {\G}[\role]$ and $\ev''$ is not concurrent with $\ev$. This means $\ev'$ is not the last branching event that precedes $\ev$ and thus $\ev$ pointer to $\ev'$ is not correct. It follows $\ev\notin \projeffof{\lgname{l'.e}}{\G}[\role]$ which is a contradiction to the assumption of Case 2.	
		
		\item[Case 2C: $\evt$ is looping in $\subscription$ and not joining and not branching:] Note that $\ev''$ has the same (updating) type as $\ev$ and the same pointer as $\ev$ (and thus $\role$ subscribes to it).
		This means $\ev''\in \projeffof{\lgname{l'}} {\G}[\role]$ and $\ev''$ is not concurrent with $\ev$. This means $\ev'$ is not the last updating event that precedes $\ev$ and thus $\ev$ pointer to $\ev'$ is not correct. It follows $\ev\notin \projeffof{\lgname{l'.e}}{\G}[\role]$ which is a contradiction to the assumption of Case 2.	
	\end{description}
	
	\paragraph*{Case 3: $\ev\in \projeffof{\lgname{l'.e}}{\G}[\role], \ev \notin\srteffof[\lgname{l'.e}][\G] \downarrow_{\subscription(\role)}$ and $\evt$ is joining and $\evP \ev \neq \mathtt{NULL}$.} 
	It holds $(\G \proterm ,\lgname{l'}) \xrightarrow{\ev}$ and $(\G \proterm ,\lgname{l'}) \sim_\role (\G ,\lgname{l'})\proterm$ and thus $(\G  ,\lgname{l'})\proterm \xrightarrow{\ev}$.
	According to the definition of projections, there is a log $\lg$ that $\role$ does not subscribe to, such that $(\G  ,\lgname{l'}) \xrightarrow{\lg . \ev}$. We assume WLOG that $\lg$ does not contain event types concurrent with $\evt$.
	Here, $\role$ subscribes to all event types that directly precede $\evt$ according to \Cref{def:determinacy} (Joining). This means $\lg$ is empty and thus $(\G  ,\lgname{l'})\proterm \xrightarrow{\ev}$. It follows $\ev \in\srteffof[\lgname{l'.e}][\G]$ which is a contradiction to $\ev \notin\srteffof[\lgname{l'.e}][\G]\downarrow_{\subscription(\role)}$.
	\paragraph*{Case 4: $\evP \ev = \mathtt{NULL}$.} 
	This case is analogue to $\evP{\ev}=\ev'$ since the cases only argue about the part of the log that follows $\ev'$. The only difference is that $\lg''$ is empty and $\ev'$ is replaced by a dummy event $\mathtt{NULL}$ that everyone received initially.
\end{proof}

\section{Swarm Composition Algorithm}\label{sec:app:swarm_composition_algorithm}
We now use \Cref{def:machine-adaptation} to introduce
\Cref{alg:swarm_composition}, which constructs a composed swarm from a set of input swarms $\S_1, \ldots \S_n$ (all having empty logs) by adapting each input swarm machine $\M$ for the composed swarm protocol $\G_1 \conc \ldots \conc \G_n$.
First, we obtain the well-formed subscription $\subscription$ for the overall composed swarm protocol (line~\ref{alg:line0});
Then, traversing the input swarms, we select each machine $\M$ and its role $\role$ (lines~\ref{alg:lineSelM} and \ref{alg:lineSelR});
finally, we adapt $\M$ into $\M'$ (line~\ref{alg:lineAdapt}) and we add $\M'$ to the output swarm $\S$ (line~\ref{alg:lineAddOutS}, where the operator $\_ \swarmAdd \_$ adds \ecoop{$\M'$} to the swarm $\S$ with a suitable unique index and an empty local log).

\begin{algorithm}[t]
	\begin{description}
		\item[Input:]
		\begin{itemize}
			\item Confusion-free, pairwise-interfacing swarm protocols $\G_1 \ldots \G_n$ without concurrent event types;
			\item Subscriptions $\subscription_1\ldots \subscription_n$; %
			\item Swarms $\S_1 \ldots \S_n$ and mappings $\roleMap[1] \ldots \roleMap[n]$ such that, for $i \in \{ 1, \ldots, n \}$:
			\begin{itemize}
				\item $\S_i$ has empty logs and realises $\G_i$ for subscriptions $\subscription_i$;
				\item $\roleMapApp[i]{j}$ is the role in $\G_i$ that is realised by the machine $\S_i(j)$.
			\end{itemize}
		\end{itemize}
		\item[Output:] 
		\begin{itemize}
			\item A subscription $\subscription \supseteq \subscription_i$ (for $i \in \{1, \ldots, n\}$)
			such that $\G_1 \conc \ldots \conc \G_n$ is $\subscription$-well-formed;
			\item A swarm $\S$ that \ecoop{realises $\G_1 \conc \ldots \conc \G_n$ and} is constructed by adapting $\S_1 \ldots \S_n$.
		\end{itemize}
	\end{description}
	\begin{algorithmic}[1]
		\STATE Construct $\subscription$ %
		using \Cref{alg:comp-swarm-subscriptions} (by \Cref{lem:subscription_composition_WF}).\label{alg:line0}
		\STATE Let $\S = \left(\emptyset, \emptylog\right)$. \hfill\emph{(Output swarm with no machines and empty global log)}
		\FORALL{$k \in \{ 1, \ldots, n \}$}
		\FORALL{$j \in \dom(\S_k)$}
		\STATE $\M \;\define\; \S_k(j)$ \hfill\emph{(Select the $j$\textsuperscript{th} machine $\M$ from swarm $\S_k$)}\label{alg:lineSelM}
		\STATE $\role \;\define\; \roleMapApp[k]{j}$ \hfill\emph{(Role of machine $\M$ in swarm protocol $\G_k$)}\label{alg:lineSelR}
		\STATE $\M' \;\define\; \mathcal{A}\!\left(\M, {(\G_i)_{i \in \{ 1, \ldots, n \}}, \role}, k, \subscription\right)$ \hfill\emph{(Adapt $\M$ using \Cref{def:machine-adaptation})}\label{alg:lineAdapt}
		\STATE $\S \;\leftarrow\; \M' \swarmAdd \S$ \hfill\emph{(Add adapted machine $\M'$ with empty log to output swarm)}\label{alg:lineAddOutS}
		\ENDFOR
		\ENDFOR
	\end{algorithmic}
	\caption{Swarm Composition.}\label{alg:swarm_composition}
\end{algorithm}

\section{Proof of \Cref{lem:projection_composition_correct}}\label{sec:app:proof_of_projection_composition_correct}
	We show that the constructed swarm is a realization of the composed swarm protocol.
		The algorithm adapts each input machine $\M=\G_k \projshort{\subscription_k}{\role}$ (which is a projection of of some $\G_k$ onto a role $\role$) using \Cref{def:machine-adaptation}. 
		We will show that each adaptation is a projection of the swarm protocol composition:
	$\ \mathcal{A}\!\left(\M, (\G_i)_{i \in \{ 1, \ldots, n \}}, \role, k, \subscription\right)=(\G_1 \conc \ldots \conc \G_n) \proterm$.
	
	\subsection{Technical Requirement}We begin by showing that $\M \conc \G_k \projshort{\subscription}{\role}=\G_k \projshort{\subscription}{\role}$ holds:
	Since $\subscription_k \subseteq \subscription$, any event type of $\G_k \projshort{\subscription_k}{\role} $ also occurs in $\G_k \projshort{\subscription}{\role} $.
	Thus any transition in the composition is either only a transition in $ \G_k \projshort{\subscription}{\role}$ or in $\M$ as well.
	Since $\G_k \projshort{\subscription_k}{\role} $ only differs from $\G_k \projshort{\subscription}{\role} $ by skipping some transitions, it does not restrict the behavior of $\G_k \projshort{\subscription}{\role} $ in the composition.
	This means that $ \G_k \projshort{\subscription}{\role}=\G_k \projshort{\subscription_k}{\role} \conc \G_k \projshort{\subscription}{\role}$ holds.
	The condition follows.

	\subsection{Correct Realisation}  
	The algorithm adapts each each input machine $\M=\G_k \projshort{\subscription_k}{\role}$ using \Cref{def:machine-adaptation}. 
	We will show that each adaptation is a projection of the swarm protocol composition:
	$\ \mathcal{A}\!\left(\M, (\G_i)_{i \in \{ 1, \ldots, n \}}, \role, k, \subscription\right)=(\G_1 \conc \ldots \conc \G_n) \proterm$.
	
	Recall 	       
	$$ \mathcal{A}\!\left(\M, {(\G_i)_{i \in \{ 1, \ldots, n \}}, \role}, k, \subscription\right)%
	\;=\;%
	\left(%
	\G_1 \projshort{\subscription}{\role} \concM{} \ldots \concM{}%
	\left(\M \concM{} \G_k \projshort{\subscription}{\role}\right) %
	\concM{} \ldots \concM{} \G_n \projshort{\subscription}{\role}
	\right).$$
	
	We set $\M_k\define	\left(\M \concM{} \G_k \projshort{\subscription}{\role}\right)$ and $\M_j\define \G_j \proterm$ for $j\neq k$. 
	It remains to show that $\M_1 \conc \ldots \M_n = (\G_1 \conc \ldots \conc \G_i )\proterm$ holds, since the condition $\ \mathcal{A}\!\left(\M, {(\G_i)_{i \in \{ 1, \ldots, n \}}, \role}, k, \subscription\right)=(\G_1 \conc \ldots \conc \G_n) \proterm$ follows immediately.
	We use  an induction over $i \leq n$:
	Let $\M_i^\role\define \M_1 \conc \ldots \conc \M_i$ for $i\leq n$.
	\paragraph*{Induction Basis $i=1$:} It holds $\M^\role_1=\M_1 = \G_1 \proterm$. This holds either by definition or due to $ \G_i \projshort{\subscription}{\role}=\G_1 \projshort{\subscription_1}{\role} \conc \G_1 \projshort{\subscription}{\role}$ for k=1, as shown above.
	\paragraph*{Induction Hypothesis:} It holds $\M_i^\role = (\G_1 \conc \ldots \conc \G_i )\proterm$.
	\paragraph*{Induction Step $i\rightarrow i+1$:}
	First, we show that for any transition $$ (\G_1 \conc \ldots \conc \G_{i+1} )\proterm  \receiving{\evt} (\G'_1 \conc \ldots \conc \G'_{i+1}) \proterm,$$ there is a corresponding transition $\M_{i+1}^\role   \receiving{\evt} (\G'_1 \proterm \conc \ldots \conc \G'_{i+1}\proterm) $.
	According to the composition of swarm protocols, the transition  $ (\G_1 \conc \ldots \conc \G_{i+1} )\proterm  \receiving{\evt} (\G'_1 \conc \ldots \conc \G'_{i+1}) \proterm$ satisfies one of these cases:
	\begin{description}
		\item[Case 1 ($\evt \not\in \G_{i+1}$):]
		It holds  $(\G_1 \conc \ldots \conc \G_{i}) \proterm \receiving{\evt} (\G'_1 \conc \ldots \conc \G'_{i}) \proterm$
		and thus $$(\G'_1 \conc \ldots \conc \G'_{i+1}) \proterm= (\G'_1 \conc \ldots  \conc \G'_i \conc \G_{i+1}) \proterm.$$
		According to the Induction Hypothesis, it holds $\M_i^\role \receiving{\evt} (\G'_1 \conc \ldots \conc \G'_{i}) \proterm$.
		We apply the machine composition and we derive $\M_i^\role \conc (\G_{i+1} \proterm) \receiving{\evt}  (\G'_1 \conc \ldots \conc \G'_{i}) \proterm \conc \G_{i+1}\proterm$ since $\evt$ does not occur in $\G_{i+1} \proterm$.
		It holds  $\M_{i+1}^\role=\M_i^\role \conc (\G_{i+1} \proterm)$ and thus there is a corresponding transition from $\M_{i+1}^\role$.
		\item[Case 2 ($\evt \in \G_{i+1}$ and $\evt \not\in (\G_1 \conc \ldots \conc \G_{i})$):]  	
		It holds $$(\G'_1 \conc \ldots \conc \G'_{i+1}) \proterm= (\G_1 \conc \ldots  \conc \G_i \conc \G'_{i+1}) \proterm.$$
		According to the Induction Hypothesis, it holds $\evt \not\in \M^\role_i $.
		We apply the machine composition ($\M^\role_{i+1}= \M_i^\role \conc (\G_{i+1} \proterm)$) and we derive 
		$$\M_i^\role \conc (\G_{i+1} \proterm) \receiving{\evt}  (\G_1 \conc \ldots \conc \G_{i}) \proterm \conc \G'_{i+1}\proterm.$$
		\item[Case 3 ($\evt \in \G_{i+1}$ and $\evt \in (\G_1 \conc \ldots \conc \G_{i})$):]
		The event type is interfacing.
		According to the definition of swarm protocol composition, it holds $\G_{i+1} \proterm \receiving{\evt}  \G'_{i+1} \proterm$ as well as $$(\G_1 \conc \ldots \conc \G_{i}) \proterm \receiving{\evt} (\G'_1 \conc \ldots \conc \G'_{i}) \proterm.$$
		We apply the machine composition and we derive $\M_i^\role \conc (\G_{i+1} \proterm) \receiving{\evt}  (\G'_1 \conc \ldots \conc \G'_{i}) \proterm \conc \G'_{i+1}\proterm$.
	\end{description}
	\medskip
	
	Having shown that for any transition of $(\G_1 \conc \ldots \conc \G_{i+1} )\proterm$, there is a corresponding transition in $\M_{i+1}^\role$,
	we will now show that,
	for any transition $\M_{i+1}^\role   \receiving{\evt} (\G'_1 \proterm \conc \ldots \conc \G'_{i+1}\proterm)$ , there is a corresponding transition $ (\G_1 \conc \ldots \conc \G_{i+1} )\proterm  \receiving{\evt} (\G'_1 \conc \ldots \conc \G'_{i+1}) \proterm$.
	We examine the different cases for $\M_{i+1}^\role   \receiving{\evt} (\G'_1 \proterm \conc \ldots \conc \G'_{i+1}\proterm)$:
	\begin{description}
		\item[Case A ($ \G_{i+1} \proterm \receiving{\evt} \G'_{i+1} \proterm$ and $\evt \not\in \M_i^\role$):]
		It holds $$ (\G_1 \conc \ldots \conc \G_{i} )\proterm \conc  \G_{i+1} \proterm \receiving{\evt} (\G_1 \conc \ldots \conc \G_{i} )\proterm \conc  \G'_{i+1} \proterm.$$
		There is a logtype $\lgt$ of minimal length such that $\lgt \cap \subscription(\role) =\emptyset$ and $\G_{i+1} \receiving{\lgt}\receiving{\evt} \G'_{i+1}$.
		Note that $\lgt$ does not contain any interfacing event types since $\lgt \cap \subscription(\role) =\emptyset$ and we constructed $\subscription$ in such a way that all interfacing event types  are subscribed to if other subscribed event types are causally dependent on them. Since $\lgt$ is minimal, it only contains event types that $\evt$ is causally dependent on.
		Since $\lgt$ does contains no interfacing event types, it holds $ (\G_1 \conc \ldots \conc \G_{i+1} )\proterm \receiving{\evt} (\G_1 \conc \ldots \conc \G_{i}  \conc  \G'_{i+1}) \proterm$.
		\item[Case B ($ \M_i^\role \receiving{\evt}$ and $\evt \not\in  \G_{i+1} \proterm$):]
		This case is analogue to the previous one.
		\item[Case C ($ \G_{i+1} \proterm \receiving{\evt}\G'_{i+1} \proterm$ and $\M_i^\role \receiving{\evt} (\G'_1 \conc \ldots \conc \G'_{i}) \proterm$):]
		Note that $\evt$ is an interfacing event type.
		There are $\lgt_1,\lgt_2$ such that $(\lgt_1 \cup \lgt_2) \cap \subscription=\emptyset$, $ \lgt_1\cap \lgt_2 =\emptyset$,  $ \G_{i+1} \xrightarrow{\lgt_1} \xrightarrow{\evt}\G'_{i+1} $, and $$ (\G_1 \conc \ldots \conc \G_{i} ) \xrightarrow{\lgt_2} \xrightarrow{\evt} (\G'_1 \conc \ldots \conc \G'_{i}) .$$  Note that $\lgt_1$ and $\lgt_2$ are not subscribed to and we can assume they are minimal. Thus they do not contain any interfacing event types. This means it holds
		$$ (\G_1 \conc \ldots \conc \G_{i+1} ) \xrightarrow{\lgt_1 .\lgt_2} \xrightarrow{\evt} (\G'_1 \conc \ldots \conc \G'_{i+1}) $$ and thus
		$ (\G_1 \conc \ldots \conc \G_{i+1} ) \proterm \receiving{\evt} (\G'_1 \conc \ldots \conc \G'_{i+1}) \proterm $.
	\end{description}
	Since the projections have the same transitions and the emitter set is determined by the transitions, the projections emit the same event types. Thus they are equal:
	$\M_1 \conc \ldots \conc \M_n =(\G_1 \conc \ldots \conc \G_n) \proterm$. The constructed swarm is almost a realisation of the $\subscription$-well-formed composed protocol $\G_1 \conc \ldots \conc \G_n$.
	The only realisation requirement not met concerns the set of updating events: for each machine $\M$, $\M.\JB$ is an over-approximation (rather than the exact set).
	However, this over-approximation is sound for our purposes: for any $\evt$ in that over-approximation, if $\G'\xrightarrow{\evt}$ and $\role \in \newroles$, then $\evt \in \subscription(\role)$.
	Moreover, all machines share the same set $\M.\JB$.
	Therefore, the proof of eventual fidelity in \Cref{sec:app:fidelity-proof} still applies under this over-approximation. 
	Thus,  \Cref{thm:eventual_fidelity} can still be applied for the over-approximated set of updating event types and the constructed swarm is eventually faithful to the $\subscription$-well formed composed protocol $\G = \G_1 \conc \ldots \conc \G_n$.
Thus, \Cref{lem:projection_composition_correct} is proven.

\section{Inference Rules for Verifying Well-Formedness}\label{sec:app:inference_rules}
\mkfun[\colorFun]{\concset}{conc}{}
\mkfun[\colorFun]{\sterms}{subterms}{}
\mkfun[\colorFun]{\composable}{composable}{}
\mkfun[\colorFun]{\bw}{bw}{}
\mkfun[\colorFun]{\jf}{jf}{}
\mkfun[\colorFun]{\rolesInfRule}{roles}{}

We define rules that allow us to check the $\subscription$-well-formedness of a swarm protocol compositions.
To decide whether $\agt[1] \conc \cdots \conc \agt[n]$ is $\subscription$-well-formed we rely on a derivation environment consisting of the sets $C$, $S$, and $V$, where:
\begin{itemize}
	\item $C$ is a set of unordered pairs of event types containing an overapproximation of the concurrent event types of $\agt[1] \conc \cdots \conc \agt[n]$.
	\item $S$ is the set of subterms of $\agt[1] \conc \cdots \conc \agt[n]$.
	\item $V$ is a set of subterms of $\agt[1] \conc \cdots \conc \agt[n]$. When we build a derivation bottom-up, we collect in this set the subterms that we have ``visisted'' and recursively propagate this set between conclusions and premises of the rules.
\end{itemize}

Furthermore, the inference rules make use of the auxiliary definitions in \Cref{def:auxruleswfwconc} below. 

\begin{definition}[Auxiliary definitions for inference rules.]\label{def:auxruleswfwconc}
	\phantom{Phanotom text to force new line. find better solution.}
	\begin{enumerate}[A.]
		\item The predicate $\composable(\{\agt[1], \ldots,  \agt[n]\})$ is true \emph{iff} $\{\agt[1], \ldots,  \agt[n]\}$ is composable according to \Cref{def:composable}.
		\item For a set of swarm protocols $SP = \{\agt[1], \ldots, \agt[n]\}$, we denote by $\concset(SP)$ an overapproximation of the set of concurrent event types in $\agt[1]\conc \cdots \conc \agt[n]$ given by:
		\[
			\concset(SP) = \{ \{\evt', \evt''\} \; | \; \exists \agt[i], \agt[j] \in SP.\; \evt' \in \agt[i] \land \evt' \notin \agt[j] \land \evt'' \in \agt[j] \land \evt'' \notin \agt[i] \}
		\]
		\item The set of subterms of a swarm protocol $\agt = \agsumnewnew$, is given by:
		\[
		\sterms(\agsumnewnew) = \{ \agsumnewnew \} \cup \bigcup_{i \in I} \sterms(\agt[i])
		\]
		\item The set of event types branching with $\evt$ at $\agt$ according to a set of concurrent event types $C$ is given by:
		\[
			\bw(\evt, \agt, C) = \{ \evt' \; | \agt \xrightarrow{\evt}\agt' \land \agt \xrightarrow{\evt'} \agt'' \land \evt \neq \evt' \land \agt' \neq \agt'' \land \{ \evt, \evt' \} \notin C \}
		\]
		\item The set of event types for which $\evt$ is joining at $\agt$ according to a set of concurrent event types $C$ and a set of subterms $S$ is given by:
		\begin{align*}
			\jf(\evt, \agt, C, S) = \{ \evt' \; |& \; \exists \agt[1], \agt[2] \in S.\; \exists \evt'' \in \agt[2].\; \evt' \neq \evt'' \land \agt[1] \xrightarrow{\evt'} \agt \land \agt[2] \xrightarrow{\evt''} \agt \\
			&\land \{\{\evt, \evt'\}, \{\evt, \evt''\}\} \not\subseteq C \land \{ \evt', \evt'' \} \in C \} \\
		\end{align*}
		\item The set of roles that subscribe to event types in $\agt$ that causally depend on $\evt$ according to the $\subscription$ and the set of concurrent event types $C$ is given by:
			\begin{align*}
				\rolesInfRule(\evt, \agt, \subscription, C) = \{ \role \; | \; &
				\text{there are } n \geq 0, \; \evt_0, \ldots, \evt_n, \; \lgt_1, \ldots, \lgt_n  \text{ such that: }
				\\&\evt_0 = \evt, \; \G \xrightarrow{\evt_0} \xrightarrow{\lgt_1} \xrightarrow{\evt_1} \cdots \xrightarrow{\lgt_n} \xrightarrow{\evt_n}
				\text{ and }
				\\& \evt_n \in \subscription(\role), \text{ and } \{ \evt_i, \evt_{i+1} \} \notin C \text{ for all } 0 \leq i < n \}.
			\end{align*}
	\end{enumerate}
\end{definition}

\begin{definition}[Inference rules for checking well-formedness of swarm protocol compositions]\label{def:ruleswfwconc}
	\phantom{Phanotom text to force new line. find better solution.}
	\item We denote by $WF(\agt[1] \conc \cdots \conc \agt[n])$ and $DCC(\agt[1] \conc \cdots \conc \agt[n])$ the sets of subscriptions with respect to which $\agt[1] \conc \cdots \conc \agt[n]$ is, respectively, well-formed, and determinate and causal-consistent.
	We think of the task of proving that $\agt[1] \conc \cdots \conc \agt[n]$ is $\subscription$-well-formed as the same as showing that $\subscription \in WF(\agt[1] \conc \cdots \conc \agt[n])$.
	To check this, we use the judgment $\wfInf{}{\subscription}{\emph{WF}(\agt[1] || \agt[2])}$ defined by the following inference rules:\\\\
	{\allowdisplaybreaks
		\resizebox{\textwidth}{!}{$\displaystyle
			\begin{array}{c}
				\inferrule*[right={\emph{[Init]}}]{
					\emph{\composable}(\{\agt[1], \ldots,  \agt[n]\})\\
					\wfInf{\concset(\{\agt[1], \ldots, \agt[n]\}),\sterms(\agt[1]\conc \cdots \conc \agt[n]), \emptyset}{\subscription}{\emph{DCC}(\agt[1]\conc \cdots \conc \agt[n])}
				}{
					\wfInf{}{\subscription}{\emph{WF}(\agt[1]\conc \cdots \conc \agt[n])}
				}\\\\\\
				\inferrule*[right={\emph{[Loop]}}]{
					\agt \in V\\
					\exists \evt', \lgt \in \agt.\; \exists \agt' \subterm \agt.\; \agt \xrightarrow{\lgt} \agt \land \evt' \in \lgt \land \agt' \xrightarrow{\evt'} \land \forall \role' \in \rolesInfRule(\evt', \agt', \subscription, C).\; \evt' \in \subscription(\role')
				}{
					\wfInf{C, S, V}{\subscription}{\emph{DCC}(\agt)}
				}\\\\\\
				\inferrule*[right={\emph{[Term]}}]{
					\agt = \agsumnewnew\\
					\forall i \in I .\; \vcenter{\hbox{\begin{tabular}{l}
								\phantom{$\land\;\;$}$\evt_i \in \subscription(\role_i)$ \\
								$\land\;\; \forall \role' \in \{ \role \; | \; \agt[i] \xrightarrow{\cstmcmd{\role}{\evt}} \land \{ \evt_i, \evt \} \notin C \}.\; \evt_i \in \subscription(\role')$ \\
								$\land\;\; \bw(\evt_i, \agt, C) \neq \emptyset \Rightarrow \forall \role' \in \rolesInfRule(\evt_i, \agt, \subscription, C).\; \bw(\evt_i, \agt, C) \cup \{\evt_i\} \subseteq \subscription(\role')$  \\
								$\land\;\; \jf(\evt_i, \agt, C, S) \neq \emptyset \Rightarrow \forall \role' \in \rolesInfRule(\evt_i, \agt, \subscription, C).\; \jf(\evt_i, \agt, C, S) \cup \{\evt_i\} \subseteq \subscription(\role')$  \\
								$\land\;\; \wfInf{ C, S, V \cup \{\agt\} }{\subscription}{\emph{DCC}(\agt[i])}$ 
					\end{tabular}}}
				}{
					\wfInf{C, S, V}{\subscription}{\emph{DCC}(\agt)}
				}
			\end{array}
			$}}
\end{definition}

\section{Non-branch-tracking Swarm} \label{sec:app:non_bt_swarm}
The swarm semantics from \cite{DBLP:conf/ecoop/KuhnMT23} (\Cref{def:non-bt-swarm-semantics} below)
are similar to the semantics of \Cref{def:branch-tracking-swarm-semantics},
but does not involve branch-tracking.
These semantics ensures that a swarm
realising a swarm protocol $\G$ that is well-formed w.r.t. 
a subscription $\subscription$ according to the well-formedness 
definition from \cite{DBLP:conf/ecoop/KuhnMT23} is eventually faithful 
to $\G$ and $\subscription$. 

\begin{definition}[Non-Branch-Tracking Swarm Semantics (from \cite{DBLP:conf/ecoop/KuhnMT23})]
	\ecoop{Given a mapping $m$, let $m[a\myupd b]$ denote the mapping identical to $m$ except $m[a\myupd b](a)=b$.}
	\label{def:non-bt-swarm-semantics}
	We write
	$(\afish,\alog) \noBTMachineStep[\aeventtype !] (\afish,\alog\logcat\aevent)$ when machine $\afish$ emits an event $\aevent$ of type $\aeventtype$
	from state $\stchange(\afish,\alog)$, thus reaching state
	$\stchange(\afish, \alog \cdot \aevent)$.
	The behaviour of a swarm is defined by
	the smallest relation closed under
	the following rules:%
	
	\ifbool{ecoop}{\smallskip}{\medskip}\centerline{\small\(%
		\begin{array}{c}
			\mathrule{
				\asysrt(i) = (\afish[i],\alog[i])
				\quad
				(\afish[i],\alog[i]) \noBTMachineStep[{\aeventtype}!] (\afish[i],\alog[i]\cdot \aevent)
				\quad
				<_{\alog[i]} \subseteq <_{\alog}
			}{
				(\asysrt, \alog \cdot \alog') \noBTSwarmStep[{\aeventtype}!] (\upd \asysrt i {(\afish[i], \alog[i]\cdot \aevent)},  \alog \cdot \aevent \cdot \alog')
			}{Local}
			\ifbool{ecoop}{\quad}{ \\ \\}
			\mathrule{
				\asysrt(i) = (\afish[i],\alog[i])
				\quad
				<_{\alog[i] }  \subset <_{\alog'} \subseteq <_{\alog}
			}{
				(\asysrt, \alog) \noBTSwarmStep[\tau] (\upd \asysrt i {(\afish[i], {\alog'})}, \alog)
			}{Prop}
		\end{array}
		\)}%
\end{definition}

Well-formedness as defined in \Cref{def:wf}, however,
necessitates branch-tracking to ensure eventual fidelity. 
Using well-formedness as defined in \Cref{def:wf} together 
with the swarm semantics in \Cref{def:non-bt-swarm-semantics},
leads to swarm behaviours where machines can enter 
conflicting states, from where they cannot recover consistency.
This is shown in the following example.

\begin{example}[\ecoop{Incompatibilty between well-formedness in \Cref{def:wf} and swarm semantics in \cite{DBLP:conf/ecoop/KuhnMT23}}]
  \label{ex:non-bt-swarm-warehouse}
  Consider $\Gn{Warehouse}$ in \Cref{fig:Warehouse}.
  Let $E$ denote the set of all event types in $\Gn{Warehouse}$ and let $\subscription = \{ \rolename{T} \mapsto E, \rolename{FL} \mapsto E \setminus \{ \lgtn{partOK} \}, \rolename{D} \mapsto E \setminus \{ \lgtn{pos} \} \}$.
  The projections are given in \Cref{fig:ProjWH}.
  By \Cref{def:wf}, $\Gn{Warehouse}$ is $\subscription$-well-formed. \ecoop{Consider swarm $\S_\Gn{W}$ with:

  \ifbool{ecoop}{\smallskip}{\medskip}%
  \centerline{\(
  \asysrt_\Gn{W} \;=\; \left\{
      i_{{\rolename{T}}_1} \mapsto (\afish[{\rolename{T}}], \emptylog),\quad 
      i_{{\rolename{T}}_2} \mapsto (\afish[{\rolename{T}}], \emptylog),\quad 
      i_{{\rolename{FL}}} \mapsto (\afish[{\rolename{FL}}], \emptylog),\quad 
      i_{{\rolename{D}}} \mapsto (\afish[{\rolename{D}}], \emptylog)
    \right\}
  \)}}%
  \ifbool{ecoop}{\smallskip}{\medskip}%

  \noindent%
  where $\afish[{\rolename{T}}] = \Gn{Warehouse}\projshort{\subscription}{\rolename{T}}$, $\afish[{\rolename{FL}}] = \Gn{Warehouse}\projshort{\subscription}{\rolename{FL}}$,
  and $\afish[{\rolename{D}}] = \Gn{Warehouse}\projshort{\subscription}{\rolename{D}}$. Observe that the machines with IDs $i_{{\rolename{T}}_1}$ and $i_{{\rolename{T}}_2}$ both enact role $\rolename{T}$ (transport).
  Since all machines are obtained by projecting $\Gn{Warehouse}$ onto its roles using $\subscription$, the swarm $\asysrt_\Gn{W}$ realises the protocol $\Gn{Warehouse}$ w.r.t.~$\subscription$ (as defined in \cite{DBLP:conf/ecoop/KuhnMT23}).
  A possible execution of $\asysrt_\Gn{W}$ by \Cref{def:non-bt-swarm-semantics} (from \cite{DBLP:conf/ecoop/KuhnMT23}) is:

  \ifbool{ecoop}{\smallskip}{\medskip}%
  \noindent%
  \scalebox{0.7}{%
  \begin{minipage}{1\linewidth}
  \(%
  \begin{array}{r@{\;}r@{\;}l@{\quad}c@{\;}r@{\;}c@{\;}l}

    & (\asysrt_\Gn{W}, \emptylog) \noBTSwarmStep[{\lgtn{partReq}}!]& (\asysrt_1, \lgname{partReq}_1) \quad && \asysrt_1 &=& \asysrt_\Gn{W}{[i_{{\rolename{T}}_1} \myupd (\afish[{\rolename{T}}], \lgname{partReq}_1)]}\\
    & \noBTSwarmStep[\tau]& (\asysrt_2, \lgname{partReq}_1) \quad && \asysrt_2  &=& \asysrt_1{[i_{{\rolename{FL}}_1} \myupd (\afish[{\rolename{FL}}], \lgname{partReq}_1)]}\\
    & \noBTSwarmStep[{\lgtn{pos}}!]& (\asysrt_3, \lgname{partReq}_1\logcat\lgname{pos}_2) \quad && \asysrt_3 &=& \asysrt_2{[i_{\rolename{FL}} \myupd (\afish[{\rolename{FL}}], \lgname{partReq}_1\logcat\lgname{pos}_2)]}\\
    & \noBTSwarmStep[\tau]& (\asysrt_4, \lgname{partReq}_1\logcat\lgname{pos}_2)\quad && \asysrt_4 &=& \asysrt_3{[i_{{\rolename{T}}_1} \myupd (\afish[{\rolename{T}}], \lgname{partReq}_1\logcat\lgname{pos}_2)]}\\
    & \noBTSwarmStep[\tau]& (\asysrt_5, \lgname{partReq}_1\logcat\lgname{pos}_2)\quad && \asysrt_5 &=& \asysrt_4{[i_{{\rolename{T}}_2} \myupd (\afish[{\rolename{T}}], \lgname{partReq}_1\logcat\lgname{pos}_2)]}\\
    & \noBTSwarmStep[{\lgtn{partOK}}!]& (\asysrt_6, \lgname{partReq}_1\logcat\lgname{pos}_2\logcat\lgname{partOK}_3) \quad && \asysrt_6 &=& \asysrt_5{[i_{{\rolename{T}}_1} \myupd (\afish[{\rolename{T}}], \lgname{partReq}_1\logcat\lgname{pos}_2\logcat\lgname{partOK}_3)]}\\
    & \noBTSwarmStep[{\lgtn{partReq}}!]& (\asysrt_7, \lgname{partReq}_1\logcat\lgname{pos}_2\logcat\lgname{partOK}_3\logcat\lgname{partReq}_4) \quad && \asysrt_7 &=& \asysrt_6{[i_{{\rolename{T}}_1} \myupd (\afish[{\rolename{T}}], \lgname{partReq}_1\logcat\lgname{pos}_2\logcat\lgname{partOK}_3\logcat\lgname{partReq}_4)]}\\
    & \noBTSwarmStep[{\lgtn{partOK}}!]& (\asysrt_8, \lgname{partReq}_1\logcat\lgname{pos}_2\logcat\lgname{partOK}_3\logcat\lgname{partReq}_4\logcat\lgname{partOK}_5) \quad && \asysrt_8 &=& \asysrt_7{[i_{{\rolename{T}}_2} \myupd (\afish[{\rolename{T}}], \lgname{partReq}_1\logcat\lgname{pos}_2\logcat\lgname{partOK}_5)]}\\
    & \noBTSwarmStep[\tau]& (\asysrt_9, \lgname{partReq}_1\logcat\lgname{pos}_2\logcat\lgname{partOK}_3\logcat\lgname{partReq}_4\logcat\lgname{partOK}_5) \quad && \asysrt_9 &=& \asysrt_8{[i_{{\rolename{D}}} \myupd (\afish[{\rolename{D}}], \lgname{partReq}_1\logcat\lgname{pos}_2\logcat\lgname{partOK}_3\logcat\lgname{partReq}_4\logcat\lgname{partOK}_5)]}\\
    & \noBTSwarmStep[{\lgtn{closingTime}}!]& (\asysrt_{10}, \alog) \quad && \asysrt_{10} &=& \asysrt_9{[i_{{\rolename{D}}} \myupd (\afish[{\rolename{D}}], \alog)]}\\
    & \noBTSwarmStep[\tau]^*& (\asysrt_{11}, \alog) \quad && \asysrt_{11} &=& \asysrt_{10}{[i_{{\rolename{T}}_1} \myupd (\afish[{\rolename{T}}], \alog),i_{{\rolename{T}}_2} \myupd (\afish[{\rolename{T}}], \alog),i_{{\rolename{FL}}} \myupd (\afish[{\rolename{FL}}], \alog)]}\\
  \end{array}
  \)%
  \end{minipage}
}%
\ifbool{ecoop}{\smallskip}{\medskip}%

\noindent%
where the global log $\lg=\lgname{partReq}_1\logcat\lgname{pos}_2\logcat\lgname{partOK}_3 \logcat \lgname{partReq}_4 \logcat \lgname{partOK}_5 \logcat \lgname{closingTime}_6$
has propagated to every machine. %
(Recall that by \Cref{def:non-bt-swarm-semantics}, $\tau$-transitions
denote log propagation.)

Observe that in state $\S_5$, the machine with ID $i_{{\rolename{T}}_1}$ emits an event $\lgname{partOK}_3$ as a response to $\lgname{partReq}_1$;
however, the event does not immediately propagate to machine $i_{{\rolename{T}}_2}$, which (in state $\S_7$) responds to $\lgname{partReq}_1$ with another event $\lgname{partOK}_5$.
As the log $\alog$ propagates to $i_{{\rolename{T}}_2}$ (in state $\S_{11}$), the machine sees that its event $\lgname{partOK}_5$ is invalidated by the earlier $\lgname{partOK}_3$.

However, the machine $i_{\rolename{D}}$ (for the door) reaches a different conclusion: based on its local log in state $\S_9$, it interprets the event $\lgname{partOK}_5$ as a response to
$\lgname{partReq}_4$, when it was in fact emitted as a response to $\lgname{partReq}_1$. %
Consequently, the machines in $\S$ do not reach a consensus on which events are valid and conform to
an execution of $\Gn{Warehouse}$ --- despite having been projected from $\Gn{Warehouse}$ for $\subscription$. Specifically, in the final swarm state $\S_{11}$ reached after the propagation of the log $\lg$,
the two machines $\afish[{\rolename{T}}]$ (for the transport) with IDs $i_{{\rolename{T}}_1}$ and $i_{{\rolename{T}}_2}$ agree on the projection of the same state of the swarm protocol $\Gn{Warehouse}$
--- whereas $\afish[{\rolename{D}}]$ with ID $i_{\rolename{D}}$ is in a completely different state:

\ifbool{ecoop}{\smallskip}{\medskip}%
  \centerline{\(
      \stchange(\afish[{\rolename{T}}], \alog) \;=\; (\cstmcmd{FL}{pos}.\cstmcmd{T}{partOK}.\Gn{Warehouse})\projshort{\subscription}{\rolename{T}}%
      \quad\text{and}\quad%
      \stchange(\afish[{\rolename{D}}], \alog) \;=\; \zero
\)}%
\ifbool{ecoop}{\smallskip}{\medskip}%

\noindent%
The machines $\afish[{\rolename{T}}]$ expect the forklift to emit an event of type $\lgtn{pos}$, while $\afish[{\rolename{D}}]$ believes
the door has closed and the protocol has ended.
The swarm cannot recover from this conflict.\footnote{%
  \label{footnote:additional-subscriptions}%
  To avoid the issue in \Cref{ex:non-bt-swarm-warehouse} with the swarm semantics \Cref{def:non-bt-swarm-semantics} (from \cite{DBLP:conf/ecoop/KuhnMT23}),
  we should also subscribe role $\rolename{D}$ in $\subscription$ to events of type $\lgtn{pos}$. %
  The additional subscriptions needed by Def.~\ref{def:non-bt-swarm-semantics} would increase with swarm composition,
  leading to inefficient swarms: we measure these additional subscriptions in § 6.1.%
}%
\end{example}

\end{document}